\newcommand{\calC}{\mathcal{C}}
\newcommand{\calE}{\mathcal{E}}
\newcommand{\calP}{\mathcal{P}}
\newcommand{\calQ}{\mathcal{Q}}
\theoremstyle{definition}
\newtheorem{definition}{Definition}
\newtheorem{remark}{Remark}
\newtheorem{example}{Example}
\theoremstyle{theorem}
\newtheorem{lemma}{Lemma}
\newtheorem{theorem}{Theorem}
\newcommand{\dom}{\mathrm{dom}}	% domain
\newcommand{\nats}{\mathbb{N}}
\newcommand{\sop}{[}
\newcommand{\scl}{]}
\newcommand{\sel}[2]{#1 \backslash #2}
\newcommand{\unsubst}[2]{\sop \sel{#1}{#2} \scl}
\newcommand{\Var}{\mathrm{V}}
\newcommand{\impl}{\rightarrow}
\newcommand{\dual}[1]{\overline{#1}}
\newcommand{\Lor}{\bigvee}
\newcommand{\EV}{\mathrm{EV}}	% the eigenvariables
\newcommand{\Shallow}{\mathrm{Sh}}
\newcommand{\Deep}{\mathrm{Dp}}
\newcommand{\Seq}{\mathrm{Seq}}
\newcommand{\Exp}{\mathrm{Exp}}
\newcommand{\weight}{\mathrm{w}}
\newcommand{\merge}{\cup}
\newcommand{\omerge}{\sqcup}
\newcommand{\oMerge}{\bigsqcup}
\newcommand{\pred}{\mathrm{pred}}
\newcommand{\mergered}{\stackrel{\omerge}{\rightarrow}} % merge-reduction
\newcommand{\normf}[1]{#1\!\!\downarrow} % normal form under mergered
\newcommand{\LK}{\ensuremath{\mathbf{LK}}}
\newcommand{\LKE}{\ensuremath{\mathbf{LKE}}}
\newcommand{\lkforall}[3][]{\infer[\forall{#1}]{#2}{#3}}
\newcommand{\lkexists}[3][]{\infer[\exists{#1}]{#2}{#3}}
\newcommand{\lkand}[3][]{\infer[\land{#1}]{#2}{#3}}
\newcommand{\lkor}[3][]{\infer[\lor{#1}]{#2}{#3}}
\newcommand{\lkcut}[3][]{\infer[\mathrm{cut}{#1}]{#2}{#3}}
\newcommand{\rk}{\ensuremath{\mathrm{rk}}}   % rank
\newcommand{\cl}{\ensuremath{\mathrm{dl}}}   % dual nodes in cuts
\newcounter{tmprecount}
\newcommand{\thmrecount}[2]{\setcounter{tmprecount}{\value{theorem}}\setcounter{theorem}{#1}#2\setcounter{theorem}{\value{tmprecount}}} % used to fix Theorem numbering in appendix
\title{Expansion Trees with Cut}
\author{Stefan Hetzl and Daniel Weller}
\date{April 15, 2013}
\begin{document}
\maketitle

\begin{abstract}
Herbrand's theorem is one of the most fundamental insights in logic. From
the syntactic point of view it suggests a compact representation of proofs
in classical first- and higher-order logic by recording the information
which instances have been chosen for which quantifiers, known in the
literature as expansion trees.

Such a representation is inherently analytic and hence corresponds to
a cut-free sequent calculus proof. Recently several extensions of
such proof representations to proofs with cut have been proposed. These
extensions are based on graphical formalisms similar to proof nets
and are limited to prenex formulas.

In this paper we present a new approach that directly extends expansion trees
by cuts and covers also non-prenex formulas. We describe a cut-elimination procedure
for our expansion trees with cut that is based on the natural reduction steps.
We prove that it is weakly normalizing using methods from the $\varepsilon$-calculus.
\end{abstract}

\section{Introduction}

Herbrand's theorem~\cite{Herbrand30Recherches,Buss95Herbrand}, one of the
most fundamental insights of logic, characterizes the
validity of a formula in classical first-order logic by the existence of a propositional
tautology composed of instances of that formula.

From the syntactic point of view this theorem induces a way of describing proofs:
by recording which instances have been picked for which quantifiers we
obtain a description of a proof up to its propositional part, a part we
often want to abstract from. An example for a formalism that carries out
this abstraction are Herbrand proofs~\cite{Buss95Herbrand}.
This generalizes nicely to most classical systems with quantifiers,
for example to simple type theory as in the expansion tree proofs of~\cite{Miller87Compact}.
Such formalisms are compact and useful proof certificates in many situations; they
are for example produced naturally by methods of automated deduction such
as instantiation-based reasoning~\cite{Korovin09Instantiation}.

These formalisms consider only instances of the formula that has been proved
and hence are {\em analytic} proof formalisms (corresponding to cut-free proofs
in the sequent calculus). Considering an expansion tree to be a compact
representation of a proof, it is thus natural to ask about the possibility of extending this kind of representation
to {\em non-analytic} proofs (corresponding to proofs with cut in the sequent
calculus).

In addition to enlarging the scope of instance-based proof representations, the
addition of cuts to expansion trees also sheds more light on the computational
content of classical logic. This is a central topic of proof theory and has
therefore attracted considerable attention, see ~\cite{Parigot92LambdaMu,Danos97New,Curien00Duality},
\cite{Barbanera96Symmetric},
\cite{Urban00Classical,Urban01Strong},
\cite{Berger02Refined},
\cite{Kohlenbach08Applied}, or~\cite{Baaz00CutElimination}, for
different investigations in this direction and~\cite{Avigad10Computational}
for a survey covering classical arithmetic.

Two instance-based proof formalisms incorporating a notion of cut have recently been proposed: proof
forests~\cite{Heijltjes10Classical} and Herbrand nets~\cite{McKinley13Proof}.
While proof forests are motivated by the game semantics for classical arithmetic
of~\cite{Coquand95Semantics}, Herbrand nets are based on methods for
proof nets~\cite{Girard87Linear}. These two formalisms share a number of
properties: both of them work in a graphical notation for proofs,
both work on prenex formulas only, for both weak but no strong normalization results are
known.

In this paper we present a new approach which works directly in the
formalism of expansion tree proofs and hence naturally extends the existing
literature in this tradition. As~\cite{Heijltjes10Classical,McKinley13Proof} we define
a cut-elimination procedure and prove it weakly normalizing but in contrast
to~\cite{Heijltjes10Classical,McKinley13Proof} we also treat non-prenex formulas,
therefore avoiding the distortion of the intuitive meaning of a formula
by prenexification.

We describe expansion trees with cuts for non-prenex end-sequents and cuts,
including their correctness criterion and how to translate from and to
sequent calculus. We describe natural cut-reduction steps and show that
they are weakly normalizing. A technical key for proving weak normalization
is to use methods of Hilbert's $\varepsilon$-calculus
which is a formalism for representing non-analytic first-order proofs
modulo propositional logic. The reader is invited to compare our treatment,
in particular the termination measure, with
the proof of the first $\varepsilon$-theorem in~\cite{Hilbert39Grundlagen2}, see~\cite{Moser06Epsilon}
for an exposition in English.

\section{Expansion Trees}

In this whole paper we work with classical first-order logic. Formulas and
terms are defined as usual. In order to simplify the exposition, we restrict
our attention to formulas in negation normal form (NNF). Mutatis mutandis all
notions and results of this paper generalize to arbitrary formulas. We write
$\dual{A}$ for the de Morgan dual of a formula $A$. A {\em literal} is an
atom $P(t_1,\ldots,t_n)$ or a negated atom $\dual{P}(t_1,\ldots,t_n)$.
\begin{definition}\label{def:exptree}
Expansion trees and a function $\Shallow(\cdot)$ (for {\em shallow}) that maps
an expansion tree to a formula are defined inductively as follows:
\begin{enumerate}
\item A literal $L$ is an expansion tree with $\Shallow(L)=L$.
\item If $E_1$ and $E_2$ are expansion trees and $\circ \in \{ \land,\lor \}$, then
$E_1\circ E_2$ is an expansion tree with $\Shallow(E_1\circ E_2) = \Shallow(E_1)\circ \Shallow(E_2)$.
\item If $\{ t_1,\ldots,t_n \}$ is a set of terms and $E_1,\ldots,E_n$ are
expansion trees with $\Shallow(E_i) = A\unsubst{x}{t_i}$
for $i=1,\ldots,n$, then
$E = \exists x\, A +^{t_1} E_1 \cdots +^{t_n} E_n$ is an expansion tree with
$\Shallow(E) = \exists x\, A$.
\item If $E_0$ is an expansion tree with $\Shallow(E_0)=A\unsubst{x}{y}$, then
$E = \forall x\, A +^y E_0$ is an expansion tree with $\Shallow(E) = \forall x\, A$.
\end{enumerate}
\end{definition}
The $+^{t_i}$ are called {\em $\exists$-expansions}
and the $+^\alpha$ {\em $\forall$-expansions},
and both $\forall$- and $\exists$-expansions
are called {\em expansions}.
The variable $y$ of a $\forall$-expansion $+^y$ is called
{\em eigenvariable} of this expansion. 
%and we define $\Shallow(+^{t_i})=\Shallow(E)$.
We say that $+^{t_i}$ {\em dominates} all the expansions
in $E_i$. Similarly, $+^\alpha$ {\em dominates} all the expansions
in $E_0$. 
\begin{definition}
We define the function $\Deep(\cdot)$ (for {\em deep}) that maps an expansion
tree to a formula as follows:
\begin{align*}
\Deep(L) &= L\ \mbox{for a literal}\ L,\\
\Deep(E_1\circ E_2) &= \Deep(E_1)\circ \Deep(E_2)\ \mbox{for $\circ\in\{ \land, \lor \}$},\\
\Deep(\exists x\, A +^{t_1} E_1 \cdots +^{t_n} E_n) &= \Lor_{i=1}^n \Deep(E_i),\ \mbox{and}\\
\Deep(\forall x\, A +^{y} E_0) &= \Deep(E_0).
\end{align*}
\end{definition}
We also say that $E$ is an {\em expansion tree of $\Shallow(E)$}.
\begin{definition}
A {\em cut} is a set $C = \{E_1, E_2\}$ of two expansion trees s.t.\ 
$\Shallow(E_1) = \dual{\Shallow(E_2)}$. 
\end{definition}
A formula is called positive if its top connective is $\lor$ or $\exists$ or
a positive literal. An expansion tree $E$ is called positive if $\Shallow(E)$ is
positive. It will sometimes be useful to consider a cut as an ordered pair: to that aim
we will write a cut as $C = (E_1,E_2)$ with parentheses instead of curly braces
with the convention that $E_1$ is the positive expansion tree. For a cut
$C = (E_1,E_2)$, we define $\Shallow(C) = \Shallow(E_1)$ which is also called
{\em cut-formula} of $C$. We define $\Deep(C) = \Deep(E_1) \land \Deep(E_2)$
\begin{definition}\label{def:exp_preproof}
Let $\calC$ be a set of cuts with pairwise different cut-formulas and let
$\calE$ be a set of expansion trees of pairwise different formulas. Then
$\calP = \calC, \calE$ is called {\em expansion pre-proof} if each two $\forall$-expansions
in $\calP$ have different eigenvariables ({\em regularity}), and if $\Shallow(\calP)$
does not contain free variables.
\end{definition}
For an expansion pre-proof $\calP = \calC,\calE$ we define $\Shallow(\calP) = \Shallow(\calE)$, which
corresponds to the end-sequent of a sequent calculus proof, and
$\Deep(\calP) = \Deep(\calE),\Deep(\calC)$ (which is a sequent of quantifier-free
formulas).
%Note that if $\calP$ is an expansion pre-proof, then $\Shallow(\calP)$ does not contain free
%variables.
%The requirements imposed on expansion pre-proofs are, in a sense, without loss of generality,
%which will be formally expressed by a completeness result. On the other hand, these
%requirements allow us to define the following notions:
%We define some further notions to deal with variables in expansion pre-proofs:
For an eigenvariable $\alpha$ in $\calP$,
define $q(\alpha)$ to be the $\forall$-expansion whose eigenvariable it is.
%We denote the inverse of $q$, mapping $\forall$-expansions to their eigenvariables,
%by $v$. 
%
\begin{example}\label{ex.exp_preproof}
Consider the straightforward proof of $P(a)\impl \exists z\, Q(z)$ from
$\exists y\forall x\, (P(x) \impl Q(f(y)))$ via a cut on
$\forall x\exists y\, (P(x) \impl Q(f(y)))$. In negation normal formal
these formulas are $\dual{P}(a)\lor \exists z\, Q(z)$,
$\exists y\forall x\, (\dual{P}(x) \lor Q(f(y)))$,
and $\forall x\exists y\, (\dual{P}(x)\lor Q(f(y)))$. The proof will
be represented by the expansion pre-proof $\calP = \{ E^+, E^- \}, E_1, E_2$ where
\begin{align*}
E^+ = &\ \exists x\forall y\, ( P(x)\land \dual{Q}(f(y))) +^a (\ \forall y\, ( P(a) \land \dual{Q}(f(y)))
       +^\gamma P(a)\land\dual{Q}(f(\gamma))
        \ ) \\
E^- = &\ \forall x\exists y\, ( \dual{P}(x)\lor Q(f(y)) ) +^\beta (\ \exists y\, ( \dual{P}(\beta) \lor Q(f(y))) +^\alpha ( \dual{P}(\beta) \lor Q(f(\alpha)))
\ )\\
E_1 = &\ \forall y\exists x\, (P(x) \land \dual{Q}(f(y))) +^\alpha (\ \exists x\, (P(x)\land \dual{Q}(f(\alpha))) +^\beta P(\beta)\land\dual{Q}(f(\alpha)) 
\ )\\
E_2 = &\ \dual{P}(a) \lor ( \exists z\, Q(z) +^{f(\gamma)} Q(f(\gamma)) )
\end{align*}
We have $\Shallow(\calP) = \Shallow(E_1,E_2) =
\forall y\exists x\, (P(x) \land \dual{Q}(f(y))), \dual{P}(a) \lor \exists z\, Q(z)$
and
\begin{align*}
\Deep(\calP) =&\ \Deep(E^+)\land \Deep(E^-), \Deep(E_1), \Deep(E_2)\\
 =&\ (P(a) \land \dual{Q}(f(\gamma))) \land ( \dual{P}(\beta) \lor Q(f(\alpha))), P(\beta)\land\dual{Q}(f(\alpha)), \dual{P}(a)\lor Q(f(\gamma))
\end{align*}
\end{example}
As in~\cite{Heijltjes10Classical,McKinley13Proof} it would also be possible in our
setting to use a graphical notation.
However, we refrain from doing so in order to avoid
the parallel use of two different notations: a graphical for examples and
a more abstract notation for carrying out proofs.
% %
% \begin{figure*}
% \xy
% % set formulas
% (40,100)="top"*{\exists x. (\neg d(x) \lor \forall y. d(y))};
% (20,90)="l"*{\neg d(c)\lor \forall y. d(y)};
% (60,90)="r"*{\neg d(u)\lor \forall y. d(y)};
% (10,80)="ll"*{\neg d(c)};
% (30,80)="lr"*{\forall y. d(y)};
% (30,70)="lrd"*{d(u)};
% (50,80)="rl"*{\neg d(u)};
% (70,80)="rr"*{\forall y. d(y)};
% (70,70)="rrd"*{d(v)};
% % set lines
% {\ar @{-} _{c} "top"+(0,-3); "l"+(0,3)};
% {\ar @{-} ^{u} "top"+(0,-3); "r"+(0,3)};
% {\ar @{-} "l"+(0,-3); "ll"+(0,3)};
% {\ar @{-} "l"+(0,-3); "lr"+(0,3)};
% {\ar @{-} ^{u} "lr"+(0,-3); "lrd"+(0,3)};
% {\ar @{-} "r"+(0,-3); "rl"+(0,3)};
% {\ar @{-} "r"+(0,-3); "rr"+(0,3)};
% {\ar @{-} ^{v} "rr"+(0,-3); "rrd"+(0,3)};
% \endxy
% \caption{The expansion pre-proof of Example~\ref{ex.exp_preproof} TODO is not a good
% example in this context because every $\exists$-quantifier has only one instance....}
% \label{fig.exptree_graphical}
% \end{figure*}

Let us now move on to isolating the proofs in the set of pre-proofs.
The correctness criterion of expansion tree proofs~\cite{Miller87Compact}, but
also those of proof forests~\cite{Heijltjes10Classical} and Herbrand nets~\cite{McKinley13Proof}, has
two (main) components: 1.~a tautology-condition on one or more quantifier-free
formulas and 2.~an acyclicity condition on one or more orderings. While the tautology
condition of~\cite{Miller87Compact} generalizes to the setting of cuts in a straightforward
way, the acyclicity condition needs a bit more work: in the setting of cut-free
expansion trees it is enough to require the acyclicity of an order on the $\exists$-expansions.
%When reading back an expansion proof into a proof in the sequent calculus,
%each $\exists$-expansion corresponds to an existential quantifier inference (along
%a branch, but not globally, see~\cite{Baaz12Complexity}). 
%If $+^t < +^s$ then
%the existential inference corresponding to $t$ must be below the existential
%inference corresponding to $s$ in all sequentializations of the expansion proof.
%In this sense, the order relation on expansion terms give the {\em necessary}
%constraints a sequentialization has to fulfill. Consequently its acyclicity
%ensures the finite depth of the proof.
%
%In the setting of cut-free expansion proofs it is enough to speak about the order of
%$\exists$-expansions.
In our setting that includes cuts we also have to speak about the order of
cuts (w.r.t.\ each other and w.r.t.\ $\exists$-expansions).
%The cut being
%a binary rule, the relation of ``being above'' is sometimes
%not informative enough\meta{Stefan}{TODO: wann? Bsp geben}, instead we will
%speak about ``being above on the positive side'' and ``being above
%on the negative side'' where the positive / negative side of a
%cut is the side that contains the positive / negative copy of the cut formula.
To simplify our treatment of this order we also include $\forall$-expansions.
Together this leads to the following
%necessary 
inference ordering constraints in expansion proofs.
\begin{definition}
Let $\calP = \calC, \calE$ be an expansion pre-proof. We will define the
{\em dependency relation} $<_\calP$, which is a binary relation on the set
of expansions and cuts in $\calP$. First,
we define the binary relation $<^0_\calP$ (writing $<^0$ if $\calP$ is clear
from the context) as the least relation satisfying ($C$ being a cut in $\calP$):
%
%This is a quantifier free
%theory working on statements of the form $x < y$, $c <^+ x$, and $c<^- x$
%where $x,y$ are $\exists$-expansions (i.e. plus-nodes) or $\forall$-expansions
%or cuts of $\calP$ and $c$ are cuts of $\calP$. The axioms of $\calO_\calP$
%are (with $c=(E^+,E^-)$ being a cut):
%
%
\begin{enumerate}
\item $v <^0 w$ if $w$ is an $\exists$-expansion in $\calP$ whose term contains the eigenvariable of the $\forall$-expansion $v$
\item $v <^0 w$ if $v$ is an expansion in $\calP$ that dominates the expansion $w$
\item $C <^0 v$ if $v$ is an expansion in $C$
\item $v <^0 C$ if $\Shallow(C)$ contains the eigenvariable of the $\forall$-expansion $v$
\end{enumerate}
$<_\calP$ is then defined to be the transitive closure of $<^0$. Again,
we write $<$ for $<_\calP$ if $\calP$ is 
clear from the context.
\end{definition}
%
%If $x,y$ are nodes or cuts in an expansion tree $\calP$, we will often simply
%write ``$x < y$'' instead of ``there is a derivation of $x < y$ from the
%axioms induced by $\calP$''.
%
\begin{definition}\label{def:exp_proof}
An {\em expansion proof} is an expansion pre-proof $\calP$ that satisfies the
following conditions:
\begin{enumerate}
\item $<_\calP$ is acyclic (i.e.~$x<_\calP x$ holds for no $x$),
\item $\Deep(\calP)$ is a tautology.
\end{enumerate}
\end{definition}
As there is no cycle containing cuts only, $<_\calP$ is cyclic iff $w<_\calP w$ for an
expansion $w$, and we will make use of this property without further mention.
%
%In particular, an expansion proof being v-consistent does not contain a
%bridge in the sense of~\cite{Heijltjes10Classical}, i.e.\ a dependency from
%one side of a cut to its other side.
%
\begin{example}
Coming back to the expansion pre-proof $\calP$ of Example~\ref{ex.exp_preproof}, note
that $\Deep(\calP) = (P(a) \land \dual{Q}(f(\gamma))) \land ( \dual{P}(\beta)
\lor Q(f(\alpha))), P(\beta)\land\dual{Q}(f(\alpha)), \dual{P}(a)\lor Q(f(\gamma))$
is a tautology (of the form $A\land B, \dual{B}, \dual{A}$). Let us now consider
the theory induced by $\calP$: in $\calP$ each term belongs to at most one $\exists$-
and at most one $\forall$-expansion
In such a situation we can uniformly notate all expansions as $Qt$ for some
term $t$ and $Q\in\{ \exists, \forall \}$. The expansions of $\calP$ are
then written as $\exists a$, $\forall \gamma$, $\forall \beta$, $\exists \alpha$,
$\forall \alpha$, $\exists \beta$, and $\exists f(\gamma)$. Furthermore,
$\calP$ contains a single cut $C$. Then $<^0$ is exactly:
\begin{enumerate}
 \item $\forall \gamma <^0 \exists f(\gamma)$, $\forall \beta <^0 \exists \beta$,
$\forall \alpha <^0 \exists \alpha$,
\item $\exists a <^0 \forall \gamma$, $\forall \beta <^0 \exists \alpha$, $\forall \alpha <^0 \exists \beta$,
\item $C <^0 \exists a$, $C <^0 \forall \gamma$, $C <^0 \forall \beta$, $C <^0 \exists \alpha$,
\item there is no $v<^0 C$ as the cut formula of $C$ is variable-free.
\end{enumerate}
As the reader is invited to verify, $<$ is acyclic.
\end{example}

\section{Basic Operations on Expansion Proofs}

Our cut-elimination algorithm, described in Section~\ref{sec:cutelim}, will
be based on natural rewrite rules of expansion proofs. In order to fully
specify those, we first need to clarify some basic operations on expansion
proofs.

\subsection{Expansion Trees with Merges}\label{subsec:merge}
One on these basic operations is the {\em merge} of expansion pre-proofs. If we
have two expansion pre-proofs $E_1$ and $E_2$ with $\Shallow(E_1) = \Shallow(E_2)$
we want to define a new expansion pre-proof $E_1\merge E_2$ which merges
$E_1$ and $E_2$. For example
\[
(\exists x\, P(x) +^{a} P(a)) \merge (\exists x\, P(x) +^{b} P(b)) =
\exists x\, P(x) +^{a} P(a) +^{b} P(b).
\]
In general however, this operation
can be considerably more complicated.

%, and one of them --- the most important
%--- requires to merge expansion trees.
% The merge operation is not only useful for cut-elimination, but it is also
% redundancy eliminating, as it may identify some subtrees of the involved expansion proofs. 
% 
%

%merge operation, care needs to be taken concerning eigenvariable renaming (this
%is incorrect in~\cite{Chaudhuri12Systematic} and avoided in~\cite{Baaz12Complexity})

\begin{example}\label{ex.exptree_subst_merge}
Consider the following merge operation in an expansion pre-proof:
\begin{gather*}
( \forall x\, A +^u E_1 ) \merge ( \forall x\, A +^v E_2 ) , \exists x\, B +^{f(u)} F_1 +^{f(v)} F_2.
\end{gather*}
When propagating the merge node into the subtrees of the two trees being
merged, the two eigenvariables $u$ and $v$ will need to be unified, say by globally
applying the substitution $\unsubst{v}{u}$.
As eigenvariables are global, the result of this unification is that the two
$\exists$-expansions $+^{f(u)}$ and $+^{f(v)}$ in the expansion tree of $\exists x\, B$
will also be identified, violating the set-nature of the expansions of an
existential formula.
Globally applying the substitution $\unsubst{v}{u}$ therefore requires
merging the two trees $F_1\unsubst{v}{u}$ and $F_2\unsubst{v}{u}$.
\end{example}
We hence see that carrying out a merge operation does not only induce other
merge operations on subtrees but also substitutions and vice versa: carrying
out a substitution may induce additional merge operations. In order to give
a clear formal definition of these operations we will consider
{\em expansion pre-proofs with merges}: a data structure of expansion pre-proofs
which, in addition, contains an object-level merge-operation $\omerge$.
\begin{definition}
An \emph{expansion tree with merges} is defined by the same inductive
definition as expansion trees in Definition~\ref{def:exptree} to which we add the
following clause:
\begin{enumerate}\addtocounter{enumi}{4}
\item If $E_1$ and $E_2$ are expansion trees with merges s.t.\ $\Shallow(E_1) = \Shallow(E_2)$,
then $E_1\omerge E_2$ is an expansion tree with merges and
$\Shallow(E_1 \omerge E_2) = \Shallow(E_1) = \Shallow(E_2)$.
\end{enumerate}
We also extend $\Deep(\cdot)$ to expansion trees with merges by
setting $\Deep(E_1 \omerge E_2) = \Deep(E_1) \lor \Deep(E_2)$.
{\em Expansion (pre-)proofs with merges} are defined analogously
to expansion (pre-)proofs (without merge).
\end{definition}

\subsection{Substitution}
%
%We now give a definition of substitution on expansion trees with merges. 
We now develop the definition of substitution via expansion trees with merges
indicated in the beginning of this section.
In the following, for a formula or term $F$ we denote by $\Var(F)$ the
set of variables free in $F$.
To make
sure that the application of a substitution transforms expansion trees (with
merges) into expansion trees (with merges) we have to restrict the set of permitted
substitutions: a substitution $\sigma$ can only be applied to an expansion tree
(with merges) $E$ if it is a renaming on the eigenvariables of $E$, more
precisely: if $\alpha\in\EV(E)$ implies that $\alpha\sigma$ is a variable.
%\footnote{TODO: what about regularity ? - where should it be ensured? here by
%additional condition?}. 
Otherwise it would destroy the $\forall$-expansions.
Furthermore, to ensure no cycles are introduced in the dependency relation,
we have to impose
an additional restriction on the eigenvariables introduced by $\sigma$:
$\beta\in\Var(\alpha\sigma)$ implies that for all $\exists$-expansions $w$
in $\calP$ with an expansion term $t$ such that $\alpha\in \Var(t)$, we
have $w\not<q(\beta)$. 
A substitution fulfilling these conditions will be called {\em admissible
for $\calP$}.

Later we will give an operational meaning to the merge by means of a reduction system. 
This will allow us to define a notion of substitution for expansion tree proofs
{\em without merge}.
%Towards the discussion of weak normalization of cut-elimination
%in Section~\ref{sec:weak_normalization}, we will introduce
%natural notions of {\em predecessors} $\pred_s(w), \pred_\omerge(w)$ for quantifier expansions $w$.
%
\begin{definition}\label{def:subst}
Let $E$ be an expansion tree with merges and let $\sigma$ be a substitution.
\begin{enumerate}
\item For a literal $L$, $L\sigma$ is defined as for formulas.
\item $(E_1 \circ E_2)\sigma = E_1\sigma \circ E_2\sigma$ for $\circ \in \{ \land, \lor \}$.
\item Let $E = \exists x\, A +^{t_1} E_1 \cdots +^{t_n} E_n$, let $\{ s_1,\ldots, s_k \}$
be $\{ t_1\sigma,\ldots, t_n\sigma\}$ and define
\[
  E\sigma =
\exists x\, A\sigma +^{s_1} \oMerge_{\scriptsize\begin{array}{c}i\in\{1,\ldots,n\}\\ t_i\sigma = s_1\end{array}}
E_i\sigma \cdots +^{s_k} \oMerge_{\scriptsize\begin{array}{c}i\in\{1,\ldots,n\}\\ t_i\sigma = s_k\end{array}}
E_i\sigma.\]
\item $(\forall x\, A +^\alpha E)\sigma = \forall x\, A\sigma +^{\alpha\sigma} E\sigma$.
\item $(E_1 \omerge E_2)\sigma = E_1\sigma \omerge E_2\sigma$.
\end{enumerate}
For an expansion pre-proof $\calP = C_1,\ldots, C_k, E_1,\ldots, E_n$ and
a substitution $\sigma$ s.t.\ $\alpha\in\EV(\calP)$ implies that $\alpha\sigma$
is a variable we define $\calP\sigma = C_1\sigma,\ldots, C_k\sigma, E_1\sigma,\ldots, E_n\sigma$.
\end{definition}
%
%We also define, for every quantifier node $n$ in $\calP\sigma$, a predecessor quantifier 
%node $\pred_s(n)$ from $\calP$.
%Since the formal definition is tedious, we will define $\pred_s(n)$ informally: Note that
%$\calP\sigma$ contains every node of $\calP$ at most once (modulo $\sigma$).
To every expansion $w$ in $\calP\sigma$ we can naturally associate a non-empty set of predecessors w.r.t.\ substitution 
$\pred_s(w)$ in $\calP$ (note that $\pred_s(w)$ is always a singleton, except
in case 3 of the above definition).
%and we set $\pred_s(n)=n'$. 
%
% Note that $\pred_s$ is surjective on the set of nodes from $\calP$.
As usual in the term
rewriting literature, $\calP[]$ denotes an expansion pre-proof context,
i.e.\ an expansion pre-proof with a hole and $\calP[E]$ denotes the
expansion pre-proof obtained from filling this hole with the expansion
tree $E$.
\begin{lemma}\label{lem:subst_proof}
Let $\calP=\calP'[E]$ be an expansion proof with merges and $\sigma$
a substitution admissible for $\calP$. Then $\calQ=\calP'[E\sigma]$ 
is an expansion proof with merges, and $\Shallow(\calP)=\Shallow(\calQ)$.
\end{lemma}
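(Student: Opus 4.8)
The plan is to verify, for $\calQ = \calP'[E\sigma]$, the three defining conditions of an expansion proof with merges (Definition~\ref{def:exp_proof} together with the pre-proof conditions of Definition~\ref{def:exp_preproof}), and separately the shallow-preservation claim. The backbone is two homomorphism properties of substitution, both proved by a routine structural induction on $E$ following the clauses of Definition~\ref{def:subst}: first $\Shallow(E\sigma) = \Shallow(E)\sigma$, and second $\Deep(E\sigma)$ is propositionally equivalent to $\Deep(E)\sigma$. In the second, the only non-trivial clause is the $\exists$-clause, where the merges introduced by $\sigma$ regroup the disjuncts $\Deep(E_i)\sigma$ according to the collapsed terms $t_i\sigma$; since this only reorders and duplicates disjuncts, the result is equivalent to $(\Lor_{i=1}^n \Deep(E_i))\sigma = \Deep(E)\sigma$, while the merge clause is immediate from $\Deep(E_1 \omerge E_2) = \Deep(E_1)\lor \Deep(E_2)$.

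For shallow preservation I would use that $\Shallow$ of a tree is insensitive to the subtrees occurring below a quantifier: if the hole of $\calP'[\cdot]$ lies under some expansion, then $\Shallow(\calP) = \Shallow(\calQ)$ holds trivially because neither side depends on the replaced subtree. Otherwise $\Shallow(E)$ occurs as a subformula of the variable-free formula $\Shallow(\calP)$ at a position above all quantifiers, so $\Shallow(E)$ is itself variable-free and is fixed by $\sigma$; combined with $\Shallow(E\sigma) = \Shallow(E)\sigma$ this again yields $\Shallow(\calP) = \Shallow(\calQ)$. That $\calQ$ is a well-formed expansion tree with merges at the hole follows by the same induction, using the first admissibility clause (every $\alpha\sigma$ with $\alpha \in \EV(E)$ is a variable, so the $\forall$-clause produces a legal eigenvariable) and the shallow computation above (so the merges created in the $\exists$-clause join trees of equal shallow formula). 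Regularity is preserved because $\sigma$ restricted to $\EV(E)$ is a renaming into suitably fresh variables that stay distinct from the eigenvariables occurring in $\calP'$, and the absence of free variables in $\Shallow(\calQ)$ is exactly the shallow-preservation just established.

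For the tautology condition I would observe that $\Deep$ is compositional in subtrees, so there is a positive propositional context $D[\cdot]$ with $\Deep(\calP) = D[\Deep(E)]$ and $\Deep(\calQ) = D[\Deep(E\sigma)]$. Using $\Deep(E\sigma) \equiv \Deep(E)\sigma$ and the fact that $\sigma$ leaves the surrounding context invariant (its active variables being confined, under the admissibility setup, to the occurrences governed by $E$), $\Deep(\calQ)$ is propositionally equivalent to $D[\Deep(E)\sigma] = (D[\Deep(E)])\sigma = \Deep(\calP)\sigma$, a uniform substitution instance of the tautology $\Deep(\calP)$ and hence itself a tautology.

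The main obstacle is the acyclicity of $<_\calQ$, and it is precisely here that the second admissibility clause is needed. Relative to $<_\calP$, the substitution can only relabel nodes (via the predecessor map $\pred_s$) and alter clauses 1 and 4 of $<^0$, since clauses 2 and 3 are structural and stable under $\sigma$ and the induced merges. Removing edges cannot create cycles, so the danger is a new edge $q(\beta) <^0 w$ arising when $\sigma$ introduces a substituted variable $\beta \in \Var(\alpha\sigma)$ into the term of an $\exists$-expansion $w$ whose old term contained $\alpha$ (with an analogous treatment of clause 4 for cuts). The admissibility condition states exactly that in this situation $w \not<_\calP q(\beta)$, which is what is required to rule out a cycle through the new edge. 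I would make this precise by showing that every $<_\calQ$-cycle would project, along $\pred_s$, either to a $<_\calP$-cycle or to a use of some new edge $q(\beta) <^0 w$ together with a forbidden path $w <_\calP q(\beta)$, contradicting the acyclicity of $<_\calP$ in either case. This step is the real content of the lemma; the remaining bookkeeping — in particular tracking how merges identify expansions and how this interacts with $\pred_s$ and with the global scope of eigenvariables — is where most of the care must go.
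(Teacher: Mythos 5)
Your proposal is correct and follows essentially the same route as the paper's: your two homomorphism properties are exactly the paper's auxiliary lemma ($\Shallow(\calP\sigma)=\Shallow(\calP)\sigma$ and $\Deep(\calP\sigma)$ equivalent to $\Deep(\calP)\sigma$, proved by the same structural induction), shallow preservation and the tautology condition are derived from it just as in the paper (via variable-freeness of $\Shallow(\calP)$ and closure of tautologies under substitution), and your acyclicity argument---projecting a $<_\calQ$-cycle along $\pred_s$ and invoking the second admissibility clause to exclude the new edges created by $\beta\in\Var(\alpha\sigma)$---is precisely the paper's. One cosmetic slip: in the new-edge case the projected path $w<_\calP q(\beta)$ contradicts the admissibility of $\sigma$ rather than the acyclicity of $<_\calP$, as your own preceding sentence in fact states correctly.
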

\begin{proof}
The existence of a cycle in $\calQ$ implies that of one in $\calP$,
see Appendix for details.
\end{proof}

\subsection{Merge}
As we have seen in Example~\ref{ex.exptree_subst_merge}, carrying out a merge
operation may require to identify two eigenvariables globally, i.e.\ on the
level of the expansion pre-proof. The object-level merge operations are hence
executed by the following reduction system which, in addition to local
term rewriting, includes global variable renaming. 
\begin{definition}\label{def:omerge_reduction}
We define a reduction system $\stackrel{\omerge}{\mapsto}$
on expansion pre-proofs with merges.
%and simultaneously associate\footnote{TODO: wollen
%wir pred wirklich so genau definieren?},
%for $\calP \stackrel{\omerge}{\mapsto} \calP'$,
%to every node $n$ in $\calP'$ a non-empty set $\pred_\omerge(n)$
%of nodes from $\calP$.
\begin{enumerate}
\item $\calP[L \omerge L] \stackrel{\omerge}{\mapsto} \calP[L]$ for a literal $L$. 
  % For nodes
  % $n$ occuring in the indicated $L$ in $\calP[L]$, $\pred_\omerge(n)=\{n_1,n_2\}$ where $n_1,n_2$
  % are the two unique nodes corresponding to $n$ in $L \omerge L$. For all other $n$,
  % $\pred_\omerge(n)=\{n'\}$, where $n'$ is the unique node corresponding to $n$ in $\calP[]$.

\item $\calP[(E'_1\circ E''_1)\omerge (E'_2\circ E''_2)] \stackrel{\omerge}{\mapsto}
\calP[(E'_1\omerge E'_2)\circ(E''_1\omerge E''_2)]$ for $\circ \in \{ \land, \lor \}$.
%We set $\pred_\omerge(n)=\{n'\}$ analogously to the previous case.

\item $\calP[(\forall x\, A +^{\alpha_1} E_1)\omerge (\forall x\, A +^{\alpha_2} E_2)]
\stackrel{\omerge}{\mapsto} \calP[\forall x\, A +^{\alpha_1} (E_1 \omerge E_2)]\unsubst{\alpha_2}{\alpha_1}$.
%For every node $n$ in $\calP[\forall x\, A +^{\alpha_1} (E_1 \omerge E_2)]\unsubst{\alpha_2}{\alpha_1}$,
%$\pred_s(n)$ is a node in $\calP[\forall x\, A +^{\alpha_1} (E_1 \omerge E_2)]$. We set
%$\pred_\omerge(n)=\{n'\}$, where $n'$ is the node corresponding to $\pred_s(n)$ in 
%$\calP[(\forall x\, A +^{\alpha_1} E_1)\omerge (\forall x\, A +^{\alpha_2} E_2)]$,
%except if $\pred_s(n)$ is $+^{\alpha_1}$, then we take $\pred_\omerge(n)=\{+^{\alpha_1},+^{\alpha_2}\}$.

\item If $E_1 = \exists x\, A +^{r_1} E_{1,1} \ldots +^{r_k} E_{1,k} +^{s_1}
F_1\ldots +^{s_l} F_l$ and
$E_2 = \exists x\, A +^{r_1} E_{2,1}\ldots +^{r_k} E_{2,k} +^{t_1} G_1\ldots
+^{t_m} G_m$ where $\{s_1,\ldots,s_l\}\cap\{t_1,\ldots,t_m\}=\emptyset$, then
\[
 \calP[E_1\omerge E_2]\stackrel{\omerge}{\mapsto}
 \calP[\exists x\, A +^{r_1} (E_{1,1}\omerge E_{2,1}) \ldots +^{r_k} (E_{1,k}\omerge E_{2,k})
 +^{s_1} F_1\ldots +^{s_l} F_l
 +^{t_1} G_1\ldots +^{t_m} G_m]
\]
%$\pred_\omerge(n)$ is defined analogously to case 3.
\end{enumerate}
Write $\mergered$ for the reflexive and transitive closure of $\stackrel{\omerge}{\mapsto}$.
\end{definition}
As with substitution, for $\calP \stackrel{\omerge}{\mapsto} \calP'$
we associate to every expansion $n$ in $\calP'$ a non-empty set $\pred^0_\omerge(n)$
of predecessor expansions from $\calP$ in the natural way, noting that
$\pred^0_\omerge(n)$ is a singleton in cases $1,2$ of the definition,
and contains at most 2 elements in cases $3,4$.
We extend $\pred^0_\omerge$ to $\mergered$ by denoting the 
reflexive and transitive closure of $\pred^0_\omerge$ by
$\pred_\omerge$.
%and simultaneously associate\footnote{TODO: wollen
%wir pred wirklich so genau definieren?},
%for $\calP \stackrel{\omerge}{\mapsto} \calP'$,
%to every node $n$ in $\calP'$ a non-empty set $\pred_\omerge(n)$
%of nodes from $\calP$.

%
%If $\calP\stackrel{\omerge}{\mapsto}\calP'$
%then we associate to every quantifier node $n$ in $\calP'$ its predecessors $\pred_\omerge(n)$, in this case
%a set of quantifier nodes from $\calP$. Again we make the definition informally: in case 
%
%
%COMMENT (DANIEL): Maybe the definition of $\pred_\omerge(n)$ should also be made informally. Also,
%maybe all $\pred$-related definitions should be made in the sections on cut-elimination.
\begin{lemma}\label{lem:mergered_props}
The relation $\mergered$ is confluent and strongly
normalizing. Its normal forms have no merge nodes.
\end{lemma}
\begin{proof}
See Appendix.
\end{proof}
By $\normf{\calP}$ we denote the normal form of $\calP$ under $\mergered$.
%
%
%We extend
%the predecessor notion to reduction sequences: given a sequence
%$\calP=\calP_0 \stackrel{\omerge}{\mapsto}\calP_1\cdots\stackrel{\omerge}{\mapsto}\calP_k=\normf{\calP}$,
%we define, for $0< i < k$ and nodes $n$ in $\normf{\calP}$, 
%$\pred_i(n)$ inductively: $\pred_k(n)=\{n\}$ and $\pred_{i-1}(n)=\bigcup_{n'\in\pred_{i}(n)}\pred_\omerge(n')$.
%We set $\pred(n)=\pred_0(n)$.
%
%
%
We now use the above reduction system on object-level merge nodes
for defining the actual merge operation on expansion trees without merge nodes.
\begin{definition}
Let $E_1,E_2$ be expansion trees with $\Shallow(E_1) = \Shallow(E_2)$, then
$E_1 \merge E_2$ is defined as $\normf{(E_1 \omerge E_2)}$.
\end{definition}
The merge operation is extended to expansion pre-proofs in the natural way:
expansion trees and cuts with the same shallow formula are merged, the others
are combined by set-theoretic union, where merging of a cut is defined as follows:
%
%\begin{definition}
for cuts $C_1 = ( E^+_1, E^-_1 )$ and $C_2 = ( E^+_2, E^-_2 )$ with
$\Shallow(C_1) = \Shallow(C_2)$ we define $C_1 \merge C_2$ as
$( E^+_1 \merge E^+_2, E^-_1 \merge E^-_2 )$.
%
% Let $\calP$ and $\calQ$ be expansion pre-proofs with
% $\calP = C_1,\ldots, C_{n_1}, E_1,\ldots, E_{m_1}$ and
% $\calQ = D_1,\ldots, D_{n_2}, F_1,\ldots, F_{m_2}$ and
% $k,l\in\nats$ s.t.\ $\Shallow(C_i) = \Shallow(D_i)$ iff $1\leq i \leq k$
% and $\Shallow(E_i) = \Shallow(F_i)$ iff $1\leq i \leq l$. Then $\calP \union \calQ$
% is defined as
% $C_1\merge D_1, \ldots, C_k \merge D_k, C_{k+1},\ldots, C_{n_1}, D_{k+1},\ldots D_{n_2},
% E_1\merge F_1, \ldots, E_l \merge F_l, E_{l+1},\ldots E_{m_1}, F_{l+1},\ldots, F_{m_2}$.
% \end{definition}
%
%The following result describes the relation between an expansion proof with merge
%and its (mergeless) normal form.
%
\begin{lemma}\label{lem:merge_proof}
  If $\calP_1\omerge\calP_2$ is an expansion proof with merge such that $\Shallow(\calP_1)=\Shallow(\calP_2)$,
  then $\calP_1\cup \calP_2$ is an
  expansion proof and $\Shallow(\calP_1\cup \calP_2)=\Shallow(\calP_1)=\Shallow(\calP_2)$.
\end{lemma}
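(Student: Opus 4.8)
The plan is to prove Lemma~\ref{lem:merge_proof} by reducing the claim about $\calP_1\cup\calP_2$ to facts we have already established about the object-level merge $\omerge$ and its reduction system $\mergered$. The key observation is that, by definition, $\calP_1\cup\calP_2$ is obtained from $\calP_1\omerge\calP_2$ by normalizing all object-level merge nodes via $\mergered$ and taking set-theoretic union of the components with distinct shallow formulas. So the statement to prove factors into two parts: first, that normalizing preserves the property of being an expansion proof, and second, that it preserves the shallow formula.

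First I would handle the shallow formula. The clause for $\Shallow$ on merge nodes gives $\Shallow(E_1\omerge E_2)=\Shallow(E_1)=\Shallow(E_2)$, and an inspection of each reduction rule in Definition~\ref{def:omerge_reduction} shows that every step leaves the shallow formula of the whole pre-proof unchanged (the rules only rearrange subtrees and apply an eigenvariable renaming $\unsubst{\alpha_2}{\alpha_1}$, which acts on bound-variable names and does not touch the shallow formula since shallow formulas are variable-free by regularity/Definition~\ref{def:exp_preproof}). Hence $\Shallow(\calP_1\cup\calP_2)=\Shallow(\normf{(\calP_1\omerge\calP_2)})=\Shallow(\calP_1\omerge\calP_2)=\Shallow(\calP_1)=\Shallow(\calP_2)$, giving the equality claim.

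Next I would show $\calP_1\cup\calP_2$ is an expansion proof, i.e.\ verify the two conditions of Definition~\ref{def:exp_proof}: acyclicity of $<$ and that $\Deep$ is a tautology. For the tautology condition, note $\Deep(E_1\omerge E_2)=\Deep(E_1)\lor\Deep(E_2)$, so $\Deep(\calP_1\omerge\calP_2)$ is a tautology precisely because each of $\calP_1,\calP_2$ contributes a disjunction and $\Deep(\calP_1\omerge\calP_2)$ is propositionally implied by $\Deep(\calP_1)$ (and by $\Deep(\calP_2)$); since $\calP_1\omerge\calP_2$ is assumed to be an expansion proof with merge, its deep formula is already a tautology. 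I would then argue that each $\mergered$ step preserves tautologousness of the deep sequent: the reductions for literals, binary connectives, and the $\exists$-case only redistribute the same deep disjuncts, while the $\forall$-case applies a renaming substitution, and Lemma~\ref{lem:subst_proof} tells us admissible substitutions preserve the expansion-proof property, including the tautology condition. By Lemma~\ref{lem:mergered_props} the normal form has no merge nodes and is reached in finitely many steps, so $\Deep(\calP_1\cup\calP_2)$ is a tautology.

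The main obstacle I expect is the acyclicity condition, and this is where the eigenvariable renamings in rule~3 do real work. When we merge two $\forall$-expansions with eigenvariables $\alpha_1,\alpha_2$, the global substitution $\unsubst{\alpha_2}{\alpha_1}$ identifies them, which—exactly as illustrated in Example~\ref{ex.exptree_subst_merge}—can force further merges of $\exists$-expansions elsewhere and can in principle create new edges in the dependency relation $<$. The strategy here is again to route through Lemma~\ref{lem:subst_proof}: the renaming performed in rule~3 is an admissible substitution for the pre-proof it acts on (it sends an eigenvariable to a variable, and the admissibility side-condition $w\not< q(\beta)$ is precisely what prevents a new cycle through the identified variable), so each individual reduction step preserves acyclicity. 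It then suffices to check that rules~1,2, and~4, which do not apply substitutions, also introduce no cyclic dependency, which follows because they only merge subtrees under a common dominating expansion without altering the relative order of the surviving expansions. Assembling these step-preservation facts with the strong normalization of Lemma~\ref{lem:mergered_props} yields acyclicity of $<$ for $\calP_1\cup\calP_2$, and hence the lemma.
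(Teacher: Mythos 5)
Your decomposition is the same as the paper's: establish the shallow-formula claim directly, then push the two proof properties (tautology of $\Deep$, acyclicity of $<$) through the $\mergered$-reduction step by step, handling the rule-3 renaming via the substitution lemma (Lemma~\ref{lem:subst_proof}) and concluding with strong normalization and merge-freeness of normal forms (Lemma~\ref{lem:mergered_props}); this is exactly how the paper argues, via its appendix Lemma on step-wise preservation together with Lemma~\ref{lem:subst_proof}. However, there is a genuine gap at precisely the point where the paper's proof does its only substantive work: you \emph{assert} that the renaming $\unsubst{\alpha_2}{\alpha_1}$ performed by rule~3 is admissible, but you do not prove it. Admissibility is not automatic: beyond sending an eigenvariable to a variable, it requires that for every $\exists$-expansion $w$ whose expansion term contains $\alpha_2$ one has $w \not< q(\alpha_1)$, and your parenthetical remark (``the side-condition is precisely what prevents a new cycle'') only restates what admissibility is \emph{for}, not why it holds for this particular renaming. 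The paper discharges it with a structural observation specific to rule~3: since the two merged $\forall$-expansions sit at corresponding positions in two copies of a tree with the same shallow formula, $q(\alpha_1)$ and $q(\alpha_2)$ are dominated by the same expansions and lie in the same cut (if any). As the only $<^0$-predecessors of a $\forall$-expansion are its dominating expansions and its cut, any $w < q(\alpha_1)$ yields $w < q(\alpha_2)$; combined with $q(\alpha_2) <^0 w$ (clause~1 of the dependency relation, because $w$'s term contains $\alpha_2$) this would produce a cycle $w < w$ already in the pre-proof \emph{before} the step, contradicting the acyclicity you carry along the reduction sequence by induction. Without this argument, your appeal to Lemma~\ref{lem:subst_proof} rests on an unverified hypothesis, and it is exactly the hypothesis that could fail if the two eigenvariables were situated differently.

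A smaller inaccuracy, not fatal: in the binary-connective case you say the reduction ``only redistributes the same deep disjuncts.'' In fact for $\circ = \land$ the deep formula genuinely weakens --- one has only the implication $(A\land B)\lor(C\land D)\impl(A\lor C)\land(B\lor D)$, not an equivalence --- a point the paper flags explicitly in its step-wise preservation lemma. Since validity (not equivalence) is all you need for the tautology condition, your conclusion survives, but the justification as stated is wrong.
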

\begin{proof}
See Appendix.
\end{proof}
%

%TODO: define substitution on expansion trees (without merges) ?

The role of the merge operation is to recursively identify such variables that
denote the same value. For the purpose of cut-elimination, its principal
use consists in defining which parts of an expansion tree are to be
duplicated by a reduction. It is not surprising that this is technically
involved as it is also the case in other comparable formalisms. Indeed, it
is maybe in the technical details of {\em how} the decision what to duplicate
is taken where the existing formalisms differ most: in the
$\varepsilon$-calculus~\cite{Hilbert39Grundlagen2}, the object-level
syntax of $\varepsilon$-terms ensures maximal identifications, in proof
forests~\cite{Heijltjes10Classical}, the reduction steps duplicate too much
and are hence interleaved with pruning steps and in
Herbrand nets~\cite{McKinley13Proof} the notion of kingdom from the
literature on proof nets is used for determining what to duplicate.

\section{Expansion Proofs and Sequent Calculus}

In this section we will clarify the relationship between our expansion proofs
and the sequent calculus. The concrete version of sequent calculus is of
no significance to the results presented here, they
hold mutatis mutandis for every version that is common in the literature.
For technical convenience
we choose a calculus where a sequent is a set of formulas and
all rules are invertible.
\begin{definition}
The calculus {\LK} is defined as follows: initial sequents are of the form
$\Gamma, A, \dual{A}$ for an atom $A$. The inference rules are
\[
\lkforall{\Gamma, \forall x\, A}{\Gamma, A\unsubst{x}{\alpha}}
\quad 
\lkexists{\Gamma, \exists x\, A}{\Gamma, \exists x\, A, A\unsubst{x}{t}}
\quad
\lkand{\Gamma, A\land B}{\Gamma, A & \Gamma, B}
\quad
\lkor{\Gamma, A\lor B}{\Gamma, A, B}
\quad
\lkcut{\Gamma}{\Gamma, A & \dual{A}, \Gamma}
\]
with the usual side conditions: $\alpha$ must not appear in $\Gamma, \forall x\, A$
and $t$ must not contain a variable which is bound in $A$.
\end{definition}
Due to the global nature of expansion proofs, they correspond to regular
{\LK}-proofs.
An {\LK}-proof is called {\em regular} if each two $\forall$-inferences have
different eigenvariables.
From now on we assume w.l.o.g.\ that all {\LK}-proofs are regular.

\subsection{From Sequent Calculus to Expansion Proofs}
In this section we describe how to read off expansion trees from {\LK}-proofs
which leads to a completeness theorem for expansion proofs. For representing
a formula $A$ that is introduced by (implicit) weakening we use the natural
coercion of $A$ into an expansion tree, denoted by $A^\mathrm{E}$.
%
%we define the following
%mapping from formulas to expansion trees:
%\begin{align*}
%L^\mathrm{E} &= L\ \mbox{for a literal}\ L,\\
%(A\circ B)^\mathrm{E} &= A^\mathrm{E}\circ B^\mathrm{E}\ \mbox{for $\circ\in\{ \land, \lor \}$},\\
%(\forall x\, A)^\mathrm{E} &= \forall x\, A +^\alpha A^\mathrm{E}\ \mbox{for a fresh variable $\alpha$, and}\\
%(\exists x\, A)^\mathrm{E} &= \exists x\, A\ \mbox{with the empty set of expansions}.
%\end{align*}
For a sequent $\Gamma = A_1,\ldots,A_n$ we define
$\Gamma^\mathrm{E} = A_1^\mathrm{E},\ldots, A_n^\mathrm{E}$.

\begin{definition}
For an {\LK}-proof $\pi$ define the expansion proof $\Exp(\pi)$ by
induction on $\pi$:
\begin{enumerate}
\item If $\pi$ is an initial sequent $\Gamma, A, \dual{A}$, then
$\Exp(\pi) = \Gamma^\mathrm{E}, A, \dual{A}$

\item If 
$
\pi
=
\begin{array}{c}
\lkand{\Gamma, A \land B}{
  \deduce{\Gamma, A}{(\pi_A)}
  &
  \deduce{\Gamma, B}{(\pi_B)}
}
\end{array}
$
with $\Exp(\pi_A) = \calP_A, E_A$ and $\Exp(\pi_B) = \calP_B, E_B$ where
$\Shallow(E_A) = A$ and $\Shallow(E_B) = B$, then
$\Exp(\pi) = \calP_A \merge \calP_B, E_A \land E_B$.

\item If
$
\pi
=
\begin{array}{c}
\lkor{\Gamma, A\lor B}{
  \deduce{\Gamma, A, B}{(\pi')}
}
\end{array}
$
with $\Exp(\pi') = \calP, E_A, E_B$ where $\Shallow(E_A) = A$ and $\Shallow(E_B) = B$,
then $\Exp(\pi) = \calP, E_A \lor E_B$.

\item If
$
\pi
=
\begin{array}{c}
\lkforall{\Gamma, \forall x\, A}{
  \deduce{\Gamma, A\unsubst{x}{\alpha}}{(\pi_A)}
}
\end{array}
$
with $\Exp(\pi_A) = \calP, E$ where $\Shallow(E) = A\unsubst{x}{\alpha}$,
then $\Exp(\pi) = \calP, \forall x\, A +^\alpha E$.

\item If
$
\pi
=
\begin{array}{c}
\lkexists{\Gamma, \exists x\, A}{
  \deduce{\Gamma, \exists x\, A, A\unsubst{x}{t}}{(\pi_A)}
}
\end{array}
$
with $\Exp(\pi_A) = \calP, E, E_t$ where $\Shallow(E) = \exists x\, A$ and
$\Shallow(E_t) = A\unsubst{x}{t}$, then $\Exp(\pi) = \calP,
E\merge \exists x\, A +^t E_t$.

\item If
$
\pi
=
\begin{array}{c}
\lkcut{\Gamma}{
  \deduce{\Gamma, A}{(\pi^+)}
  &
  \deduce{\dual{A}, \Gamma}{(\pi^-)}
}
\end{array}
$
for $A$ positive with $\Exp(\pi^+) = \calP^+, E^+$ and $\Exp(\pi^-) = \calP^-, E^-$
where $\Shallow(E^+) = A$ and $\Shallow(E^-) = \dual{A}$, then $\Exp(\pi) = (E^+, E^-), \calP^+ \merge \calP^-$.
\end{enumerate}
\end{definition}

Note that the behavior of the above definition of $\Exp(\cdot)$ on binary rules
is to merge expansions of both subproofs (including cuts). This is the
reason for the relationship between sequent calculus proofs and expansion
proofs which on the one hand are strongly connected
structurally~\cite{Chaudhuri12Systematic,ChaudhuriXXIsomorphism} but at
the same time have different complexity~\cite{Baaz12Complexity}.
\begin{theorem}[completeness]\label{thm:completeness}
If $\pi$ is an {\LK}-proof of a sequent $\Gamma$, then $\Exp(\pi)$ is an
expansion proof of $\Gamma$. If $\pi$ is cut-free then so is $\Exp(\pi)$.
\end{theorem}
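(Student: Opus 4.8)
The plan is to prove the statement by induction on the structure of the {\LK}-proof $\pi$, following the case analysis already laid out in the definition of $\Exp(\cdot)$. For each case I must verify three things: that $\Exp(\pi)$ is a well-formed expansion pre-proof (in particular that regularity holds and $\Shallow(\Exp(\pi))$ is variable-free), that the two conditions of Definition~\ref{def:exp_proof} hold (acyclicity of $<_{\Exp(\pi)}$ and that $\Deep(\Exp(\pi))$ is a tautology), and that $\Shallow(\Exp(\pi)) = \Gamma$. The final sentence, that cut-free $\pi$ yields cut-free $\Exp(\pi)$, is immediate from the induction since only case~6 introduces a cut into $\calC$.

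First I would handle the base case: for an initial sequent $\Gamma, A, \dual{A}$, the deep formula $\Deep(\Gamma^{\mathrm E}, A, \dual{A})$ contains the complementary literals $A, \dual{A}$ and is therefore a tautology; there are no expansions at all, so $<$ is trivially acyclic and regularity is vacuous. For the propagation cases I would treat $\lor$ ($=$ case~3) and $\forall$ ($=$ case~4) as the easy steps, since they neither merge nor introduce cuts; here $\Shallow$ tracks the conclusion by Definition~\ref{def:exptree}, the tautology condition is inherited verbatim because $\Deep$ is unchanged (for $\forall$) or reassembled by the same connective (for $\lor$), and acyclicity plus the variable-freeness of the end-sequent follow from the induction hypothesis together with the {\LK} side condition that the $\forall$-eigenvariable $\alpha$ does not occur in $\Gamma, \forall x\, A$. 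The $\exists$ case~5 and the binary cases $\land$ (case~2) and cut (case~6) all invoke the merge operation, so the crux of the whole argument is to discharge these via Lemma~\ref{lem:merge_proof}: that lemma tells me precisely that if $\calP_1 \omerge \calP_2$ is an expansion proof with merge and the shallow formulas agree, then $\calP_1 \cup \calP_2$ is again an expansion proof with the same shallow formula. So in the $\land$ case I would check that $\calP_A \omerge \calP_B, E_A \land E_B$ is an expansion proof with merge (regularity across the two branches uses that $\pi$ is regular, and the tautology and acyclicity conditions combine the two hypotheses) and then appeal to the lemma; the cut case is analogous, additionally using that $\Deep(C) = \Deep(E^+) \land \Deep(E^-)$ so that the new cut contributes $A \land \dual{A}$, which is handled correctly by the tautology bookkeeping, and the dependency edges from the cut are exactly those mandated by clauses 3 and 4 of $<^0$.

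The main obstacle I expect is the merge-heavy cases, and specifically the need to know that forming $E \merge \exists x\, A +^t E_t$ or $\calP_A \merge \calP_B$ preserves \emph{both} the tautology property and acyclicity simultaneously. The tautology side is the more delicate half: I must argue that merging does not destroy the propositional validity of $\Deep$, which rests on the fact that $\Deep$ of a merge is a disjunction ($\Deep(E_1 \omerge E_2) = \Deep(E_1) \lor \Deep(E_2)$) and that $\mergered$ preserves the tautology status of the deep sequent — this is exactly the content packaged into Lemma~\ref{lem:merge_proof}, so the real work is verifying its hypothesis, namely that the pre-merge object $\calP_1 \omerge \calP_2$ is itself a legitimate expansion proof with merge. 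For acyclicity I would rely on the observation recorded after Definition~\ref{def:exp_proof} that a cycle must pass through an expansion, and then trace how the dependency edges of the combined proof arise only from edges already present in the two subproofs (whose eigenvariables are disjoint by regularity of $\pi$), so no new cycle can be created. Once Lemma~\ref{lem:merge_proof} is in hand the induction closes cleanly, and I would keep the verbose but routine verifications of $\Shallow$ and the variable-freeness condition brief, since they follow directly from the defining clauses of $\Shallow$ and the standard {\LK} eigenvariable conditions.
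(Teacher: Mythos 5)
Your overall architecture matches the paper's proof: induction on $\pi$ following the cases of $\Exp(\cdot)$, pre-proof conditions from regularity of $\pi$, tautology-hood of $\Deep(\Exp(\pi))$ by a routine induction, and $\Shallow$-bookkeeping from the definitions; the appeal to Lemma~\ref{lem:merge_proof} to discharge the merge-heavy cases is also the right move. The genuine gap is in your acyclicity argument. You justify acyclicity in the binary cases by claiming that ``the dependency edges of the combined proof arise only from edges already present in the two subproofs (whose eigenvariables are disjoint by regularity of $\pi$), so no new cycle can be created.'' This is not true under the paper's notion of regularity, which only requires pairwise distinct eigenvariables: an $\exists$-expansion term occurring in $\Exp(\pi_A)$ may perfectly well contain an eigenvariable of $\Exp(\pi_B)$ --- the paper itself points out in Section~\ref{sec:cut_reduction_steps} that a witness term in one branch can be ``an eigenvariable from a different branch of the proof.'' Consequently, clause~1 (and clause~4) of the definition of $<^0$ does create fresh edges crossing the two merged parts, and disjointness of eigenvariables by itself does not rule out a cycle of the shape $q(\beta) <^0 w_A <^0 q(\alpha) <^0 w_B <^0 q(\beta)$ threading through both branches. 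The same omission bites in the unary cases, which you sorted as routine: when the new $\exists$-expansion $+^t$ is created in case~5, it dominates (hence $<^0$-precedes) its entire subtree, while any $\forall$-expansion whose eigenvariable occurs in $t$ would give an incoming edge $v <^0 +^t$; nothing in your proposal excludes closing a cycle there, and in the cut case nothing excludes an incoming edge $v <^0 C$ when $\Shallow(C)$ contains an eigenvariable.

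What is missing is exactly the paper's key inductive invariant: \emph{if $\alpha$ is a free variable in the end-sequent of $\pi$, then $\alpha$ is not an eigenvariable in $\Exp(\pi)$}. Since the witness term $t$ occurs in the end-sequent of the premise of the $\exists$-inference, and the cut formula $A$ occurs in the end-sequents of both premises of a cut, this invariant guarantees that the newly introduced expansion (or cut) $w$ has no incoming $<^0$-edges from old nodes, i.e.\ $w \not> v$ for every old expansion $v$; any potential cycle must therefore already live in the induction hypothesis. Your proposal never states, proves, or propagates this invariant, so your verification of condition~1 of Definition~\ref{def:exp_proof} in the $\forall$-, $\exists$- and cut-cases does not go through, and in the $\land$- and cut-cases the hypothesis of Lemma~\ref{lem:merge_proof} --- which includes acyclicity of the combined pre-proof with merge --- is left unverified at precisely the point you yourself identified as the crux. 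To repair the write-up, state the invariant, prove it alongside the main induction (it follows from the eigenvariable condition of {\LK}), and use it to show the new node is $<$-maximal among the old ones; the remainder of your argument can then stand as written.
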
%
\begin{proof}
That $\Exp(\pi)$ is an expansion pre-proof follows directly from the
definitions as we are dealing with regular {\LK}-proofs only.
By a straightforward induction on $\pi$ one shows that $\Deep(\Exp(\pi))$
is a tautology.
Acyclicity is also shown inductively by observing that
if $\alpha$ is a free variable in the end-sequent of $\pi$,
then $\alpha$ is not an eigenvariable in $\Exp(\pi)$. This
implies that if $w$ is the new expansion introduced in
the construction of $\Exp(\pi)$, and $v$ is an old expansion
in $\Exp(\pi)$, then $w\not>v$, which in turn yields
acyclicity.
\end{proof}
\subsection{From Expansion Proofs to Sequent Calculus}
In this section we show how to construct an {\LK}-proof from a given expansion
proof. To this aim we introduce a calculus {\LKE} that works on expansion
pre-proofs instead of sequents (of formulas) following the treatment
in~\cite{Miller87Compact}.
\begin{definition}
The axioms of {\LKE} are of the form $\calP, A, \dual{A}$ for an atom $A$.
The inference rules are
\[
\lkforall{\calP, \forall x\, A +^\alpha E_0}{\calP, E_0}
\quad
\lkexists{\calP, \exists x\, A +^{t_1} E_1 \cdots +^{t_n} E_n}{
  \calP, \exists x\, A +^{t_1} E_1 \cdots +^{t_{n-1}} E_{n-1}, E_n
}
\]
\[
\lkand{\calP, E_1\land E_2}{\calP, E_1 & \calP, E_2}
\quad
\lkor{\calP, E_1\lor E_2}{\calP, E_1, E_2}
\quad
\lkcut{\{E_1,E_2\},\calP}{\calP, E_1 & E_2, \calP}
\]
with the following side conditions: $\Shallow(E_1) = \dual{\Shallow(E_2)}$ for the cut
and the eigenvariable condition for $\forall$: $\alpha$ must not occur in
$\Shallow(\calP, \forall x\, A +^x E_0)$.
\end{definition}
The reader is invited to note that $\Shallow(\calP, \forall x\, A +^x E_0)$ does
{\em not} include the cut formulas of $\calP$, they may -- and indeed often
have to -- contain the eigenvariable
$\alpha$. Furthermore, it should be kept in mind that the expansion terms at
the $\exists$-rule form a {\em set}, i.e.\ the above rule allows to take any
instance as there is no such thing as a last or rightmost instance. An important
feature of the above calculus, which is easily verified, is that if $\pi$ is an {\LKE}-proof,
then $\Shallow(\pi)$ is an {\LK}-proof. In the following proof we describe how
to transform expansion proofs to {\LK}-proofs.
\begin{theorem}[soundness]\label{thm:soundness}
If $\calP$ is an expansion proof of a sequent $\Gamma$, then there
is an {\LK}-proof of $\Gamma$. If $\calP$ is cut-free, then so is the {\LK}-proof.
\end{theorem}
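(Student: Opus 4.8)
The plan is to factor the statement through the calculus \LKE: by the remark preceding the theorem, if $\pi$ is an \LKE-proof with conclusion $\calP$, then $\Shallow(\pi)$ is an \LK-proof of $\Shallow(\calP)=\Gamma$, and if $\calP$ has no cut then $\pi$ uses no cut rule, so $\Shallow(\pi)$ is cut-free. It therefore suffices to construct such a $\pi$, which I would do by well-founded induction on the total number of connectives, expansions and cuts of $\calP$. The base case occurs when no top-level tree has a connective, an expansion or a cut: then $\Deep(\calP)$ is a tautological disjunction of literals, hence contains a complementary pair $A,\dual{A}$ occurring as top-level trees, so $\calP$ is an \LKE-axiom.

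For the inductive step I would exhibit an applicable \LKE-inference whose premises are of strictly smaller measure and again satisfy tautologihood of $\Deep(\cdot)$ and acyclicity of $<_\calP$. If some top-level tree has top connective $\land$ or $\lor$, apply the corresponding rule; here tautologihood is preserved because $\Gamma\lor(A\land B)$ is a tautology iff both $\Gamma\lor A$ and $\Gamma\lor B$ are (the $\lor$ case being a trivial regrouping), and the same computation handles the cut rule, whose premises replace $\Deep(C)=\Deep(E_1)\land\Deep(E_2)$ by $\Deep(E_1)$ and $\Deep(E_2)$. The $\exists$-rule only regroups a disjunction and so also preserves tautologihood. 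Acyclicity is preserved throughout because each of these steps only deletes a node (and a cut) together with its incident edges, so $<_\calP$ merely shrinks. One caveat to flag: after peeling a $\forall$- or cut-node the shallow formula of an exposed subtree may acquire free variables, so the premises are no longer expansion pre-proofs in the sense of Definition~\ref{def:exp_preproof}; the induction is therefore run for the relaxed notion obtained by dropping the requirement that $\Shallow(\cdot)$ be closed, and it is exactly the closedness of $\Shallow(\calP)$ for the original $\calP$ that finally makes $\Shallow(\pi)$ a genuine \LK-proof.

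The real work, and the place where acyclicity is essential, is the eigenvariable side condition of the $\forall$-rule. I would adopt the selection strategy: \emph{first} unfold any top-level propositional connective, and \emph{otherwise} decompose a $<_\calP$-minimal expansion or cut, one of which exists since $<_\calP$ is acyclic on a finite set. Once no top-level propositional connective remains, a $<_\calP$-minimal expansion is dominated by no expansion and hence sits at the top of its tree, so it is ready to be peeled; a minimal cut is automatically top-level. The invariant to carry along is (INV): \emph{whenever a variable $\alpha$ occurs free in the shallow formula of a top-level tree, the expansion $q(\alpha)$ no longer occurs in $\calP$.} It holds initially because $\Shallow(\calP)$ is closed, and its contrapositive is precisely the needed side condition: if $q(\alpha)$ is a top-level $\forall$-node then $\alpha$ occurs in no top-level shallow formula, so every exposed $\forall$-node may be peeled. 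Conversely, minimality preserves INV: a minimal $\exists$-node $w$ with term $t$ admits no $\alpha\in\Var(t)$ whose $q(\alpha)$ still occurs (that would give $q(\alpha)<^0 w$, contradicting minimality), and by the fourth clause defining $<^0$ a minimal cut $C$ admits no $\alpha\in\Var(\Shallow(C))$ with $q(\alpha)$ still present; thus the free variables newly entering shallow formulas are exactly those whose introducing expansion has already been removed, while peeling a $\forall$-node removes $q(\alpha)$ in the same step that $\alpha$ enters the exposed subtree.

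I expect the main obstacle to be this interaction between the global nature of eigenvariables and the branching of the cut and $\land$ rules: one must verify that INV, tautologihood and acyclicity survive \emph{independently} in both premises of every binary rule, and that decomposing a $<_\calP$-minimal element never creates a new dependency edge. Granting this bookkeeping, the induction produces $\pi$; taking shallow formulas then yields the desired \LK-proof of $\Gamma$, cut-free whenever $\calP$ is.
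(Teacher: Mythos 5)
Your proposal is correct and takes essentially the same route as the paper: construct an \LKE-proof by induction on the size of $\calP$, unfolding top-level propositional connectives first and then decomposing a $<_\calP$-minimal quantifier or cut whose existence follows from acyclicity, with acyclicity also securing the eigenvariable condition, and finally take $\Shallow(\pi)$ to get the \LK-proof. Your invariant INV and the relaxation of Definition~\ref{def:exp_preproof} to allow free variables in intermediate shallow sequents are sound elaborations of details the paper leaves implicit in the phrase ``ensured by the acyclicity of the dependency relation.''
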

\begin{proof}
It is enough to construct an {\LKE}-proof $\pi$ of $\calP$, as then $\Shallow(\pi)$ is
a proof of $\Shallow(\calP) = \Gamma$. The construction will be carried out
by induction on the number of nodes in $\calP$.

If $\calP = \calP', E_1 \lor E_2$ for some $\calP'$, $E_1$ and $E_2$,
then both $\calP', E_1, E_2$ is a strictly smaller expansion proof.
By the induction hypothesis we obtain an {\LKE}-proofs $\pi'$ of $\calP', E_1, E_2$
from which a proof of $\calP$ is obtained by an $\lor$-inference.
For $\calP = \calP', E_1 \land E_2$, proceed analogously.

If there are no top-level conjunctions or disjunctions, then by the acyclicity of
$<_\calP$ there must be a $<_\calP$-minimal top-level quantifier or cut. For
the case of cut proceed as follows: let $\calP = C,\calP'$ for some $\calP'$ and a $<_\calP$-minimal cut $C=\{ E_1, E_2 \}$.
Then both $E_1, \calP'$ and $E_2,\calP'$
are strictly smaller expansion proofs because $\Deep(E_i, \calP')$ is a tautology as $\Deep(\calP)$
is one and the orderings are suborderings of $\calP$ hence also acyclic. By
the induction hypothesis we obtain {\LKE}-proofs $\pi_1,\pi_2$ of $E_1, \calP'$
and $E_2,\calP'$ respectively from which a proof of $\calP$ is obtained
by a cut.

For the case of the minimal node being a quantifier, proceed analogously. As
in the cut-free case the eigenvariable condition of the $\forall$-rule is
ensured by the acyclicity of the dependency relation.
\end{proof}
\begin{definition}
 The {\LK}-proof constructed in the above proof will be called $\Seq(\calP)$.
\end{definition}

\section{Cut-Elimination}\label{sec:cutelim}
In this section we define a natural reduction system for expansion
proofs whose normal forms are cut-free expansion proofs.
We prove weak normalization and discuss the status of other properties such
as strong normalization and confluence in comparison to other systems from the literature.
\subsection{Cut-Reduction Steps}\label{sec:cut_reduction_steps}
%
%\begin{definition}\label{def.cutreduction}
Before we present our cut-reduction steps, we have to discuss regularity:
in contrast to the operations we have defined so far, cut-reduction will {\em duplicate}
sub-proofs, making it necessary to discuss the renaming of variables (as
in the case of the sequent calculus). 
We will carefully indicate,
in the case of a duplication, which subtrees should be subjected to a
variable renaming, and which variables are to be renamed.

%To this end, for an expansion proof
%$\calP$ we introduce the (imprecise) notation $\calP^r$ to stand for 
%$\calP\sigma$, where $\sigma$ is a renaming establishing regularity. It will
%always be clear from the context how to construct an appropriate $\sigma$.
%
%TODO: only horizontal arrows (everywhere)
%
The cut-reduction steps, relating expansion proofs $\calP,\calP'$ and
written $\calP\mapsto\calP'$,
are
\[
  \begin{array}{rl}
  & \{  \exists x\, A +^{t_1} E_1 \cdots +^{t_n} E_n , \forall x\, \bar{A} +^\alpha E \}, \calP\\
\mapsto& \calP \cup \{ E_1\lor\cdots\lor E_n, E\eta_1\unsubst{\alpha}{t_1}\land\cdots\land E\eta_n\unsubst{\alpha}{t_n} \} \cup \bigcup_{i=1}^n \calP\eta_i\unsubst{\alpha}{t_i}
  \end{array}
\]
\[
  \{ E_1 \lor E_2, E'_1\land E'_2 \}, \calP \mapsto \{ E_1, E'_1 \}\cup \{ E_2, E'_2 \}\cup \calP
\]
\[
\{ A, \dual{A} \}, \calP\mapsto \calP \quad \mbox{for an atom $A$.}
\]
where $\eta_i$ are renamings of the eigenvariables of $\calP,E$ to fresh variables.

These reduction rules are very natural:
an atomic cut is simply removed and a propositional cut is decomposed. 
The reduction of a quantified cut is, when thinking about cut-elimination in the
sequent calculus, intuitively immediately appealing: An existential cut is replaced
by a cut on a disjunction of the instances. We emphasize here that due to the eigenvariable condition
in the sequent calculus, such a rule cannot directly be stated with such formal
clarity and elegance.
Note that the rule makes use
of the merge operation which, as will become clear in the following sections, will prevent
redundancies that would be introduced by using the set-union $\cup$.

One surprising aspect of the quantifier-reduction rule is the presence of $\calP$, without
a substitution applied, on the rhs of the rule: in general, $\calP$ will contain $\alpha$, and
one would expect that occurrences of $\alpha$ are redundant (since $\alpha$ is ``eliminated''
by the rule). The reason why this occurrence of $\calP$ must be present is that $\alpha$ is not,
in fact, eliminated since some $t_i$ might contain it. This situation occurs, for example, when
translating from a regular $\LK$-proof where an $\exists$-quantifier may be instantiated by any
term, and we happen to choose an eigenvariable from a different branch of the proof. In the 
sequent calculus, this situation can in principle be avoided by using a different witness for the
$\exists$-quantifier, but realizing such a renaming in expansion proofs is technically non-trivial
due to the global nature of eigenvariables. For simplicity of exposition, we therefore allow this
somewhat unnatural situation and leave a more detailed analysis for future work.
\begin{remark}
We note that this phenomenon also occurs in the proof forests of~\cite{Heijltjes10Classical},
where it is called {\em bridge}. There,
bridges are dealt with by a {\em pruning reduction}, and the weak normalization proof of
that system depends on this pruning. In our setting, we do not need additional machinery
for proving weak normalization (see Section~\ref{sec:weak_normalization}). Furthermore,
the counterexample to strong normalization from~\cite{Heijltjes10Classical} also contains
a bridge; we investigate (a translation of) this counterexample in
Section~\ref{sec:strong_normalization} and find that it is not a counterexample for
our reduction.
\end{remark}
As before, if $\calP \mapsto \calP'$ we can associate in a natural way
(formally, using the $\pred_\omerge$ and $\pred_s$ functions defined before) to every expansion
$w$ in $\calP'$ a unique predecessor (w.r.t.\ cut-reduction) in $\calP$. This predecessor
is denoted by $\pred_c(w)$.
Note that $\pred_c(w)$ is a single expansion, while $\pred_\omerge(w)$ is a set of expansions;
this is explained by the fact that all expansions in $\pred_\omerge(w)$ are
,,copies'' of $\pred_c(w)$.

\begin{lemma}\label{lem:red_pres_proof}
If $\calP\mapsto\calP'$ and $\calP$ is an expansion proof, then $\calP'$ is an expansion
proof. Furthermore, $\Shallow(\calP)=\Shallow(\calP')$.
\end{lemma}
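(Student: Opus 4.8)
The plan is to verify the two conditions of Definition~\ref{def:exp_proof}
(acyclicity of $<_\calP$ and the tautology property of $\Deep(\calP)$)
plus the invariance of the shallow sequent, separately for each of the three
cut-reduction rules. The shallow-sequent claim $\Shallow(\calP)=\Shallow(\calP')$
should be the easiest: the reductions only touch cut expansion trees (together
with copies of $\calP$ introduced by renamings $\eta_i$ and the substitutions
$\unsubst{\alpha}{t_i}$), and since $\alpha$ is an eigenvariable it does not
occur in $\Shallow(\calP)$, so by Lemma~\ref{lem:subst_proof} the shallow
sequent of each copy equals that of $\calP$; the atomic and propositional
rules visibly preserve it. I would dispatch this first.

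Next I would treat the tautology condition. For the atomic rule
$\{A,\dual{A}\},\calP\mapsto\calP$, we have
$\Deep(\{A,\dual{A}\})=A\land\dual{A}$, so removing it from a tautology
$\Deep(\calP'),A\land\dual{A}$ still leaves a tautology (dropping a
contradictory disjunct on the right). For the propositional rule, splitting
$\{E_1\lor E_2,\,E_1'\land E_2'\}$ into $\{E_1,E_1'\}\cup\{E_2,E_2'\}$ changes
the relevant deep formula from
$(\Deep(E_1)\lor\Deep(E_2))\land(\Deep(E_1')\land\Deep(E_2'))$ to
$(\Deep(E_1)\land\Deep(E_1')),(\Deep(E_2)\land\Deep(E_2'))$, a propositional
manipulation that preserves the tautology by a routine truth-table argument.
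The quantifier rule is the substantive case: here I would use that
$\Deep(\forall x\,\dual{A}+^\alpha E)=\Deep(E)$ is, roughly, the hypothesis on
$\alpha$, and that instantiating the universal cut by each $t_i$ and taking the
conjunction $\bigwedge_i \Deep(E\eta_i)\unsubst{\alpha}{t_i}$ against the
disjunction $\bigvee_i\Deep(E_i)$ mirrors the sequent-calculus cut-reduction;
Lemma~\ref{lem:subst_proof} guarantees each $\calP\eta_i\unsubst{\alpha}{t_i}$
contributes the same deep tautology structure, and the merges only add
disjuncts (which cannot destroy a tautology). The key algebraic fact is that if
$\Deep(\calP)$ is a tautology then so is its instantiation, and the cut on
$\exists x\,A$ versus $\forall x\,\dual{A}$ reduces to cuts on the matching
instances.

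The main obstacle is acyclicity of $<_{\calP'}$ for the quantifier rule, since
this is where eigenvariables are duplicated and substituted. My strategy is the
standard one of tracing cycles back through predecessors: using the
$\pred_c$ function, any edge in $<_{\calP'}$ projects to an edge or path in
$<_\calP$, so a cycle in $\calP'$ would yield a cycle in $\calP$, contradicting
the acyclicity hypothesis. The delicate point is the freshness of the
renamings $\eta_i$: because each copy $\calP\eta_i$ uses fresh eigenvariables,
the dependency edges within distinct copies cannot interleave to form new
cycles, and the admissibility condition on the substitutions
$\unsubst{\alpha}{t_i}$ (which was precisely designed to prevent the
introduction of cycles, cf.\ Lemma~\ref{lem:subst_proof}) controls the edges
crossing between the cut instances and the rest. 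I expect the bulk of the work
to be a careful case analysis showing that the only genuinely new dependencies
created, namely those linking the new cut $\{E_1\lor\cdots\lor E_n,\ldots\}$ to
the instances $E\eta_i\unsubst{\alpha}{t_i}$, are oriented consistently with the
old order on the original existential and universal cut trees, so that no back
edge arises. I would structure the acyclicity argument rule by rule and invoke
Lemma~\ref{lem:subst_proof} and Lemma~\ref{lem:merge_proof} to absorb the
substitution and merge sub-steps, leaving only the genuinely new cut edges to
check by hand.
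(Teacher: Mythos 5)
Your outline matches the paper's proof in its broad strokes: the paper also concentrates on the quantifier rule (calling the other two ``analogous and simpler''), proves validity of the new deep sequent by combining preservation of validity under substitution (Lemma~\ref{lem:subst_deep}) with explicit propositional reasoning, and proves acyclicity exactly as you propose, by translating any edge $x<_{\calP_2}y$ back to an edge $x'<_{\calP_1}y'$ with $x'\in\pred_c(x)$, $y'\in\pred_c(y)$. However, there are two genuine gaps. The first, and more serious, is that you cannot simply ``invoke Lemma~\ref{lem:subst_proof} and Lemma~\ref{lem:merge_proof} to absorb the substitution and merge sub-steps'', because the intermediate structure built by the quantifier rule is \emph{not} an expansion pre-proof with merges under the strict definition: in the new cut $\{E_1\lor\cdots\lor E_n,\ E\eta_1\unsubst{\alpha}{t_1}\land\cdots\land E\eta_n\unsubst{\alpha}{t_n}\}$, and likewise in the merged copies of the cuts of $\calP$, the two sides are dual only \emph{up to the renamings} $\eta_i$, since eigenvariables of $\calP$ may occur freely in cut-formulas. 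The paper has to liberalize the definition of cut to allow $\Shallow(C^+)=\dual{\Shallow(C^-)}\eta$ for a renaming $\eta$, verify $\dual{\Shallow(E_i)}\eta_i=\Shallow(E\eta_i\unsubst{\alpha}{t_i})$, and then argue via Lemma~\ref{lem:merge_top_evs} that merge-normalization reverses $\eta_i$ on precisely the relevant variables --- namely those $\beta$ with $q(\alpha)\not<q(\beta)$, which holds by acyclicity of $<_{\calP_1}$ --- so that all cuts become syntactically correct in the normal form. Your proposal never notices that the cuts are temporarily ill-formed, and without this repair step the lemmas you lean on do not apply.

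The second gap is that you take admissibility of the substitutions $\unsubst{\alpha}{t_i}$ for granted (``which was precisely designed to prevent the introduction of cycles''). Admissibility is a hypothesis of Lemma~\ref{lem:subst_proof}, not a consequence of it, and it must be \emph{proved} for the intermediate proof: some $t_i$ may contain eigenvariables $\beta$ of $\calP$, and one must rule out that an $\exists$-expansion $w$ whose term contains $\alpha$ satisfies $w<q(\beta)$. The paper does this with a short but essential argument exploiting that the variables introduced by $\eta_j$ are fresh: such a $w<q(\beta)$ would force $\beta$ itself to be a fresh $\eta_j$-variable, contradicting $\beta\in\Var(t_i)$. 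A minor further inaccuracy: merge-normalization does not merely ``add disjuncts'' --- in the binary-connective case it replaces $(D_1\land D_1')\lor(D_2\land D_2')$ by $(D_1\lor D_2)\land(D_1'\lor D_2')$, which preserves validity but not logical equivalence (cf.\ Lemma~\ref{lem:merge_deep}), so your justification there is stated incorrectly even though the conclusion stands.
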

\begin{proof}
See Appendix.
\end{proof}
\begin{example}
For the sake of conciseness, we use the notation $E(\alpha)$ for an
expansion tree with an indicated variable $\alpha$, and $E(t)$ for
the expansion tree obtained from $E(\alpha)$ by (syntactically)
substituting $t$ for $\alpha$. We will also identify formulas and
quantifier-node-free expansion trees. With this in mind, consider
the expansion proof $\calP=\neg P0, Pf^40, E(\alpha), \{C^+,C^-\}$
with
\[
\begin{array}{lll}
E(\alpha)&= \exists x F(x) &+^\alpha F(\alpha)+^{f\alpha} F(f\alpha)\\
C^+&=\exists x G(x)&+^0 G(0)+^{f^20}G(f^20)\\
C^-&=\forall x\dual{G(x)}&+^\alpha \dual{G(\alpha)},
\end{array}
\]
where $F(x)=Px \land \neg Pfx$ and $G(x)=Px\land \neg Pf^2x$.
Then, since in this case substitution does not introduce any
merge nodes and no eigenvariable renaming is necessary,
\[
\calP\mapsto\neg P0,Pf^40,E(\alpha)\cup E(0)\cup E(f^20),\{
G(0) \lor G(f^20), \dual{G(0)}\land\dual{G(f^20})\}
\]
where the substitutions $\unsubst{\alpha}{0}, \unsubst{\alpha}{f^2 0}$ were applied and
\[
  E(\alpha)\cup E(0)\cup E(f^20)=\exists x F(x) +^\alpha F(\alpha)+^{f\alpha} F(f\alpha)+^0 F(0) +^{f0} F(f0)
  +^{f^2 0}F(f^20)+^{f^3 0}F(f^30).
\]
Finally, this proof reduces to
\[
\neg P0,Pf^40,E(\alpha)\cup E(0)\cup E(f^20)
\]
by the propositional cut-reduction rules. The reader is invited
to verify that tautology-hood of $\Deep(\calP)$ is preserved (the $\alpha$-instances
are redundant in this case). The final expansion proof does not contain any $\forall$-nodes,
so acyclicity of the dependency relation is trivial.
\end{example}
In the sequel, by $\rightarrow$ we denote the reflexive, transitive closure of the mapping $\mapsto$.
%
%
%
%For quantifier nodes
%$n$, let $\bvar(n)$ be the variable bound in the quantifier node.
%\\
%TODO: For my arguments, I would also like to use the dependency relation on $\forall$-nodes
%or equivalently, their eigenvariables. For this purpose, we could simply define a new
%order $<'$ on expansion terms, cuts, and eigenvariables by postulating $x<'y$ if either
%$x<y$ or $x$ an expansion term and $y$ an eigenvariable and and $t$ dominates $q(y)$ or
%$x$ an eigenvariable and $y$ an expansion term and $x\in\Var(y)$, or $x$ is a cut
%and $y$ is an eigenvariable such that $q(y)$ is in $x$, or $x$ is an eigenvariable
%and $y$ is a cut and $x\in\Var(C)$ where $C$ is the cut-formula of $y$ or $x$ is an eigenvariable
%and $y$ is an eigenvariable and $q(x)$ dominates $q(y)$.
%We will pretend for
%now that our relation $<$ includes this. One should show that $<'$ restricted
%to expansion terms and cuts is exactly $<$, and that $<$ acyclic implies $<'$ acyclic.
%

\subsection{Complexity Measures}
Our next aim is to prove weak normalization of our reduction
system $\rightarrow$. It turns out that the strategy of the proof of the {\em first
 $\varepsilon$-theorem} can be applied to expansion trees. For simplicity,
 we just state the second $\varepsilon$-theorem, which is a consequence of the first:
for every proof of an $\varepsilon$-free formula
in the $\varepsilon$-calculus, there exists a proof of the same 
formula in which no $\varepsilon$'s occur. It is known that proofs in the $\varepsilon$-calculus
can be translated to {\LK}-proofs with cut, and vice-versa. This translation
shows us that closed $\varepsilon$-terms correspond to eigenvariables in the sequent
calculus, which in turn correspond to $\forall$-expansions in expansion proofs.
Equipped with this observation, we can find suitable versions of the notions
of {\em rank} and {\em degree} which in turn will allow us to prove
weak normalization. In fact, these notions can be formulated in a natural way using the
language of expansion trees we have introduced so far.
In the following, we fix $\max\emptyset=0$.
\begin{definition}
 Let $w$ be a $\forall$-expansion in %an expansion proof with merge 
 $\calP$,
 and let $>$ be its dependency relation. A sequence of $\forall$-expansions
 $w,w_1,\ldots,w_k$ of $\calP$ such that $w>w_1>\cdots>w_k$ is called
 a {\em $>$-chain descending from $w$ of length $k$}.
%
%
%
%
% 
%
%
% For critical $n$, define $\VC(n)=\Var(C)$, where $C$ is the cut-formula
% of the cut in which $n$ occurs, and $\VC(n)=\emptyset$ if $n$ is not
% critical.
%$n$ is {\em subordinate to} a node $m$ if $m$ dominates $n$.(TODO: maybe replace subordinate notion simply
%by domination? See the commented LaTeX code for the direct translation of the subordinate
%notion from $\epsilon$. It seems the additional requirement is not actually needed,
%and not using it makes things simple (otherwise a quantifier node with maximal rank
%may be dominated by another quantifier node with the same rank)).
%
%
%
%  $n$ is {\em subordinate to} a node $m$ if $m$ dominates $n$ and %$\bvar(m)$ 
% $v(m)$ occurs
%  in $\Shallow(n)$.
%
%
%
% \begin{itemize}
%  \item $m=\forall x A+^y E$ and $n$ occurs in $E$ and $x$ occurs in $\Shallow(n)$, or
%  \item $m=\exists x A+^{t_1}E_1\cdots+^{t_n}E_n$ and $n$ occurs in some $E_i$ and $x$
%    occurs in $\Shallow(n)$.
%\end{itemize}
 We now define the {\em rank $\rk(w)$} for expansions $w$
 and the {\em degree $\deg(w)$} for $\forall$-expansions $w$:
%(TODO: check whether the second part of the rank
%definition should be used by writing an example with an open cut!)
  \begin{align*}
  %\rk(n)=&\max\{\rk(m)\mid m\textrm{ subordinate to } n\textrm{ or }m=q(\alpha)\textrm{ for some }\alpha\in F(n)\} + 1.
  %\rk(n)=&\max\{\rk(m)\mid m\textrm{ subordinate to } n\textrm{ or }n=q(\alpha)\textrm{ for some }\alpha\in F(m)\} + 1.
    %\rk(n)=&\max\{\rk(m)\mid m\textrm{ subordinate to } n\} + 1.
    \rk(w)=&\max\{\rk(u)\mid w\textrm{ dominates } u\} + 1,\\
   %\deg(n)=&\max\{\deg(q(\alpha)) \mid \alpha\in\Var(\Shallow(n))\}+1.
    \deg(w)=&\max\{\textrm{length of } c \mid c \textrm{ >-chain descending from }w\}.
  \end{align*}
\end{definition}
A trivial but crucial property of
$\deg$ is that it is order-preserving w.r.t.~the dependency relation,
i.e.~$v>w$ implies $\deg(v)>\deg(w)$. For use in our weak normalization
proof, we extend the notion of rank to expansion proofs,
calling expansions $w$ occurring in a cut {\em critical}.
\begin{definition}
  For an expansion proof $\calP$ and $r\in\nats$, %the {\em degree with respect to $r$} $\deg(\calP, r)$, 
  the {\em rank} $\rk(\calP)$ and the {\em order with respect to $r$} $o(\calP,r)$ are defined as
  \begin{align*}
    \rk(\calP)=&\max\{\rk(w)\mid w\textrm{ critical}\},\\
%    \deg(\calP,r)=&\max\{\deg(w)\mid w\textrm{ critical }\forall\textrm{-expansion} \land \rk(w)=r\},\\
    o(\calP,r)=&\#\{w\mid w\textrm{ critical }\forall\textrm{-expansion} \land \rk(w)=r\}.
  \end{align*}
\end{definition}
\subsection{Elimination of Propositional Connectives}
Since expansion proofs work modulo propositional validity, it can be
expected that the elimination of propositional parts of cuts is simple.
This is indeed the case: from our cut-reduction steps, it is immediately clear that purely propositional cuts can be eliminated in linear time (since each propositional connective and each atom in a cut-formula induces a single
cut-reduction step). In fact, it is easy to see that if $\calP,\calC$ is an expansion proof where $\calC$ contains only propositional cuts, then $\calP$ is also an expansion proof. Hence purely propositional cuts can simply be dropped. This is in line with the results
of~\cite{Weller11Elimination}, where it is shown that quantifier-free
cuts can be eliminated from {\LK}-proofs at the cost of propositional proof search.

The following result builds on these observations, showing that
propositional parts of cuts can be eliminated while preserving the complexity measures we have defined in the previous section. This will
yield a convenient
,,intermediate
normal form'' that will be used in the proof of weak normalization.
\begin{definition}
  An expansion proof $\calP=\calC,\calE$ is {\em $\lor\land$-normal} if no $(E_1,E_2)\in\calC$
  is of the form $E_1=E_l\lor E_r$.
\end{definition}
In particular, if $\calP$ is $\lor\land$-normal and no cut
in $\calP$ contains a quantifier, then $\calP$ contains 
only atomic cuts.
%
% \begin{lemma}\label{lem:lorland_nochange}
%   If $\calP=(E_1 \lor E_2, E'_1\land E'_2), \calP' \mapsto (E_1, E'_1), (E_2, E'_2), \calP'=\calP^*$
%   then $\deg(\calP^*,r)=\deg(\calP,r)$ and $o(\calP^*,r)=o(\calP,r)$ for all $r$.
% \end{lemma}
% \begin{proof}
%   $\calP^*$ is an expansion proof and
%   $\Shallow(\calP^*)=\Shallow(\calP)$ by Lemma~\ref{lem:red_pres_proof}. 
%   Analogously to $\pred_s$, one can define a function $b$ from quantifier nodes in
%   $\calP^*$ to quantifier nodes in $\calP$ (associating nodes
%   in $\calP^*$ to those ``corresponding to it'' in $\calP$). $b$ is then a bijection.
%   One can show by induction
%   on the structure of $\calP$ that for all quantifier nodes $n$ in $\calP$,
%   $\rk(n)=\rk(b(n))$ and $\deg(n)=\deg(b(n))$ and $n$ critical iff $b(n)$ is.
%   The other facts follow straightforwardly.
% \end{proof}
%
\begin{lemma}\label{lem:lorland_normal}
  For every expansion proof $\calP$ there is a $\lor\land$-normal expansion proof $\calP^*$
  such that $\calP \rightarrow \calP^*$, $\Shallow(\calP^*) = \Shallow(\calP)$,
  $\rk(\calP^*)=\rk(\calP)$
  and %, $\deg(\calP^*,r)\leq\deg(\calP,r)$ and 
  $o(\calP^*,r)=o(\calP,r)$ for all $r$.
\end{lemma}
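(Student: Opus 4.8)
The plan is to eliminate the propositional connectives occurring at the top of cut-formulas one at a time, using the propositional cut-reduction rules, and to verify that each such step leaves the rank and order data unchanged. Recall that a cut $(E_1,E_2)$ is $\lor\land$-offending exactly when $E_1 = E_l \lor E_r$ (so that $E_2 = E_2' \land E_2''$ by the duality $\Shallow(E_2) = \dual{\Shallow(E_1)}$). For such a cut the reduction $\{E_l\lor E_r, E_2'\land E_2''\},\calP \mapsto \{E_l,E_2'\}\cup\{E_r,E_2''\}\cup\calP$ is available, and by Lemma~\ref{lem:red_pres_proof} it produces an expansion proof with the same shallow sequent. So the first thing I would do is fix a well-founded measure on the offending cuts and argue that repeatedly applying this rule terminates in a $\lor\land$-normal proof $\calP^*$ with $\calP\rightarrow\calP^*$.

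For termination I would take the measure to be the multiset (or sum) over all cuts $(E_1,E_2)\in\calC$ of the number of propositional connectives ($\lor$ and $\land$) at the top of $\Shallow(E_1)$ above its first quantifier or literal --- informally, the total propositional size of the cut-formulas sitting above the quantifiers. Each application of the $\lor\land$-rule replaces one cut whose cut-formula is $B_l\lor B_r$ by two cuts on the strictly smaller formulas $B_l$ and $B_r$, strictly decreasing this measure; the atomic rule removes a cut entirely and also decreases it. Since the propositional structure above the quantifiers is finite, the measure is a natural number that strictly decreases, so the process terminates, and its normal form has no cut of the form $E_l\lor E_r$, i.e.\ it is $\lor\land$-normal. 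I would also note that the $\lor\land$-rule and the atomic rule never touch or create any quantifier node, so they commute harmlessly with the quantifier structure.

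The crux of the lemma, and the part I expect to be the main obstacle, is the invariance of the complexity measures: $\rk(\calP^*)=\rk(\calP)$ and $o(\calP^*,r)=o(\calP,r)$ for every $r$. The key observation is that both $\rk$ and $o$ are defined purely in terms of the \emph{$\forall$-expansions occurring in cuts} (the critical expansions) together with the dependency relation $<$ and the dominance relation. The propositional reduction steps do not delete, duplicate, or relabel any quantifier node: in the step $\{E_l\lor E_r, E_2'\land E_2''\},\calP \mapsto \{E_l,E_2'\}\cup\{E_r,E_2''\}\cup\calP$, every $\forall$-expansion of the old cut reappears, unchanged and still critical, in exactly one of the two new cuts, with the same eigenvariable and dominating exactly the same expansions as before. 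Hence the set of critical expansions is preserved, as is the dominance relation, so $\rk(w)$ is unchanged for every critical $w$; this gives $\rk(\calP^*)=\rk(\calP)$ and, since the multiset of critical $\forall$-expansions and their ranks is preserved, $o(\calP^*,r)=o(\calP,r)$ for all $r$. The technical care needed here is to confirm that the dependency relation $<$ restricted to quantifier nodes is unaffected --- clauses (1)--(4) of the definition of $<^0$ refer to eigenvariables, dominance, and the \emph{occurrence of eigenvariables in cut-formulas}, and since the quantifier nodes and their terms are copied verbatim and the cut-formulas $\Shallow(E_l)$, $\Shallow(E_2')$ are literal subformulas of the old cut-formula, none of these data change. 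I would therefore verify clause by clause that $<$ between critical $\forall$-expansions is identical before and after each step, which secures $\deg$ and hence both measures, completing the induction on the number of reduction steps.
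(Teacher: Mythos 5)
Your proposal follows essentially the same route as the paper's own proof: iterate the propositional reduction step and argue that rank and order are preserved because the quantifier structure is untouched. Two points of comparison are worth making. First, your termination measure is actually \emph{more} careful than the paper's: the paper inducts on ``the number of $\lor\land$-cuts'', which taken literally need not decrease (if the cut-formula is $(A\lor B)\lor C$, one offending cut becomes up to two), whereas your total count of propositional connectives above the quantifiers in the cut-formulas strictly decreases at every step. (Minor slip: the atomic rule contributes $0$ to this measure and so does not strictly decrease it, but it is also never needed for $\lor\land$-normality, since atomic cuts are not offending.) Second, your central invariance claim --- that the step ``does not delete, duplicate, or relabel any quantifier node'' --- glosses over the fact that the $\cup$ in the reduction rule is the \emph{merge} of pre-proofs, not plain set union: if $\Shallow(E_l)=\Shallow(E_r)$, or if one of the two new cut-formulas coincides with the cut-formula of a pre-existing cut in $\calP$, the resulting cuts are merged, and case 3 of the merge reduction then \emph{identifies} eigenvariables, so critical $\forall$-expansions can disappear and $o(\calP^*,r)$ can strictly drop below $o(\calP,r)$ (e.g.\ a cut on $\exists x\,P(x)\lor\exists x\,P(x)$ splits into two cuts on $\exists x\,P(x)$ that are immediately merged). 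To be fair, the paper's one-line proof (``$\rk(w)=\rk(\pred_c(w))$ for all $w$, conclude by induction'') has exactly the same blind spot, and the weakened conclusion $o(\calP^*,r)\leq o(\calP,r)$ is all that the weak-normalization argument actually uses; still, in the generic case where no cut-formulas collide your verbatim-copy argument is the intended one. A last small overstatement: the dependency relation $<$ restricted to quantifier nodes is \emph{not} literally unchanged --- clause~(4) edges $v<^0 C$ can be lost when the eigenvariable of $v$ occurs in only one of the two halves of the old cut-formula --- but since $\rk$ depends only on domination and $o$ only on criticality and $\rk$, this affects $\deg$ at most, which the lemma does not claim to preserve.
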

\begin{proof}
We proceed by induction on the number of $\lor\land$-cuts in $\calP$,
showing by induction on the structure of $\calP$ that
$\rk$ is preserved. See the appendix for details.
\end{proof}

% \begin{proof}
%   If $\calP$ is not $\lor\land$-normal, then $\calP=(E_1 \lor E_2, E'_1\land E'_2), \calP'$
%   and hence 
%   \[
%     \calP\mapsto (E_1, E'_1), (E_2, E'_2), \calP'=\calP^*.
%   \]
%   Since the number
%   of $\lor\land$-cuts in $\calP^*$ is strictly smaller than the number of $\lor\land$-cuts in $\calP$,
%   we conclude by Lemma~\ref{lem:lorland_nochange} and by induction.
% \end{proof}
%
\subsection{Weak Normalization}\label{sec:weak_normalization}
%
%Meta(?)-remark: If $m=\forall x A+^y E$ and $n$ occurs in $E$ then
%if $n$ is subordinate to $m$ then $y$ occurs in $\Deep(n)$ but in general
%$y$ occurs in $\Deep(n)$ does not imply $n$ subordinate $m$.\\
%
This section is dedicated to proving that there exists a terminating
strategy for the application of the cut-reduction rules.
Given an expansion proof $\calP$, our reduction strategy will be based 
on picking a degree-maximal $\forall$-expansion from the set 
$M(\calP)=\{w\mid w\textrm{ critical and }\rk(w)=\rk(\calP)\}$. 
%
%Note
%that a non-empty $M(\calP)$ contains a critical $\forall$-expansion.
%
%
%\begin{lemma}\label{lem:have_forall}
%For every $\exists$-node $n\in M(\calP)$ there exists a
%$\forall$-node $m\in M(\calP)$ such that $\deg(m)=\deg(n)$.
%\end{lemma}
%\begin{proof}
%There exists a $\forall$-node $m\in\cl(n)$,
%and $m\in M(\calP)$ and $\deg(m)=\deg(n)$
%by Lemma~\ref{lem:equal_rank}.
%\end{proof}
%
The following results establish some invariances of rank
%, degree,
and domination under
substitution and $\mergered$-reduction, which are crucial
for the weak normalization proof.
\begin{lemma}\label{lem:subst_rank_preserve}
  Let $v,w$ be expansions in $\normf{\calP[\alpha\backslash t]}$ such that
  $\alpha$ does not occur in any cut-formula in $\calP$, and let
  $v'\in\pred_\omerge(v)$ and $w'\in\pred_\omerge(w)$. Then $v$ dominates $w$ if and only if 
  $v'$ dominates $w'$. Furthermore, $\rk(v')=\rk(v)$.
\end{lemma}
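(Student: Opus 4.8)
The plan is to reduce both assertions to the analysis of a single $\stackrel{\omerge}{\mapsto}$-step and then to compose along the reduction sequence $\calP[\alpha\backslash t]\mergered\normf{\calP[\alpha\backslash t]}$, which exists and terminates by Lemma~\ref{lem:mergered_props}. Note first that $v,w$ lie in the merge-free normal form while $v'\in\pred_\omerge(v)$ and $w'\in\pred_\omerge(w)$ lie in $\calP[\alpha\backslash t]$; hence the domination equivalence is really a statement about the reduction $\mergered$ alone, and the substitution $[\alpha\backslash t]$ enters only as the origin of the merge nodes being contracted. Writing the sequence as $\calP[\alpha\backslash t]=\calP_0\stackrel{\omerge}{\mapsto}\calP_1\stackrel{\omerge}{\mapsto}\cdots\stackrel{\omerge}{\mapsto}\calP_m=\normf{\calP[\alpha\backslash t]}$ and recalling that $\pred_\omerge$ is the transitive closure of the one-step $\pred^0_\omerge$, it suffices to prove for each step $\calP_k\stackrel{\omerge}{\mapsto}\calP_{k+1}$ that, for expansions $v,w$ in $\calP_{k+1}$ and $v'\in\pred^0_\omerge(v)$, $w'\in\pred^0_\omerge(w)$ taken along a common contraction, $v$ dominates $w$ iff $v'$ dominates $w'$, and that $\rk(v')=\rk(v)$.

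For the domination equivalence I would go through the four rules of Definition~\ref{def:omerge_reduction}. Rule~1 removes a literal and involves no expansions, so it is vacuous. Rule~2 only rearranges the two merged trees underneath a common connective; the expansions of the four subtrees are in bijection with their predecessors, and domination both within a subtree and from an expansion external to the contracted redex is visibly preserved. The essential cases are rules~3 and~4, where two expansions are identified: in rule~4 the two $\exists$-expansions sharing a term $r_i$ are merged into one node dominating $E_{1,i}\omerge E_{2,i}$, and in rule~3 the two $\forall$-expansions $+^{\alpha_1},+^{\alpha_2}$ are identified into $+^{\alpha_1}$ dominating $E_1\omerge E_2$, accompanied by the global renaming $[\alpha_2\backslash\alpha_1]$. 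In each case the merged expansion dominates precisely the expansions whose $\pred^0_\omerge$-predecessors are dominated by one of the two identified predecessors; tracing a dominated expansion back to the contraction that produced its predecessor shows that the corresponding pair $(v',w')$ stands in the domination relation, and conversely that domination by either identified predecessor is reflected upward.

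The delicate point — and the main obstacle — is rule~3 together with the global renaming $[\alpha_2\backslash\alpha_1]$ it triggers. After this renaming, $\exists$-expansions in $E_1$ and $E_2$ whose terms previously differed only in the two eigenvariables become syntactically equal and are subsequently contracted by rule~4 (acquiring two predecessors), whereas genuinely distinct witnesses remain separate (retaining a single predecessor on one side). One has to verify that the predecessor relation tracks this bookkeeping correctly, so that the pairs $(v',w')$ linked by the contraction history are exactly those realizing domination, the merged redexes being copies of a common source. This is where the hypothesis that $\alpha$ does not occur in any cut-formula of $\calP$ is used: it guarantees that $[\alpha\backslash t]$ leaves the shallow formula of every cut unchanged, so that cuts — and hence the critical expansions, which reside in cuts — are in a shallow-formula-preserving correspondence throughout the reduction, and the renamings triggered by rule~3 stay confined to matching subtrees. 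This is the only place where the cut-formula restriction is needed.

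Finally, the rank identity $\rk(v')=\rk(v)$ I would obtain almost independently of the domination analysis. Every rule of Definition~\ref{def:omerge_reduction}, and likewise the clauses of Definition~\ref{def:subst}, preserve the shallow subformula expanded at a node: rules~3 and~4 identify two expansions of the same formula ($\forall x\,A$, resp.\ $\exists x\,A$ at a common term), and a first-order substitution $[\alpha\backslash t]$ never alters the quantifier structure of a shallow formula. Since $\rk(w)$ is, by its recursive definition through domination, the height of the tree of expansions dominated by $w$, and since in a (merge-free) expansion tree every quantifier occurrence of $\Shallow(w)$ is expanded, $\rk(w)$ depends only on the quantifier-nesting depth of $\Shallow(w)$ and is in particular the same for any two expansions of the same shallow formula. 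As $v'$ and $v$ expand the same subformula, $\rk(v')=\rk(v)$ follows, and the cut-formula hypothesis again ensures that this matching is undisturbed for the critical expansions occurring in cuts.
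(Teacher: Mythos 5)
Your treatment of the domination equivalence follows essentially the paper's own route: reduce to single $\stackrel{\omerge}{\mapsto}$-steps, do a case distinction over the four rules of Definition~\ref{def:omerge_reduction}, handle rule~3 via the triggered renaming, and extend to $\pred_\omerge$ by transitive closure; your qualification that $v',w'$ must be taken along a common contraction history is a correct and in fact necessary sharpening (for independently chosen predecessors the biconditional fails already in rule~4, where the merged $+^{r_i}$ has two predecessors but an expansion inside $E_{1,i}$ has only one). One mechanical inaccuracy in your rule-3 discussion: after the renaming $\unsubst{\alpha_2}{\alpha_1}$, the identification of $\exists$-expansions happens already inside the substitution itself (case~3 of Definition~\ref{def:subst} is what introduces the $\omerge$-nodes), not only through subsequent rule-4 contractions; consequently the one-step predecessor map of a rule-3 step composes with $\pred_s$, which is why the paper proves the substitution version of the domination claim (``$w$ dominates $v$ in $\calP\sigma$ iff $\pred_s(w)$ dominates $\pred_s(v)$'') as a separate first step. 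Your sketch uses this implicitly but never establishes it.

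The genuine gap is your proof of $\rk(v')=\rk(v)$. You derive it from the claim that in a merge-free expansion tree every quantifier occurrence of $\Shallow(w)$ is expanded, so that $\rk(w)$ is a function of the quantifier-nesting depth of $\Shallow(w)$. This is false in the paper's formalism: an $\exists$-node may carry an \emph{empty} set of expansions --- this is exactly the coercion $A^\mathrm{E}$ used to represent weakening in the translation from \LK{} --- and, even apart from weakening, two subtrees with the same shallow formula need not be expanded to the same depth (compare the unmatched $+^{s_j}$- and $+^{t_j}$-parts in case~4 of Definition~\ref{def:omerge_reduction}). Hence two expansions with identical shallow formulas can have different ranks; if your invariant were true, Lemma~\ref{lem:equal_rank} would be a triviality, whereas the paper proves it by induction on the cut-formula. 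The repair is immediate and is what the paper does: since $\rk$ is defined recursively through domination, the rank identity follows from the domination biconditional you have already established --- by induction on rank, the expansions dominated by $v$ in $\normf{\calP\unsubst{\alpha}{t}}$ correspond via $\pred_\omerge$, along the matched history, exactly to the expansions dominated by $v'$, so the maxima in the recursive definition agree. As written, however, your rank argument rests on a false structural invariant and does not go through.
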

\begin{proof}
By induction on the definition of $P\sigma$ and
$\calP_1 \stackrel{\omerge}{\mapsto} \calP_2$.
%, showing that
%$v$ dominates $w$ iff a predecessor $v'$ of $v$ dominates a 
%predecessor $w'$ of $w$. 
See the appendix for details.
\end{proof}
% \begin{proof}
%   By induction on the definition of $\calP\sigma$,
%   it is easy to show that $n$ dominates $m$ in $\calP\sigma$ iff 
%   $\pred_s(n)$ dominates $\pred_s(m)$.
%   Next, we show that if $\calP_1 \stackrel{\omerge}{\mapsto} \calP_2$
%   then for $n,m$ nodes in $\calP_2$ and $l\in\pred_\omerge(n),k\in\pred_\omerge(m)$, $n$ dominates $m$
%   iff $l$ dominates $k$.
%   This is obvious in case 2 of the definition. In case 1, we
%   have $E[L \omerge L] \stackrel{\omerge}{\mapsto} E[L]$, and it suffices
%   to observe that a node $n$ dominates $L$ in $E[L]$ iff all $m\in\pred_\omerge(n)$
%   dominate both occurrences of $L$ in $E[L \omerge L]$.
%   In cases 3 and 4, we reason analogously, using the result for $\pred_s$ we just proved
%   for case 3.
%   Finally, we extend the result to $\pred$ by induction on its definition.
%   $\rk(l)=\rk(n)$ follows immediately from the first statement.
% \end{proof}
%
\begin{lemma}\label{lem:merge_deg}
  Let $\calP \mergered \calP'$ and $v$ be an expansion
in $\calP'$ and $v'\in\pred_\omerge(v)$. Then  
%{\em TODO: do not prove deg?}
%  $\deg(l)=\deg(n)$
%  and 
$\rk(v')=\rk(v)$.
\end{lemma}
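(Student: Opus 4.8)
The plan is to reduce the multi-step claim $\calP \mergered \calP'$ to the single-step case and then invoke the preceding lemma. First I would observe that $\mergered$ is the reflexive and transitive closure of $\stackrel{\omerge}{\mapsto}$, and that $\pred_\omerge$ is by definition the reflexive and transitive closure of $\pred^0_\omerge$. Hence it suffices to prove the statement for a single step $\calP \stackrel{\omerge}{\mapsto} \calP'$ with $v' \in \pred^0_\omerge(v)$: the general case then follows by an induction on the length of the reduction sequence, composing the rank-equalities along the chain of predecessors (the reflexive base case being trivial since then $v' = v$).

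For the single-step case I would proceed by a case analysis on which of the four rules of Definition~\ref{def:omerge_reduction} was applied, exactly as the proof of Lemma~\ref{lem:subst_rank_preserve} does for the $\mergered$-part of its induction. In cases~1 and~2 (literal and binary connective) the reduction only rearranges or deletes merge nodes without touching the domination structure among genuine expansions, so $\rk$ is manifestly preserved. In case~4 (the $\exists$-rule) the reduction merges the subtrees sitting under shared expansion terms and copies the remaining ones, again without altering which expansions dominate which; since $\rk(w)$ is defined purely through the domination relation, and domination of predecessors matches domination of successors, the rank is unchanged. The one rule that performs a global eigenvariable renaming is case~3 (the $\forall$-rule), where $\unsubst{\alpha_2}{\alpha_1}$ is applied to the whole pre-proof; here I would note that this renaming is a bijective substitution on eigenvariables and therefore cannot change domination, so $\rk$ is still preserved.

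The main obstacle, and the reason this is not entirely immediate, is bookkeeping the interaction between the local merge rule and the global substitution it triggers in case~3. The cleanest route is to recognize that Lemma~\ref{lem:subst_rank_preserve} has already done exactly this work: its proof is an induction on both the definition of $\calP\sigma$ and the relation $\stackrel{\omerge}{\mapsto}$, and it establishes precisely that $\rk(v') = \rk(v)$ (together with preservation of domination) across such steps. I would therefore structure the argument so that the present lemma is a corollary of Lemma~\ref{lem:subst_rank_preserve} lifted along the transitive closure, rather than redoing the case analysis from scratch. The only point requiring a little care is that Lemma~\ref{lem:subst_rank_preserve} carries the hypothesis that $\alpha$ does not occur in any cut-formula; I would check that the eigenvariables identified by the $\forall$-rule in case~3 do not violate this, which holds because the merge is applied to genuine $\forall$-expansions whose eigenvariables, by regularity, are distinct from the variables occurring in cut-formulas in the relevant subtree. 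Since in all cases domination is preserved and $\rk$ depends only on domination, $\rk(v') = \rk(v)$ follows.
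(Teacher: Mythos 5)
Your overall strategy coincides with the paper's: the paper proves Lemma~\ref{lem:merge_deg} precisely ``by induction on a $\mergered$-sequence of $\calP$'', with the single-step work outsourced to the case analysis inside the proof of Lemma~\ref{lem:subst_rank_preserve} (which treats each rule of Definition~\ref{def:omerge_reduction} and extends from $\pred^0_\omerge$ to $\pred_\omerge$ by induction). So your decomposition into single steps composed along the chain of predecessors, with the case-3 renaming handled via the substitution part of Lemma~\ref{lem:subst_rank_preserve}, is essentially the intended proof.

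However, one justification in your last paragraph is wrong: you discharge the hypothesis of Lemma~\ref{lem:subst_rank_preserve} (that the substituted variable occurs in no cut-formula) by appealing to regularity, claiming the identified eigenvariables are ``distinct from the variables occurring in cut-formulas.'' Regularity only says that distinct $\forall$-expansions have distinct eigenvariables; it says nothing about cut-formulas, and in this formalism cut-formulas routinely \emph{do} contain eigenvariables --- the paper stresses this explicitly when discussing the \LKE{} eigenvariable condition. So when rule~3 applies $\unsubst{\alpha_2}{\alpha_1}$ globally, $\alpha_2$ may well occur in a cut-formula, and Lemma~\ref{lem:subst_rank_preserve} \emph{as stated} is not applicable. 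The repair is easy and is what the paper implicitly does: the cut-formula hypothesis is never used in the portion of that lemma's proof you need. Domination is a purely tree-structural relation, and the proof shows by induction on Definition~\ref{def:subst} that $w$ dominates $v$ in $\calP\sigma$ iff $\pred_s(w)$ dominates $\pred_s(v)$, for an arbitrary substitution; since $\rk$ is defined from domination alone, rank preservation follows without any assumption on cut-formulas. You should therefore cite that argument directly rather than the lemma's statement. A second, minor inaccuracy: the renaming $\unsubst{\alpha_2}{\alpha_1}$ is not bijective (it identifies two variables) and it \emph{can} change the tree, by forcing $\oMerge$-nodes under $\exists$-expansions whose terms become equal; what holds is preservation of domination relative to predecessors, which is exactly what the structural induction delivers and is all the rank computation requires.
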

\begin{proof}
By induction on a $\mergered$-sequence of $\calP$. 
%In particular,
%in case two quantifier nodes $w,w'$ are merged into a quantifier
%node $n$, then $\deg(w)=\deg(w')$ since they are dominated
%by the same expansions and occur in the same cut (if any).
\end{proof}
% \begin{proof}
% By induction on a $\mergered$-sequence of $\calP$. In particular,
% in case two quantifier nodes $n_1,n_2$ are merged into a quantifier
% node $n$, then $\Shallow(n)=\Shallow(n_1)=\Shallow(n_2)$ and $\VC(n)=\VC(n_1)=\VC(n_2)$.
% \end{proof}
%
\begin{lemma}\label{lem:order_subst_merge}
Let $r=\rk(\calP)$ and $\sigma$ a substitution. Then $o(\calP\sigma,r)=o(\calP,r)$. 
Furthermore, let $E_1,E_2$ be expansion trees and
$E=\normf{(E_1\omerge E_2)}$,
then $o(E,r)= o(E_1,r)=o(E_2,r)$.
%Furthermore,
%let $\calP_i=\calQ_i,\calR$ for $1\leq i \leq k$ and fixed $\calR$,
%let $\alpha$ be a variable such that $\alpha\notin\Var(\Shallow(n))$
%for all $\forall$-nodes $n$ of rank $r$ in $\calP$, 
%and let $\calP=\bigcup_{i=1}^k\calP_i[\alpha\backslash t_i]$.
%Then $o(\calP,r)\leq\sum_{i=1}^k o(\calQ_i,r)+o(\calR,r)$.
\end{lemma}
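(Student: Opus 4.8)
The plan is to prove Lemma~\ref{lem:order_subst_merge} by reducing both claims to the rank-invariance results already established, namely Lemma~\ref{lem:subst_rank_preserve} and Lemma~\ref{lem:merge_deg}. Recall that $o(\calP,r)$ counts the critical $\forall$-expansions of rank exactly $r$, where $r=\rk(\calP)$. The key observation is that the predecessor functions $\pred_s$ and $\pred_\omerge$ set up a correspondence between expansions before and after a substitution or merge, and both lemmas tell us that this correspondence preserves rank. So the entire argument is really a matter of checking that this correspondence restricts to a \emph{bijection} on the relevant sets of critical $\forall$-expansions, and that it preserves the two properties that define membership in those sets: being a $\forall$-expansion, being critical (i.e.\ occurring in a cut), and having rank $r$.

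For the substitution claim $o(\calP\sigma,r)=o(\calP,r)$, I would first pass to the merge-normal form and work with $\normf{\calP\sigma}$, using that $\mergered$ does not change $o(\cdot,r)$ (which follows from Lemma~\ref{lem:merge_deg} together with the fact that normalization removes merge nodes). Then I would argue that $\pred_\omerge\circ\pred_s$ associates to each critical $\forall$-expansion $w$ in $\normf{\calP\sigma}$ a predecessor $w'$ in $\calP$ with $\rk(w')=\rk(w)$ by Lemma~\ref{lem:subst_rank_preserve}, and that this assignment is a bijection between the rank-$r$ critical $\forall$-expansions on the two sides. The property ``is a $\forall$-expansion'' is preserved because substitution applied to a $\forall$-expansion yields a $\forall$-expansion (clause~4 of Definition~\ref{def:subst}), and ``is critical'' is preserved because substitution acts componentwise on cuts and maps cut-trees to cut-trees. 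The hypothesis of Lemma~\ref{lem:subst_rank_preserve} that $\alpha$ does not occur in any cut-formula must be arranged; I would note that the rank of a $\forall$-expansion is a purely local quantity (defined via domination, see Definition of $\rk$) so it is unaffected by whether the substituted variable appears in a shallow cut-formula, and handle the general substitution by decomposing it into the relevant single-variable renamings on eigenvariables.

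For the merge claim $o(E,r)=o(E_1,r)=o(E_2,r)$ with $E=\normf{(E_1\omerge E_2)}$, I would again invoke Lemma~\ref{lem:merge_deg}: every critical $\forall$-expansion $v$ in $E$ has a predecessor $v'\in\pred_\omerge(v)$ in $E_1\omerge E_2$ with $\rk(v')=\rk(v)$, and this predecessor lies in exactly one of $E_1,E_2$. The subtlety is that $\pred_\omerge$ may be two-valued (cases~3 and~4 of Definition~\ref{def:omerge_reduction}), so I must check that when two expansions are identified by the merge, the corresponding $\forall$-expansions of $E_1$ and $E_2$ are genuinely ``the same'' expansion appearing in both trees of a shared shallow formula; since $\Shallow(E_1)=\Shallow(E_2)$ and the merge identifies expansions with equal expansion terms, the $\forall$-expansion structure is literally common to both, so each critical rank-$r$ $\forall$-expansion of $E$ corresponds to one such expansion in $E_1$ \emph{and} one in $E_2$. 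This gives $o(E,r)=o(E_1,r)$ and $o(E,r)=o(E_2,r)$ simultaneously.

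The main obstacle I anticipate is the bookkeeping around the predecessor functions being multi-valued and around the hypothesis of Lemma~\ref{lem:subst_rank_preserve}. In case~3 of the substitution definition several $\exists$-expansions can collapse into one, and in the merge cases~3 and~4 two subtrees are combined; in each of these situations I must verify that the merge only ever identifies \emph{matching} expansions of \emph{equal rank} and never creates a new critical $\forall$-expansion or destroys one, so that the counting function $o(\cdot,r)$ is preserved on the nose rather than merely bounded. Once the rank-preservation supplied by the two preceding lemmas is combined with this structural bijection argument, the equalities follow, so the bulk of the work is organizational rather than requiring new estimates.
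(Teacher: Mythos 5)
There is a genuine gap in the merge half of your argument, and it propagates into the substitution half as well. You justify the two-sided correspondence by claiming that, since $\Shallow(E_1)=\Shallow(E_2)$, ``the $\forall$-expansion structure is literally common to both'' trees. That is false, and the place where it fails is exactly where the hypothesis $r=\rk(\calP)$ is needed. Two expansion trees with the same shallow formula agree only down to the first quantifier nodes: below an $\exists$-node their expansions may differ arbitrarily, and case 4 of Definition~\ref{def:omerge_reduction} merges only the subtrees under \emph{common} expansion terms, keeping subtrees under unmatched terms side by side. Thus if $E_1=\exists x\,A+^{s}F_1$ and $E_2=\exists x\,A+^{t}F_2$ with $s\neq t$, then $\normf{(E_1\omerge E_2)}$ contains both $F_1$ and $F_2$ intact, and $o(E,r')=o(E_1,r')+o(E_2,r')$ for the ranks $r'$ of $\forall$-expansions inside them; the asserted equality is simply false at non-maximal ranks, so no proof that fails to use maximality of $r$ can succeed. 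The missing idea, which is the crux of the paper's own proof, is Lemma~\ref{lem:uppermost}: since rank strictly increases along domination, a critical $\forall$-expansion of maximal rank $r=\rk(\calP)$ is dominated by no expansion and therefore sits above every quantifier node of its tree; at such uppermost positions the structure of both trees \emph{is} forced by the common shallow formula (exactly one $\forall$-expansion per uppermost $\forall$-position), so the rank-$r$ expansions are merged in matched pairs, one from each side, with rank preserved by Lemma~\ref{lem:merge_deg}. This, and only this, yields $\pred_\omerge(v)=\{w,w'\}$ with $w$ in $E_1$ and $w'$ in $E_2$ for every rank-$r$ $\forall$-node $v$ of $E$, whence the equality of counts.

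The same omission weakens your substitution paragraph: $\calP\sigma$ contains merge nodes $\oMerge E_i\sigma$ created by case 3 of Definition~\ref{def:subst}, and normalizing them could in principle identify two rank-$r$ $\forall$-expansions, destroying the bijection you assert without argument between critical rank-$r$ $\forall$-expansions of $\normf{\calP\sigma}$ and of $\calP$. Again the uppermost property is what saves the claim: every merge node introduced by substitution sits strictly below an $\exists$-expansion, while critical rank-$r$ $\forall$-expansions are dominated by nothing and hence never lie inside the merged subtrees, so they are copied exactly once and their rank is preserved by Lemma~\ref{lem:subst_rank_preserve}. Your dismissal of that lemma's hypothesis (that the substituted variable not occur in a cut-formula) via ``locality of rank'' is also too quick, since rank preservation there is established through the entire merge-normalization rather than locally, but this is a minor blemish next to the unused maximality of $r$, without which the central counting step of your proposal does not go through.
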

\begin{proof}
Using Lemma~\ref{lem:subst_rank_preserve} for substitution, and for
merge the fact that expansions of rank $r$ are uppermost
and hence merged. See Appendix for details.
\end{proof}
%
%Having established that our operations preserve the rank of nodes,
We are ready to state the main tool of the termination proof. 
It shows that when reducing an appropriate quantified cut, the
number of $\forall$-expansions with maximal rank decreases, while 
the maximal rank does not increase. The difficulty lies in showing that
while the expansion proof with merge that is constructed by the
cut-reduction rule may, in fact, contain more $\forall$-expansions of
maximal rank, this increase will be eliminated by the merge-normalization.
\begin{lemma}\label{lem:reduction:new}
Let $\calP_1\mapsto\calP_2$ by the quantifier reduction-rule, 
let $r=\rk(\calP_1)$ and denote
the reduced $\forall$-expansion by $w$. 
If $w\in M(\calP)$ and $\deg(w)$ is maximal in $M(\calP_1)$, then
%$\Shallow(\calP) = \Shallow(\calP')$, 
$\rk(\calP_2)\leq r$
%$\deg(\calP',r)\leq\deg(\calP,r)$, 
and $o(\calP_2,r)= o(\calP_1,r)-1$.
\end{lemma}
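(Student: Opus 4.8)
The plan is to analyze the quantifier reduction rule

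$$\{\exists x\,A +^{t_1}E_1\cdots+^{t_n}E_n,\forall x\,\bar A+^\alpha E\},\calP_1' \mapsto \calP_2$$

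and track what happens to the critical $\forall$-expansions of rank $r$ as we move from $\calP_1$ to the merge-normal form $\calP_2$. The reduced cut is destroyed, so the reduced $\forall$-expansion $w=q(\alpha)$ (which is critical of rank $r$ by hypothesis) disappears from the cut; the new cut $\{E_1\lor\cdots\lor E_n, E\eta_1[\alpha\backslash t_1]\land\cdots\land E\eta_n[\alpha\backslash t_n]\}$ no longer has $w$ at its top but instead the propositional connectives $\lor$ and $\land$. The first thing I would establish is that $\rk(\calP_2)\le r$: the only way new critical $\forall$-expansions of high rank could appear is inside the copies $E\eta_i[\alpha\backslash t_i]$ and inside the duplicated $\calP_1'\eta_i[\alpha\backslash t_i]$, but by Lemma~\ref{lem:subst_rank_preserve} substitution of $t_i$ for $\alpha$ (which does not occur in any cut-formula, since the cut on $\Shallow(w)$ is being eliminated and regularity/eigenvariable conditions forbid $\alpha$ elsewhere in cut-formulas) and by Lemma~\ref{lem:merge_deg} the subsequent $\mergered$-normalization both preserve rank. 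Since $w$ was degree-maximal in $M(\calP_1)$ and $\deg$ is order-preserving, no expansion dominated by $w$ can itself be a critical $\forall$-expansion of rank $r$, which is what rules out the copies contributing new rank-$r$ critical expansions.

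The heart of the argument is the count $o(\calP_2,r)=o(\calP_1,r)-1$. I would argue that $w$ is the \emph{only} critical $\forall$-expansion of rank $r$ that is removed, and that no genuinely new ones survive the merge. For the surviving expansions: the expansions of $E_1,\ldots,E_n$ and of $E$ that were \emph{dominated} by the reduced cut become expansions dominated by the new propositional cut or by its components, but their rank is governed by Lemma~\ref{lem:subst_rank_preserve} and Lemma~\ref{lem:merge_deg}, so each retains its rank under the substitutions $\eta_i[\alpha\backslash t_i]$ and under $\mergered$. The delicate point — and the one the lemma statement flags explicitly — is that the right component $E\eta_1[\alpha\backslash t_1]\land\cdots\land E\eta_n[\alpha\backslash t_n]$ creates $n$ renamed copies of $E$, so naively there are now $n$ copies of each $\forall$-expansion that was inside $E$, and similarly the $n$-fold duplication $\bigcup_i\calP_1'\eta_i[\alpha\backslash t_i]$ multiplies copies of $\forall$-expansions in $\calP_1'$. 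If any such copied $\forall$-expansion had rank $r$, the count would blow up rather than decrease by one.

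Here I would invoke the degree-maximality hypothesis together with Lemma~\ref{lem:order_subst_merge} to collapse these copies. The key observation is that any $\forall$-expansion of rank $r$ appearing inside $E$ or $\calP_1'$ would be \emph{strictly below} $w$ in the dependency order (it is dominated by $w$, or its eigenvariable feeds a term dominated by $w$), forcing $\deg$ strictly smaller than $\deg(w)$ by order-preservation; but maximality of $\deg(w)$ in $M(\calP_1)$ forbids this, so every copied $\forall$-expansion of rank $r$ must in fact be \emph{non-critical} — it lies outside all cuts in the copies, or the copies that would contain it as critical get identified. More precisely, the rank-$r$ critical expansions are \emph{uppermost}, so by the final clause of Lemma~\ref{lem:order_subst_merge} the $n$ merged copies arising from the $\bigcup_i\calP_1'\eta_i[\alpha\backslash t_i]$ portion get identified back to a single copy under $\mergered$, contributing $o(\calP_1',r)$ and not $n\cdot o(\calP_1',r)$. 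Combining: the new propositional cut contributes no rank-$r$ critical $\forall$-expansion at its top, the merged duplications contribute exactly the original count of rank-$r$ criticals among the non-reduced cuts, and the unique removed one is $w$ itself, giving the decrease by exactly one.

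The main obstacle I anticipate is making the copy-collapsing argument fully rigorous: one must show not merely that the copies \emph{can} be merged but that the merge-normalization genuinely identifies precisely the rank-$r$ critical $\forall$-expansions and does not spuriously preserve several of them, and simultaneously that it does not accidentally \emph{create} a rank-$r$ critical expansion through the interaction of substitution and merge. This is exactly what Lemmas~\ref{lem:subst_rank_preserve}, \ref{lem:merge_deg} and \ref{lem:order_subst_merge} are engineered to supply, so the real work is bookkeeping: tracking each rank-$r$ critical $\forall$-expansion of $\calP_2$ back through $\pred_\omerge$, $\pred_s$, and $\pred_c$ to a unique rank-$r$ critical expansion of $\calP_1$ distinct from $w$, and verifying the correspondence is a bijection onto the non-$w$ rank-$r$ criticals. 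I would carry out this predecessor-tracking case by case along the $\mergered$-sequence and the substitutions, deferring the routine verifications to the appendix.
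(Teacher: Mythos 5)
Your overall skeleton matches the paper's proof --- rank preservation via Lemmas~\ref{lem:subst_rank_preserve} and~\ref{lem:merge_deg}, and the recognition that the count argument hinges on the $n$-fold duplicated copies of the non-reduced cuts being collapsed by merge-normalization --- but the step that carries the whole lemma is wrong in your write-up, in two ways. First, you assert that $\alpha$ occurs in no cut-formula because ``regularity/eigenvariable conditions forbid'' it. This is false: cut-formulas of expansion proofs may, and typically must, contain eigenvariables of $\forall$-expansions elsewhere in the proof (this is exactly why clause~4 of the dependency relation, $v<^0 C$ when $\Shallow(C)$ contains the eigenvariable of $v$, exists). What is true --- and what the paper proves --- is the weaker, targeted claim that $\alpha\notin\Var(\Shallow(G))$ for those non-reduced cuts $G$ that contain a rank-$r$ $\forall$-expansion; and this is where degree-maximality actually enters: if $\alpha\in\Var(\Shallow(G))$, then $w<^0 G<^0 v$ for the rank-$r$ $\forall$-expansion $v$ in $G$, hence $\deg(v)>\deg(w)$ with $v\in M(\calP_1)$, contradicting maximality of $\deg(w)$. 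Your own degree argument runs in the opposite (and useless) direction: you argue that copied rank-$r$ expansions lie strictly \emph{below} $w$, forcing \emph{smaller} degree, and claim maximality of $\deg(w)$ ``forbids this'' --- but maximality forbids larger degrees, not smaller ones --- and your conclusion that every copied rank-$r$ $\forall$-expansion is ``non-critical'' is simply false: the $o(\calP_1,r)-1$ rank-$r$ criticals sitting in non-reduced cuts remain critical in every copy; the point is that their copies are \emph{merged back into one}, not that they vanish or lose criticality.

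Second, the merging of the cut copies is not automatic and cannot be extracted from Lemma~\ref{lem:order_subst_merge} alone, because the merge on pre-proofs only merges cuts \emph{with the same shallow formula} --- which is precisely what needs proving. The copy $G\eta_i\unsubst{\alpha}{t_i}$ has shallow formula $\Shallow(G)\eta_i\unsubst{\alpha}{t_i}$; besides $\alpha\notin\Var(\Shallow(G))$ one must show, for every eigenvariable $\beta\in\Var(\Shallow(G))$, that $w\not< q(\beta)$ (again via the degree argument: $w<q(\beta)<v$ would give the contradictory $\deg(w)<\deg(v)$). Only then does Lemma~\ref{lem:merge_top_evs} guarantee that merge-normalization re-identifies the renamed copies of $q(\beta)$, reversing $\eta_i$ on $\Shallow(G)$, so that the cut copies really acquire identical shallow formulas and are merged; and only after that does Lemma~\ref{lem:order_subst_merge} yield that the merged cut contributes the same rank-$r$ count as $G$ did. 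You flag this collapsing as ``bookkeeping'' to be deferred, but it is the mathematical core of the lemma, and with the inverted inequality in your maximality argument the deferred step, carried out as you describe it, would not go through.
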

\begin{proof}
We have $\rk(\calP_2)\leq r$ since the rank changes for no expansions by
Lemmas~\ref{lem:subst_rank_preserve}~and~\ref{lem:merge_deg}.
To show that $o(\calP_2,r)= o(\calP_1,r)-1$, it suffices to show that
for all non-reduced cuts $G\in\calP_1$ containing a $\forall$-expansion of rank $r$,
$\alpha\notin\Var(\Shallow(G))$ and if $\beta\in\Var(\Shallow(G))$ then
$w\not<q(\beta)$: If this is so, then $\Shallow(G)\unsubst{\alpha}{t_i}=\Shallow(G)$ 
and 
(as can be checked by induction)
%by Lemma~\ref{lem:merge_top_evs}, 
the regularization $\eta_i$ is reversed w.r.t.~$\Shallow(G)$
by the merge, and hence the cuts are merged. Therefore, by Lemma~\ref{lem:order_subst_merge},
their order stays the same. Furthermore, $E_i,E$ do not contain expansions of maximal
rank (since $w$ has maximal rank), and $w$ does not have a successor in $\calP_2$,
hence $o(\calP_2,r)= o(\calP_1,r)-1$.

To show the claim, consider a $\forall$-expansion $v$ of rank $r$ in $G$. %If $G$ is not
%a cut, then $\Shallow(G)$ is ground and the claim follows. 
If %$G$ is a cut and
$\alpha\in\Var(\Shallow(G))$, then $v>q(\alpha)=w$ and
therefore $\deg(v)>\deg(w)$, which contradicts maximality of $\deg(w)$. 
Similarly, $\beta\in\Var(\Shallow(G))$ implies
$q(\beta)<w$. Assuming $w<q(\beta)$ yields $w<v$ and again the contradictory
$\deg(w)<\deg(v)$.
\end{proof}

\begin{theorem}[Weak Normalization]\label{thm:weak_normalization}
For every expansion proof $\calP$ there is a cut-free expansion proof $\calP^*$
with $\Shallow(\calP) = \Shallow(\calP^*)$ and $\calP \rightarrow \calP^*$.
\end{theorem}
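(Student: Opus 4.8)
The plan is to prove weak normalization by exhibiting a terminating reduction strategy, using the complexity measures $\rk(\calP)$ and $o(\calP,r)$ as a termination ordering. The overall structure follows the classical proof of the first $\varepsilon$-theorem, adapted to expansion proofs via the lemmas established above.

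First I would set up the termination measure. To an expansion proof $\calP$ I associate the pair $(\rk(\calP), o(\calP,\rk(\calP)))$, ordered lexicographically (with the second component a natural number). The strategy is: given a $\calP$ that still contains a quantified cut, first apply Lemma~\ref{lem:lorland_normal} to obtain a $\lor\land$-normal proof $\calP^*$ with the same rank and order profile, so that all remaining non-atomic cuts are genuinely quantified. Then, among the critical $\forall$-expansions of maximal rank $r=\rk(\calP)$, i.e.\ the set $M(\calP)$, pick one, call it $w$, whose degree $\deg(w)$ is maximal in $M(\calP)$. Because $\deg$ is order-preserving with respect to the dependency relation, such a degree-maximal $w$ is $<$-minimal among the critical $\forall$-expansions of rank $r$, which is exactly the hypothesis needed to fire the quantifier reduction-rule on the cut containing $w$.

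Next I would analyze what one such reduction step does to the measure. By Lemma~\ref{lem:reduction:new}, reducing the cut on this degree-maximal $w$ yields $\calP_2$ with $\rk(\calP_2)\leq r$ and $o(\calP_2,r)=o(\calP_1,r)-1$. Thus either the rank strictly drops, or the rank stays equal to $r$ and the order $o(\cdot,r)$ strictly decreases by one. In both cases the lexicographic measure $(\rk, o(\cdot,\rk))$ strictly decreases. The atomic and propositional cut-reductions are handled separately: atomic cuts are simply removed, contributing no $\forall$-expansions, and the propositional reductions are absorbed into the $\lor\land$-normalization via Lemma~\ref{lem:lorland_normal}, which preserves both $\rk$ and $o(\cdot,r)$ and hence does not increase the measure. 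Since $\rk(\calP)=0$ exactly when there are no critical $\forall$-expansions, reaching measure $(0,0)$ (together with $\lor\land$-normality) means all remaining cuts are atomic, which can be dropped to obtain a cut-free proof. Shallow-preservation follows at each step from Lemmas~\ref{lem:lorland_normal} and~\ref{lem:red_pres_proof}, and the property of being an expansion proof is preserved throughout by the same lemmas.

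The main obstacle, and the conceptual heart of the argument, is ensuring that firing the quantifier-reduction on a degree-maximal $w$ does not accidentally create new critical $\forall$-expansions of rank $r$ that survive. This is precisely where the choice of a degree-maximal $w$ is essential and where Lemma~\ref{lem:reduction:new} does the work: the danger is that the copies $\calP\eta_i\unsubst{\alpha}{t_i}$ introduced on the right-hand side multiply the $\forall$-expansions of rank $r$ occurring in other cuts $G$. The resolution is that for any such $G$ containing a rank-$r$ $\forall$-expansion, the degree-maximality of $w$ forces $\alpha\notin\Var(\Shallow(G))$ and the side-condition $w\not<q(\beta)$ for $\beta\in\Var(\Shallow(G))$; consequently the substitution acts trivially on $\Shallow(G)$, the regularizing renaming $\eta_i$ is undone by the subsequent merge (invoking Lemmas~\ref{lem:subst_rank_preserve}, \ref{lem:merge_deg} and~\ref{lem:order_subst_merge}), and the duplicated copies of $G$ merge back together. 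Hence the only net effect on the count of rank-$r$ critical $\forall$-expansions is the removal of $w$ itself, giving the decrease by exactly one. Assembling these pieces, a straightforward induction on the lexicographic measure yields the desired cut-free $\calP^*$ with $\calP\rightarrow\calP^*$.
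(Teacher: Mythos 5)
Your proposal is correct and follows essentially the same route as the paper's proof: exhaustive propositional reduction to a $\lor\land$-normal form via Lemma~\ref{lem:lorland_normal}, quantifier-reduction of a degree-maximal element of $M(\calP)$, and termination by the lexicographic measure $(\rk(\calP),\, o(\calP,\rk(\calP)))$ using Lemma~\ref{lem:reduction:new}, which is exactly the paper's ``double induction''. One minor imprecision: degree-maximality makes $w$ \emph{maximal} (not minimal) with respect to $<_\calP$ among the rank-$\rk(\calP)$ critical $\forall$-expansions, and this is the hypothesis of the counting Lemma~\ref{lem:reduction:new} rather than the applicability condition of the quantifier rule, which instead comes from $\lor\land$-normality together with the fact that no expansion dominates an element of $M(\calP)$.
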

\begin{proof}
First, we apply the propositional cut-reduction
rules exhaustively to $\calP$ to obtain an $\lor\land$-normal expansion proof $\calP^*$
(Lemma~\ref{lem:lorland_normal}). If $\calP^*$ is cut-free, we are done. Otherwise,
$M(\calP)$ contains a $\forall$-expansion. Let $n\in M(\calP)$
be a $\forall$-expansion such that $\deg(n)$ is maximal in $M(\calP)$. 
% Then by
% Lemma~\ref{lem:uppermost} and $\lor\land$-normality, 
Since $\deg(n)$ is maximal and $\calP^*$ is $\lor\land$-normal, no node dominates $n$. 
Hence we may apply the quantifier-reduction rule to $n$, which decreases $o(\calP^*,r)$
by Lemma~\ref{lem:reduction:new}.
At some point, $o(\calP^*,r)=0$, and the next cut-reduction will be applied to a $\forall$-expansion
of rank $<r$. Since, by Lemmas~\ref{lem:lorland_normal}~and~\ref{lem:reduction:new}, 
$\rk(\calP^*)$ never increases, we conclude termination of the strategy by double induction.
Finally, $\Shallow(\calP)=\Shallow(\calP^*)$ by Lemma~\ref{lem:red_pres_proof}.
\end{proof}
\subsection{Strong Normalization}\label{sec:strong_normalization}
%TODO: if possible write that we conjecture strong SN, try with non-SN examples
%of Willem and Richard, say why they fail
%
Having shown weak normalization of the cut-reduction rules in the previous
section, it is important to turn to the question of strong normalization,
i.e.~whether {\em all} reduction sequences are of finite length. We conjecture
that our cut-reduction rules are indeed strongly normalizing, and present some
evidence for this claim by discussing how our reduction rules behave on
a translation of the example~\cite[Figure 14]{Heijltjes10Classical}, which causes
a failure of strong normalization in the setting of proof forests.

This example can be translated as an expansion proof
of the form $\calP=(C_1^+,C_1^-),(C_2^+,C_2^-),\calP'$ (where $\calP'$ is cut-free) with
\[
\begin{array}{lll}
C_1^+=&\exists x\,\forall y\, \dual{P(x,y)} &+^c \forall y\, \dual{P(c,y)} +^\gamma \dual{P(c,\gamma)}\\
      &                                     &+^\gamma \forall y\, \dual{P(\gamma,y)} +^\delta \dual{P(\gamma,\delta)}\\
C_1^-=&\forall x\,\exists y\, P(x,y) &+^\alpha\exists y\,P(\alpha,y) +^\beta P(\alpha,\beta)\\
  \\
C_2^+=&\exists x\,\forall y\, \dual{Q(x,y)} &+^c \forall y\, \dual{Q(c,y)} +^\epsilon \dual{Q(c,\epsilon)}\\
      &                                     &+^\epsilon \forall y\, \dual{Q(\epsilon,y)} +^\iota \dual{Q(\epsilon,\iota)}\\
C_2^-=&\forall x\,\exists y\, Q(x,y) &+^\beta\exists y\,Q(\beta,y) +^\alpha Q(\beta,\alpha).\\
\end{array}
\]
It can be checked that any application of our cut-reduction rules to such a proof
terminates. This is essentially due to the different treatment of bridges (i.e.~dependencies
between different sides of a cut, see Section~\ref{sec:cut_reduction_steps}) in our
formalism: at the core of the non-termination of~\cite[Figure 14]{Heijltjes10Classical}
lies a single bridge~\cite[Figure 16]{Heijltjes10Classical} which induces a cycle. In our
setting, if $\calP$ is an expansion proof containing a single cut, and $\calP\mapsto\calP'$
via a quantifier reduction rule, then $\calP'$ still contains only a single cut.

Indeed, a reduction sequence similar to the non-terminating one described
in~\cite[Figure 17]{Heijltjes10Classical} exists, and it ends in such an expansion proof 
containing only a single cut which is, also in our setting, a bridge. The cut
reduces then to a single propositional cut, the elimination of which is easily 
seen to be strongly normalizing.

In the setting of proof forests, the non-termination due to bridges is handled
by adding a pruning reduction.
One explanation for the fact that in our setting,
we are able to get by without such a reduction, is the use of the merge in the
definition of the cut-reduction rules. The merge has the advantage that it
is very natural, it is an extension of the merge for cut-free expansion 
proofs from~\cite{Miller87Compact}, and it is useful also in applications
not related to cut-elimination, as in the proof of Theorem~\ref{thm:completeness}.
\subsection{Confluence}
It is well-known that cut-elimination (and similar procedures) in classical logic are typically
not confluent, see e.g.~\cite{Urban00Classical,Ratiu12Exploring,Baaz05Experiments}
for case studies and~\cite{Baaz11Nonconfluence,Hetzl12Computational} for asymptotic results.
Neither the proof forests of~\cite{Heijltjes10Classical} nor the Herbrand nets 
of~\cite{McKinley13Proof} have a confluent reduction. The situation is analogous
in our formalism: the reduction is not confluent. In fact, one can use the same example
to demonstrate this; let
\begin{align*}
\calP = & \{ \exists x\, A +^s A\unsubst{x}{s} +^t A\unsubst{x}{t}, \forall x\, \dual{A} +^\alpha \dual{A}\unsubst{x}{\alpha} \},\\
& \{ \exists x\, B +^\alpha B\unsubst{x}{\alpha} , \forall x\, \dual{B} +^\beta \dual{B}\unsubst{x}{\beta} \},\\
& \exists x \exists y\, C +^\alpha ( \exists y\, C\unsubst{x}{\alpha} +^\beta C\sop\sel{x}{\alpha},\sel{y}{\beta}\scl ).
\end{align*}
which is the translation of~\cite[Figure 12]{Heijltjes10Classical} into an expansion
proof with cut. Then it can be verified by a quick calculation that the choice
of reducing either the cut on $A$ or that on $B$ first determines which of
two normal forms is obtained.

However cut-elimination in classical logic can be shown confluent on the level
of the (cut-free) expansion tree on a certain class of proofs~\cite{Hetzl12Herbrand}.
For future work we hope to use such techniques for describing a confluent reduction
in expansion proofs whose normal form is unique and most general in the sense
that it contains all other normal forms as sub-expansions.

\section{Conclusion}

In this paper we have presented expansion proofs with cut
for full first-order logic including non-prenex formulas. Our definitions
extend the existing notion of cut-free expansion proofs in a natural
way. We have given a cut-elimination procedure and proved weak normalization;
strong normalization remains an open problem. Our proof of weak normalization
is inspired by the $\varepsilon$-calculus which allowed to cover also the non-prenex
case without technical difficulties. The complex object-level syntax
of the $\varepsilon$-calculus is avoided in our work by taking care of the
mutual dependencies of variables by the merge operation of expansion trees.

It should be noted that the $\varepsilon$-calculus is, in a sense, more general than expansion-proofs
since there are formulas in the $\varepsilon$-language which do not arise
by translation from usual formulas. But of course, our objective is not to create
a general formalism, but rather to find a good model of cut-elimination for 
the classical first-order sequent calculus! For this purpose, we believe 
expansion proofs with cut are very promising, as they are compact, focus on
the first-order level of proofs, and admit natural cut-reduction rules
which are weakly normalizing --- and perhaps even strongly normalizing.

{\bf Acknowledgements.}
The authors would like to thank D.~Miller and K.~Chaudhuri, M.~Baaz, W.~Heijltjes
and R.~McKinley for many helpful discussions about expansion trees, the
$\varepsilon$-calculus, proof forests and Herbrand nets respectively.

\bibliographystyle{plain}
\bibliography{references}

\begin{thebibliography}{10}

\bibitem{Avigad10Computational}
Jeremy Avigad.
\newblock The computational content of classical arithmetic.
\newblock In Solomon Feferman and Wilfried Sieg, editors, {\em Proofs,
  Categories, and Computations: Essays in Honor of Grigori Mints}, pages
  15--30. College Publications, 2010.

\bibitem{Baaz11Nonconfluence}
Matthias Baaz and Stefan Hetzl.
\newblock On the non-confluence of cut-elimination.
\newblock {\em Journal of Symbolic Logic}, 76(1):313--340, 2011.

\bibitem{Baaz05Experiments}
Matthias Baaz, Stefan Hetzl, Alexander Leitsch, Clemens Richter, and Hendrik
  Spohr.
\newblock {C}ut-{E}limination: {E}xperiments with {CERES}.
\newblock In Franz Baader and Andrei Voronkov, editors, {\em Logic for
  Programming, Artificial Intelligence, and Reasoning (LPAR) 2004}, volume 3452
  of {\em Lecture Notes in Computer Science}, pages 481--495. Springer, 2005.

\bibitem{Baaz12Complexity}
Matthias Baaz, Stefan Hetzl, and Daniel Weller.
\newblock On the complexity of proof deskolemization.
\newblock {\em Journal of Symbolic Logic}, 77(2):669--686, 2012.

\bibitem{Baaz00CutElimination}
Matthias Baaz and Alexander Leitsch.
\newblock Cut-elimination and {R}edundancy-elimination by {R}esolution.
\newblock {\em Journal of Symbolic Computation}, 29(2):149--176, 2000.

\bibitem{Barbanera96Symmetric}
Franco Barbanera and Stefano Berardi.
\newblock {A} {S}ymmetric {L}ambda {C}alculus for {C}lassical {P}rogram
  {E}xtraction.
\newblock {\em Information and Computation}, 125(2):103--117, 1996.

\bibitem{Berger02Refined}
Ulrich Berger, Wilfried Buchholz, and Helmut Schwichtenberg.
\newblock {R}efined {P}rogram {E}xtraction from {C}lassical {P}roofs.
\newblock {\em Annals of Pure and Applied Logic}, 114:3--25, 2002.

\bibitem{Buss95Herbrand}
Samuel~R. Buss.
\newblock {O}n {H}erbrand's {T}heorem.
\newblock In {\em Logic and Computational Complexity}, volume 960 of {\em
  Lecture Notes in Computer Science}, pages 195--209. Springer, 1995.

\bibitem{Chaudhuri12Systematic}
Kaustuv Chaudhuri, Stefan Hetzl, and Dale Miller.
\newblock {A Systematic Approach to Canonicity in the Classical Sequent
  Calculus}.
\newblock In Patrick C{\'e}gielski and Arnaud Durand, editors, {\em Computer
  Science Logic (CSL) 2012}, volume~16 of {\em Leibniz International
  Proceedings in Informatics (LIPIcs)}, pages 183--197. Schloss
  Dagstuhl--Leibniz-Zentrum fuer Informatik, 2012.

\bibitem{ChaudhuriXXIsomorphism}
Kaustuv Chaudhuri, Stefan Hetzl, and Dale Miller.
\newblock {The Isomorphism Between Expansion Proofs and Multi-Focused Sequent
  Proofs}.
\newblock submitted, 2012.

\bibitem{Coquand95Semantics}
Thierry Coquand.
\newblock A semantics of evidence for classical arithmetic.
\newblock {\em Journal of Symbolic Logic}, 60(1):325--337, 1995.

\bibitem{Curien00Duality}
Pierre-Louis Curien and Hugo Herbelin.
\newblock The {D}uality of {C}omputation.
\newblock In {\em Proceedings of the Fifth ACM SIGPLAN International Conference
  on Functional Programming (ICFP '00)}, pages 233--243. ACM, 2000.

\bibitem{Danos97New}
Vincent Danos, Jean-Baptiste Joinet, and Harold Schellinx.
\newblock {A} {N}ew {D}econstructive {L}ogic: {L}inear {L}ogic.
\newblock {\em Journal of Symbolic Logic}, 62(3):755--807, 1997.

\bibitem{Girard87Linear}
Jean-Yves Girard.
\newblock Linear logic.
\newblock {\em Theoretical Computer Science}, 50(1):1--101, 1987.

\bibitem{Heijltjes10Classical}
Willem Heijltjes.
\newblock {C}lassical proof forestry.
\newblock {\em Annals of Pure and Applied Logic}, 161(11):1346--1366, 2010.

\bibitem{Herbrand30Recherches}
Jacques Herbrand.
\newblock {\em Recherches sur la th{\'e}orie de la d{\'e}monstration}.
\newblock PhD thesis, Universit\'{e} de Paris, 1930.

\bibitem{Hetzl12Computational}
Stefan Hetzl.
\newblock {T}he {C}omputational {C}ontent of {A}rithmetical {P}roofs.
\newblock {\em Notre Dame Journal of Formal Logic}, 53(3):289--296, 2012.

\bibitem{Hetzl12Herbrand}
Stefan Hetzl and Lutz Stra{\ss}burger.
\newblock {H}erbrand-{C}onfluence for {C}ut-{E}limination in {C}lassical
  {F}irst-{O}rder {L}ogic.
\newblock In Patrick C{\'e}gielski and Arnaud Durand, editors, {\em Computer
  Science Logic (CSL) 2012}, volume~16 of {\em Leibniz International
  Proceedings in Informatics (LIPIcs)}, pages 320--334. Schloss
  Dagstuhl--Leibniz-Zentrum fuer Informatik, 2012.

\bibitem{Hilbert39Grundlagen2}
David Hilbert and Paul Bernays.
\newblock {\em {G}rundlagen der {M}athematik {II}}.
\newblock Springer, 1939.

\bibitem{Kohlenbach08Applied}
Ulrich Kohlenbach.
\newblock {\em Applied Proof Theory: Proof Interpretations and their Use in
  Mathematics}.
\newblock Springer, 2008.

\bibitem{Korovin09Instantiation}
Konstantin Korovin.
\newblock {Instantiation-Based Automated Reasoning: From Theory to Practice}.
\newblock In Renate~A. Schmidt, editor, {\em 22nd International Conference on
  Automated Deduction (CADE)}, volume 5663 of {\em Lecture Notes in Computer
  Science}, pages 163--166. Springer, 2009.

\bibitem{McKinley13Proof}
Richard McKinley.
\newblock {Proof nets for Herbrand's Theorem}.
\newblock {\em ACM Transactions on Computational Logic}, 14(1), 2013.

\bibitem{Miller87Compact}
Dale Miller.
\newblock A {C}ompact {R}epresentation of {P}roofs.
\newblock {\em Studia Logica}, 46(4):347--370, 1987.

\bibitem{Moser06Epsilon}
Georg Moser and Richard Zach.
\newblock {T}he {E}psilon {C}alculus and {H}erbrand {C}omplexity.
\newblock {\em Studia Logica}, 82(1):133--155, 2006.

\bibitem{Parigot92LambdaMu}
Michel Parigot.
\newblock $\lambda\mu$-{C}alculus: {A}n {A}lgorithmic {I}nterpretation of
  {C}lassical {N}atural {D}eduction.
\newblock In Andrei Voronkov, editor, {\em Logic Programming and Automated
  Reasoning,International Conference LPAR'92, Proceedings}, volume 624 of {\em
  Lecture Notes in Computer Science}, pages 190--201. Springer, 1992.

\bibitem{Ratiu12Exploring}
Diana Ratiu and Trifon Trifonov.
\newblock {E}xploring the {C}omputational {C}ontent of the {I}nfinite
  {P}igeonhole {P}rinciple.
\newblock 22(2):329--350, 2012.
\newblock Journal of Logic and Computation.

\bibitem{Urban00Classical}
Christian Urban.
\newblock {\em Classical Logic and Computation}.
\newblock PhD thesis, University of Cambridge, October 2000.

\bibitem{Urban01Strong}
Christian Urban and Gavin Bierman.
\newblock {S}trong {N}ormalization of {C}ut-{E}limination in {C}lassical
  {L}ogic.
\newblock {\em Fundamenta Informaticae}, 45:123--155, 2000.

\bibitem{Weller11Elimination}
Daniel Weller.
\newblock On the {E}limination of {Q}uantifier-{F}ree {C}uts.
\newblock {\em Theoretical Computer Science}, 412(49):6843--6854, 2011.

\end{thebibliography}

\vfill
\pagebreak
\section{Appendix}

In this appendix we describe the technical details that have been omitted from
the main paper. Note that the numbering of the results is non-monotonic: For those
results that are stated in the main text, we have retained the numbers, while we
introduce new ones for intermediate results presented exclusively in the appendix.
We also include some examples here that did not fit into the main paper.

\subsection{Basic Operations on Expansion Proofs}

\begin{example}\label{ex:subst}
Consider the following expansion tree $E$ and $E\sigma$ with $\sigma=[\alpha\backslash c]$.
\begin{align*}
E & = \exists x\forall y\, R(x,y) +^c ( \forall y\, R(c,y) +^\beta R(c,\beta)) +^\alpha ( \forall y\, R(\alpha,y) +^\gamma R(\alpha,\gamma))\\
E\sigma & = \exists x\forall y\, R(x,y) +^c ((\forall y\, R(c,y) +^\beta R(c,\beta)) \omerge (\forall y\, R(c,y) +^\gamma R(c,\gamma)))
\end{align*}
Note that the only expansion $w$ in $E$ that has no $w'$ in $E\sigma$ such that $\pred_s(w')=w$ is
the $+^\alpha$ expansion --- it is replaced by a $\omerge$-node.
\end{example}
By induction on the definition, it is easy to see that if $\sigma$ is a renaming
and $\calP$ an expansion tree (without merge), then $\calP\sigma$ is an expansion
tree without merge. 
An important property is that substitution commutes with $\Deep(\cdot)$
and $\Shallow(\cdot)$.
\begin{lemma}\label{lem:subst_deep}
Let $\calP$ be an expansion pre-proof with merges and $\sigma$ a substitution. Then 
\begin{itemize}
\item $\Deep(\calP\sigma)$ is logically equivalent to $\Deep(\calP)\sigma$, and
\item $\Shallow(\calP\sigma) = \Shallow(\calP)\sigma$.
\end{itemize}
\end{lemma}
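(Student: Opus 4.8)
The plan is to prove both claims of Lemma~\ref{lem:subst_deep} simultaneously by structural induction on the expansion tree with merges, following exactly the five clauses of Definition~\ref{def:subst}. For the shallow claim $\Shallow(\calP\sigma) = \Shallow(\calP)\sigma$ I expect a clean syntactic equality, since $\Shallow(\cdot)$ simply reads off the formula skeleton and $\sigma$ acts uniformly; this should go through by a direct check in each case. For the deep claim I only need \emph{logical equivalence}, not syntactic equality, and this looseness is precisely what makes the existential case tractable.

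The base cases are immediate: for a literal $L$, both $\Deep(L\sigma) = L\sigma$ and $\Shallow(L\sigma) = L\sigma$ hold by definition. For the binary connective case $E_1 \circ E_2$ and the merge case $E_1 \omerge E_2$, the substitution distributes over the connective (clauses 2 and 5), $\Deep(\cdot)$ and $\Shallow(\cdot)$ both recurse compositionally, and the induction hypothesis together with the fact that logical equivalence is a congruence for $\land$ and $\lor$ closes the case. The universal case $\forall x\, A +^\alpha E$ is similarly direct: by clause 4 the expansion becomes $\forall x\, A\sigma +^{\alpha\sigma} E\sigma$, so $\Deep$ recurses into $E\sigma$ and $\Shallow$ yields $\forall x\, A\sigma = (\forall x\, A)\sigma$, again invoking the induction hypothesis.

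The main obstacle, as expected, is the existential case $E = \exists x\, A +^{t_1} E_1 \cdots +^{t_n} E_n$. Here clause 3 \emph{merges} those subtrees $E_i\sigma$ whose instance terms $t_i\sigma$ collide, producing expansions $+^{s_j} \oMerge_{\{i : t_i\sigma = s_j\}} E_i\sigma$. Consequently $\Deep(E\sigma) = \Lor_{j=1}^k \Deep\bigl(\oMerge_{t_i\sigma = s_j} E_i\sigma\bigr)$, which by the definition $\Deep(E_1\omerge E_2) = \Deep(E_1)\lor\Deep(E_2)$ unfolds to a disjunction of the $\Deep(E_i\sigma)$ reorganized by the grouping. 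Meanwhile $\Deep(E)\sigma = \bigl(\Lor_{i=1}^n \Deep(E_i)\bigr)\sigma = \Lor_{i=1}^n \Deep(E_i)\sigma$, which by the induction hypothesis is logically equivalent to $\Lor_{i=1}^n \Deep(E_i\sigma)$. The point is that the merge-induced regrouping only reassociates and reorders the disjuncts and possibly fuses duplicate ones, so the two formulas differ only up to the associativity, commutativity, and idempotence of $\lor$ --- hence they are logically equivalent, which is exactly the strength claimed. This is the one spot where syntactic equality genuinely fails and logical equivalence is needed. For the shallow claim in this case there is no difficulty, since $\Shallow(E\sigma) = \exists x\, A\sigma = (\exists x\, A)\sigma = \Shallow(E)\sigma$ regardless of how the instances collapse. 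Finally I would lift both claims from single expansion trees to pre-proofs $\calP = C_1,\ldots,C_k,E_1,\ldots,E_n$ componentwise, using $\Deep(C) = \Deep(E^+)\land\Deep(E^-)$ and the congruence of equivalence through the sequent-level conjunctions.
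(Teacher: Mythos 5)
Your proposal is correct and takes essentially the same route as the paper, whose entire proof is the single line ``By induction on the structure of $\calP$'' --- you have simply carried out that induction in full, correctly identifying the existential clause (where colliding terms $t_i\sigma$ force $\omerge$-grouping of subtrees) as the sole reason the deep claim holds only up to associativity/commutativity of $\lor$ rather than syntactically. The case analysis along the five clauses of Definition~\ref{def:subst} and the componentwise lifting to pre-proofs match what the paper's proof tacitly intends.
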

\begin{proof}
By induction on the structure of $\calP$.
\end{proof}
%
%for all variables $\alpha,\beta$, if $q(\alpha)<_\calP q(\beta)$ then $\beta\notin\Var(\sigma(\alpha))$.
%and $\Var(\sigma(\alpha))\subseteq\Var(\calP)$.
%
%{\em TODO: rephrase this Lemma, since sometimes we apply substitution inside
%of expansion proofs (i.e.~to trees which are not proofs)}.
%TODO: keep lemma number~\ref{lem:subst_proof}.
\thmrecount{11}{
\begin{lemma}
Let $\calP=\calP'[E]$ be an expansion proof with merges and $\sigma$
a substitution admissible for $\calP$. Then $\calQ=\calP'[E\sigma]$ 
is an expansion proof with merges, and $\Shallow(\calP)=\Shallow(\calQ)$.
\end{lemma}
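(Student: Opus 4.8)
The plan is to verify the defining conditions of an expansion proof with merges for $\calQ=\calP'[E\sigma]$, namely that it is a well-formed expansion pre-proof with merges, that it is regular with $\Shallow(\calQ)$ closed, that $\Deep(\calQ)$ is a tautology, and that $<_\calQ$ is acyclic; the equation $\Shallow(\calP)=\Shallow(\calQ)$ is then read off on the way. Most of these are routine. Since $\sigma$ is admissible it is a renaming on $\EV(E)$, so $E\sigma$ is again an expansion tree with merges and filling it into $\calP'$ is legitimate; regularity is inherited because $\sigma$ maps the (pairwise distinct) eigenvariables of $E$ injectively to variables that stay distinct from those of $\calP'$. For the shallow formula I would use that no eigenvariable of $E$ occurs in $\Shallow(E)$ (the shallow function truncates at quantifier nodes), together with the observation that the top-level formula $\Shallow(E_j)$ of a tree $E_j\in\calE$ containing the hole is structurally independent of anything below a quantifier on the path to the hole, while if no quantifier lies above the hole then $\Shallow(E)$ is closed; either way $\Shallow(E_j)$ is unchanged, giving $\Shallow(\calP)=\Shallow(\calQ)$ and preservation of closedness. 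Tautology-hood of $\Deep(\calQ)$ I would obtain from Lemma~\ref{lem:subst_deep}, using that $\Deep(E\sigma)$ is logically equivalent to $\Deep(E)\sigma$ and that a substitution instance of a tautology is again a tautology.

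The substantial part, as the one-line sketch indicates, is acyclicity, which I would prove in the contrapositive: a cycle in $<_\calQ$ yields a cycle in $<_\calP$, contradicting that $\calP$ is an expansion proof. The tool is the substitution-predecessor map $\pred_s$, sending each expansion of $\calQ$ to an expansion of $\calP$ (the identity on the context $\calP'$, and the originating node inside $E$ for nodes of $E\sigma$), together with the induced map on cuts. Since $<_\calP$ is the transitive closure of $<^0$, it suffices to take a $<^0$-cycle in $\calQ$ and track its image under $\pred_s$ edge by edge. I would check the ``safe'' clauses first: a domination edge (clause 2) is reflected because $E\sigma$ is built structurally from $E$, so $\pred_s$ reflects domination; the cut-membership edges (clause 3) are immediate as $\pred_s$ respects cuts; and any eigenvariable-occurrence edge (clauses 1 and 4) between nodes whose eigenvariables and $\exists$-terms are left untouched by $\sigma$ maps to the identical edge of $\calP$.

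The main obstacle is the remaining case of clauses 1 and 4, where $\sigma$ renames an eigenvariable $\alpha\in\EV(E)$ to a variable $\beta=\alpha\sigma$ that already occurs in $\calP$: here a \emph{new} dependency edge can appear in $\calQ$ with no counterpart in $\calP$. Concretely, if an $\exists$-expansion $w$ inside $E$ has a term containing $\alpha$, then in $\calQ$ its term contains $\beta$, producing $q(\beta)<^0 w$ while in $\calP$ one only had $\beta\notin\Var(\text{term of }\pred_s(w))$. This is exactly the configuration the second admissibility clause addresses: it guarantees that for every such $w$ one has $w\not<_\calP q(\beta)$. Hence a cycle through the new edge $q(\beta)<^0 w$ would have its complementary $\calQ$-path $w\to\cdots\to q(\beta)$ project, via the already-verified safe clauses, to a $<_\calP$-path $\pred_s(w)\to\cdots\to q(\beta)$, i.e.\ to $\pred_s(w)<_\calP q(\beta)$, contradicting admissibility; the cut-formula variant (clause 4) is handled analogously since $q(\beta)$ is again a $\forall$-expansion. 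The delicate point, which I expect to require the most care, is organizing this argument so that it still goes through when a cycle traverses several such new edges, and verifying that each new edge is individually blocked by admissibility before the cycle can close. Assembling the safe-clause transfer with this blocking argument shows that no $<_\calQ$-cycle can exist, so $\calQ$ is an expansion proof with merges.
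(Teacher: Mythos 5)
Your proposal is correct and follows essentially the same route as the paper's own proof: $\Shallow(\calP)=\Shallow(\calQ)$ and tautology-hood of $\Deep(\calQ)$ via Lemma~\ref{lem:subst_deep} (using closedness of $\Shallow(\calP)$ and preservation of propositional validity under substitution), regularity from the fact that each subtree of $E$ is copied exactly once, and acyclicity by projecting a cycle along $\pred_s$ into $<_\calP$, with the second admissibility clause blocking precisely the new eigenvariable-occurrence edges you single out. The one point you flag as delicate --- cycles traversing several new edges --- is likewise not treated explicitly in the paper, whose appendix proof handles only ``the only non-trivial case'' of a single new edge $v>_{\calQ}u$ in the cycle, so your write-up is at least as complete as the published argument.
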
}
\begin{proof}
Lemma~\ref{lem:subst_deep} implies
the latter claim since $\Shallow(\calP)$ does not contain free variables. The same Lemma also
implies that $\Deep(\calQ)$ is a tautology
since propositional tautology-hood is preserved under substitution.
Regularity is preserved since
every subtree of $E$ is ``copied'' exactly once to create $E\sigma$.
To show that $<_{\calQ}$ is acyclic, we show that $w <_{\calQ} w$
implies $\pred_s(w) <_{\calP} \pred_s(w)$.
The only non-trivial case is that there exist an $\exists$-expansion $v$
such that $\pred_s(v)$ has an expansion term containing $\alpha$, and
a $\forall$-expansion $u$ with eigenvariable
$\beta$, such that $w>_{\calQ}v>_{\calQ}u>_{\calQ}w$,
and $\beta\in\Var(\alpha\sigma)$.
But this implies $\pred_s(u)>_{\calP} \pred_s(w)>_{\calP}\pred_s(v)=q(\beta)$,
which contradicts admissibility of $\sigma$.
\end{proof}

\begin{example}\label{ex:mergered}
%  We continue with a slight modification of Example~\ref{ex:subst}, taking
%  $E=\exists x\forall y\, R(x,y) +^c (\forall y\, R(c,y) +^\beta R(c,\beta)) +^\alpha (\forall y\, R(\alpha,y) +^\gamma R(\alpha,\gamma))$. We
%  then have
Continuing Example~\ref{ex:subst}, we have $E\sigma\stackrel{\omerge}{\mapsto}E'\stackrel{\omerge}{\mapsto}E''$
where
\begin{align*}
E' & = \exists x\forall y\, R(x,y)   +^c (\forall y\, R(c,y) +^\beta R(c,\beta) \omerge R(c,\beta))\\
E'' & = \exists x\forall y\, R(x,y)  +^c  (\forall y\, R(c,y) +^\beta R(c,\beta))
\end{align*}
The only expansion $w$ in $E'$ with a non-trivial set $\pred^0_\omerge(w)$ is the 
$+^\beta$ expansion: it has $\pred^0_\omerge(w)=\{+^\beta,+^\gamma\}$. Similarly,
the only $w$ in $E''$ with non-trivial $\pred^0_\omerge(w)$ is $R(c,\beta)$: here,
$\pred^0_\omerge(w)$ consists of the two occurrences of $R(c,\beta)$ in $E'$.
\end{example}
%
%COMMENT (DANIEL): Maybe the definition of $\pred_\omerge(n)$ should also be made informally. Also,
%maybe all $\pred$-related definitions should be made in the sections on cut-elimination.
\begin{lemma}\label{lem:mergered_props}
The relation $\mergered$ is confluent and strongly
normalizing. Its normal forms have no merge nodes.
\end{lemma}
\begin{proof}
Local confluence follows immediately from the absence of critical pairs.

Let $m$ be a merge node in an expansion pre-proof with merges $\calP$, then the
weight of this node $\weight(m)$ is the number of nodes below it in $\calP$.
Let $m_1,\ldots,m_l$ be the merge nodes in an expansion pre-proof with merges $\calP$,
then the weight of $\calP$ is defined as $\weight(\calP) = \sum_{i=1}^{l} \weight(m_i)$.
The application of eigenvariable renamings and merges decreases the lexicographic
ordering $\langle |\EV(\calP)|, \weight(\calP) \rangle$, hence $\mergered$
is strongly normalizing and we can conclude confluence.

That its normal forms have no more merge nodes is immediate.
\end{proof}

\begin{example}\label{ex:normal_form}
Continuing Example~\ref{ex:mergered}, we have $E''=\normf{E\sigma}$.
There is exactly one expansion $w$ in $E''$ with a non-trivial set $\pred_\omerge(w)$ (containing expansions from $E\sigma$):
namely, $\pred_\omerge(+^\beta)=\{+^\beta,+^\gamma\}$.
\end{example}
An important property is that $\mergered$ preserves the
proof properties of $\calP$.
\begin{lemma}\label{lem:merge_deep}
  If $\calP \stackrel{\omerge}{\mapsto} \calP'$ then
\begin{itemize} 
\item if $\Deep(\calP)$
  is valid then $\Deep(\calP')$ is valid, and
\item if $<_\calP$ is acyclic then $<_{\calP'}$ is acyclic, and
\item if $\calP$ is regular than so is $\calP'$
\end{itemize}
\end{lemma}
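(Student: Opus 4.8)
The plan is to prove each of the three preservation properties by a case analysis on the four reduction rules of Definition~\ref{def:omerge_reduction}, since $\stackrel{\omerge}{\mapsto}$ is a single-step relation and the statement concerns one step. For the first bullet (validity of $\Deep$), the crucial observation is that $\Deep(E_1\omerge E_2)=\Deep(E_1)\lor\Deep(E_2)$, so each reduction rule only rearranges the propositional structure of the deep formula. Concretely, in case~1 we go from $\Deep(L)\lor\Deep(L)=L\lor L$ to $L$; in case~2 the reduct's deep formula $\Deep(E_1'\omerge E_2')\circ\Deep(E_1''\omerge E_2'')$ is propositionally equivalent to $\Deep(E_1'\circ E_1'')\lor\Deep(E_2'\circ E_2'')$ by distributivity; in cases~3 and~4 one similarly checks a propositional equivalence, using Lemma~\ref{lem:subst_deep} to handle the global substitution $\unsubst{\alpha_2}{\alpha_1}$ in case~3 (substitution preserves propositional tautology-hood). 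So $\Deep(\calP)$ and $\Deep(\calP')$ are logically equivalent, and validity transfers.

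For the third bullet (regularity), I would note that each rule either leaves the multiset of eigenvariables unchanged or, in case~3, identifies two eigenvariables $\alpha_1,\alpha_2$ into one via $\unsubst{\alpha_2}{\alpha_1}$. The only way regularity could fail is if two distinct $\forall$-expansions came to share an eigenvariable. Since merging under case~3 applies to $\forall x\,A +^{\alpha_1}E_1$ and $\forall x\,A +^{\alpha_2}E_2$ that are being unified into a single $\forall$-expansion $+^{\alpha_1}(E_1\omerge E_2)$, the substitution $\unsubst{\alpha_2}{\alpha_1}$ is applied globally precisely to collapse these two into one, and this is the sole place $\alpha_2$ occurred as an eigenvariable; no other clash is created.

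The second bullet (acyclicity of $<_{\calP'}$) is the main obstacle, as it is for the analogous substitution lemma (Lemma~\ref{lem:subst_proof}). The approach is to show that a cycle in $<_{\calP'}$ lifts to a cycle in $<_\calP$ via the predecessor function $\pred^0_\omerge$: one shows that if $v<_{\calP'}^0 w$ then there exist $v'\in\pred^0_\omerge(v)$ and $w'\in\pred^0_\omerge(w)$ with $v'<_{\calP}w'$, by inspecting which of the four defining clauses of $<^0$ could produce the edge and checking it is reflected by a $<_\calP$-path among predecessors. Cases~2 (domination) and~4 (merging of $\exists$-expansions) are routine since they preserve the tree structure locally. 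The delicate case is~3, where the global substitution $\unsubst{\alpha_2}{\alpha_1}$ can create new dependency edges: an $\exists$-expansion whose term contained $\alpha_2$ now contains $\alpha_1$, generating a $<^0$-edge from $+^{\alpha_1}$ (clause~1) or an edge into a cut (clause~4). Here I expect to reuse the same admissibility-style reasoning as in Lemma~\ref{lem:subst_proof}, tracing the putative cycle back through predecessors to a cycle in $\calP$. Since the reduction $\stackrel{\omerge}{\mapsto}$ in case~3 is essentially the composition of a local merge with an admissible renaming substitution, the acyclicity argument mirrors the proof of Lemma~\ref{lem:subst_proof} and a clean way to present it is to invoke that lemma for the substitution part and handle the purely local structural rearrangement separately.
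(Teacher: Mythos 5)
Your overall architecture --- case analysis on the four rules of Definition~\ref{def:omerge_reduction}, the $\Deep$ formula handled propositionally with Lemma~\ref{lem:subst_deep} for case~3, acyclicity by lifting edges through $\pred^0_\omerge$, regularity essentially trivial --- matches the paper's proof, but two of your steps are wrong or too weak as stated. First, $\Deep(\calP)$ and $\Deep(\calP')$ are \emph{not} logically equivalent: in case~2 with $\circ=\land$ the redex contributes $(\Deep(E_1')\land\Deep(E_1''))\lor(\Deep(E_2')\land\Deep(E_2''))$ while the reduct contributes $(\Deep(E_1')\lor\Deep(E_2'))\land(\Deep(E_1'')\lor\Deep(E_2''))$; the former implies the latter, but the converse fails (take $\Deep(E_1')=\Deep(E_2'')=\top$ and $\Deep(E_1'')=\Deep(E_2')=\bot$), so ``equivalence by distributivity'' is a false claim. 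The paper proves only the one-sided statement that $\Deep(\calP)\sigma\impl\Deep(\calP')$ is valid for a variable renaming $\sigma$, and explicitly remarks that logical equivalence is lost in case~2. Your conclusion survives because the direction needed to transfer validity is exactly the one that holds, but the proof must be restated as an implication.

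Second, and more seriously, your lifting invariant for acyclicity is too weak to close the argument. Since $\pred^0_\omerge$ may contain two elements (cases~3 and~4), knowing only that each edge $v<^0_{\calP'}w$ is reflected by \emph{some} $v'\in\pred^0_\omerge(v)$ and \emph{some} $w'\in\pred^0_\omerge(w)$ with $v'<_\calP w'$ does not let you translate a cycle $w_1<w_2<\cdots<w_k<w_1$ back to $\calP$: the predecessor of $w_2$ chosen as the target of the first lifted edge need not be the predecessor of $w_2$ from which the second lifted edge starts, so the lifted edges need not concatenate. The paper proves the stronger, asymmetric invariant: if $w>^0_{\calP'}v$ then there exists $w'\in\pred^0_\omerge(w)$ such that $w'>^0_\calP v'$ for \emph{all} $v'\in\pred^0_\omerge(v)$; the universal quantifier on the lower end is precisely what makes consecutive lifted edges compose into a genuine cycle in $<_\calP$. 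Your fallback for case~3 --- factoring the step into a local rearrangement plus an application of Lemma~\ref{lem:subst_proof} --- also needs repair: before the renaming, $\Shallow(E_1)=A\unsubst{x}{\alpha_1}\neq A\unsubst{x}{\alpha_2}=\Shallow(E_2)$, so the intermediate merge node is not well-formed (and renaming first breaks regularity); moreover admissibility of $\unsubst{\alpha_2}{\alpha_1}$ is not automatic, but is derived in the paper's proof of Lemma~\ref{lem:merge_proof} from the facts that $q(\alpha_1)$ and $q(\alpha_2)$ are dominated by the same expansions and lie in the same cut, using acyclicity of $<_\calP$. Supply the strengthened invariant (or the admissibility argument) and the rest of your proof goes through.
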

\begin{proof}
By inspection of the definition, we show that there exists
a variable renaming $\sigma$ such that $\Deep(\calP)\sigma\impl
\Deep(\calP')$. The variable renaming is used
in case 3. Note that logical equivalence is not preserved due to
case 2. Acyclicity is shown by verifying that for all expansions $v,w$ 
from $\calP'$, if $w>^0_{\calP'} v$ then there exist $w'\in\pred^0_\omerge(w)$
such that for all $v'\in\pred^0_\omerge(v)$ we have $w'>^0_{\calP}v'$. This allows
to translate a cycle from $>_{\calP'}$ to $>_\calP$.
Finally, regularity of
$\calP'$ follows since no new $\forall$-expansions are introduced.
\end{proof}
%
%TODO: keep Lemma number~\ref{lem:merge_proof}.
\thmrecount{15}{
\begin{lemma}
  If $\calP_1\omerge\calP_2$ is an expansion proof with merge such that $\Shallow(\calP_1)=\Shallow(\calP_2)$,
  then $\calP_1\cup \calP_2$ is an
  expansion proof and $\Shallow(\calP_1\cup \calP_2)=\Shallow(\calP_1)=\Shallow(\calP_2)$.
\end{lemma}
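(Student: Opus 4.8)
The plan is to reduce the statement to the two preservation results already established for the object-level merge reduction: Lemma~\ref{lem:mergered_props} (strong normalization and confluence of $\mergered$, with merge-free normal forms) and Lemma~\ref{lem:merge_deep} (a single $\stackrel{\omerge}{\mapsto}$-step preserves deep validity, acyclicity and regularity). The point is that almost all the content lives in those two lemmas, so what remains is to assemble them correctly.

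First I would observe that, because $\Shallow(\calP_1)=\Shallow(\calP_2)$, every expansion tree and every cut of $\calP_1$ pairs up with a component of $\calP_2$ of equal shallow formula. Hence the expansion proof with merge $\calP_1\omerge\calP_2$ is precisely the componentwise object-level merge, and by the definition of the merge operation on pre-proofs its reduction normal form is exactly $\calP_1\cup\calP_2=\normf{(\calP_1\omerge\calP_2)}$. By Lemma~\ref{lem:mergered_props} this normal form exists and contains no merge nodes, so $\calP_1\cup\calP_2$ is an ordinary expansion pre-proof.

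Next I would transport the two defining clauses of Definition~\ref{def:exp_proof} along the reduction sequence $\calP_1\omerge\calP_2\mergered\calP_1\cup\calP_2$. Since $\calP_1\omerge\calP_2$ is an expansion proof with merge, $\Deep(\calP_1\omerge\calP_2)$ is a tautology and $<_{\calP_1\omerge\calP_2}$ is acyclic; applying Lemma~\ref{lem:merge_deep} to each single step and iterating yields that $\Deep(\calP_1\cup\calP_2)$ is valid (equivalently a tautology, the deep formulas being quantifier-free), that $<_{\calP_1\cup\calP_2}$ is acyclic, and that regularity is maintained. Together with merge-freeness this shows that $\calP_1\cup\calP_2$ is an expansion proof. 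For the shallow equation I would check that each step $\stackrel{\omerge}{\mapsto}$ leaves the end-sequent unchanged: rules~$1$, $2$ and $4$ only rewrite the subtree below a merge node of fixed shallow formula, while the global renaming $\unsubst{\alpha_2}{\alpha_1}$ in rule~$3$ substitutes an eigenvariable, which by the pre-proof condition does not occur free in $\Shallow(\cdot)$ (it may occur only in cut-formulas, which do not contribute to the end-sequent). Thus $\Shallow(\calP_1\cup\calP_2)=\Shallow(\calP_1\omerge\calP_2)=\Shallow(\calP_1)=\Shallow(\calP_2)$.

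The main obstacle I anticipate is not in any individual step but in the bookkeeping: identifying the set-theoretic merge $\cup$ on pre-proofs with the reduction normal form of the full object-level merge, and verifying that the pre-proof conditions — in particular pairwise distinctness of the shallow formulas of the surviving expansion trees and of the cut-formulas — are preserved. These conditions follow once the end-sequent is shown invariant, since the surviving component structure of $\calP_1\cup\calP_2$ then matches that of $\calP_1$, which is itself a pre-proof.
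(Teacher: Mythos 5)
Your assembly follows the paper's skeleton: normalize the object-level merge, transport the proof properties stepwise along $\mergered$, and check invariance of the end-sequent. But there is one concrete gap relative to the paper's proof. You discharge acyclicity (and regularity) entirely by iterating Lemma~\ref{lem:merge_deep}, yet case 3 of Definition~\ref{def:omerge_reduction} is not a local rewrite: it applies the global renaming $\unsubst{\alpha_2}{\alpha_1}$ to the whole pre-proof. This renaming redirects dependency edges onto $q(\alpha_1)$ --- every $\exists$-expansion $w$ whose term contained $\alpha_2$ now produces an edge $q(\alpha_1)<^0 w$, and likewise every cut whose cut-formula contained $\alpha_2$ --- so one must rule out that such a $w$ already satisfies $w<q(\alpha_1)$, which would close a cycle. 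That is precisely the admissibility condition on substitutions, and the paper's proof accordingly invokes Lemma~\ref{lem:subst_proof} \emph{in addition to} Lemma~\ref{lem:merge_deep}, verifying admissibility of the case-3 renaming by the observation that $q(\alpha_1)$ and $q(\alpha_2)$ are dominated by the same expansions and contained in the same cut (if any). Your proposal treats the renaming only in the shallow-invariance discussion (where the pre-proof condition on free variables indeed suffices) and never confronts its effect on $<_\calP$; reading Lemma~\ref{lem:merge_deep} as a black box that silently absorbs the global substitution is exactly the step the paper declines to take, and for good reason --- the cycle-translation argument sketched in its proof handles the local identification of the two $\forall$-expansions, while the substitution component is delegated to Lemma~\ref{lem:subst_proof}, whose hypothesis must be checked.

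Everything else matches the paper: the identification of $\calP_1\cup\calP_2$ with $\normf{(\calP_1\omerge\calP_2)}$ via componentwise merges of equal shallow formulas, merge-freeness and existence of the normal form from Lemma~\ref{lem:mergered_props}, tautology-hood of the deep sequent by the stepwise implication (noting, as you implicitly do, that only implication and not equivalence is preserved, which is all that is needed), and the fact that the pre-proof distinctness conditions hold for $\calP_1\cup\calP_2$ by the very definition of $\cup$. Supply the admissibility check for case 3 and your proof coincides with the paper's.
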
}
\begin{proof}
Since $\Shallow(\calP_1)=\Shallow(\calP_2)$, all non-cut expansion-trees are merged and
we have $\Shallow(\calP_1\cup \calP_2)=\Shallow(\calP_1)=\Shallow(\calP_2)$ by definition.
The proof-properties of $\calP_1\omerge\calP_2$ are carried over to $\calP_1\cup\calP_2$
by Lemma~\ref{lem:merge_deep}~and~Lemma~\ref{lem:subst_proof}: in case 3 of Definition~\ref{def:omerge_reduction},
the eigenvariable renaming is admissible since $q(\alpha_1)$ and $q(\alpha_2)$ are dominated by the same
expansions, and are contained in the same cut (if any).
Finally, $\calP_1\cup\calP_2$ does not contain merge nodes
by Lemma~\ref{lem:mergered_props}.
\end{proof}

\subsection{Cut-Elimination}

Towards verifying that $\mapsto$ is really a binary relation on
expansion proofs, as claimed, we have to prove a technical result on the
behavior of the merge w.r.t.~cut-reduction, namely that those $\forall$-expansions
that do not depend upon the reduced $\forall$-expansion are merged.

\begin{lemma}\label{lem:merge_top_evs}
Let $\calP\mapsto\calP'$ by the quantifier-reduction rule.
We write $\calP'=\normf{\calP''}$ where $\calP''$ is the
expansion tree with merge constructed by the reduction rule.
Let $w$ be the $\forall$-expansion indicated by the rule, and
let $v$ be a $\forall$-expansion in $\calP$ with $w\not<v$. Then there exists
a $\forall$-expansion $u$ in $\calP'$ s.t.~$v'\in \pred_\omerge(u)$ for all copies $v'$ of $v$ in $\calP''$.
\end{lemma}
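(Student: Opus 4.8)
The plan is to prove the statement by well-founded induction on $\deg(v)$, which is legitimate since $\deg$ is order-preserving for $<_\calP$ and $<_\calP$ is acyclic. First I would locate the copies of $v$ in $\calP''$: writing the context to which the rule is applied as $\calP$, the right-hand side of the quantifier rule is $\calP \cup \{E_1\lor\cdots\lor E_n, E\eta_1\unsubst{\alpha}{t_1}\land\cdots\} \cup \Merge_{i=1}^n \calP\eta_i\unsubst{\alpha}{t_i}$, so the copies of a $\forall$-expansion $v$ of $\calP$ are $v$ itself (in the unchanged $\calP$) together with one copy $v\eta_i\unsubst{\alpha}{t_i}$ inside each $\calP\eta_i\unsubst{\alpha}{t_i}$. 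The goal is to show that the subtree of $\calP''$ containing these copies is contracted by $\mergered$ down to the level of $v$, so that all copies share a single $\pred_\omerge$-image $u$. By confluence of $\mergered$ (Lemma~\ref{lem:mergered_props}) I am free to fix any convenient reduction order.

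Next I would use $w\not<v$ to neutralise the substitution $\unsubst{\alpha}{t_i}$ around $v$. Let $X$ be the expansion tree or cut of $\calP$ containing $v$. If $\alpha\in\Var(\Shallow(X))$ then $X$ must be a cut (the non-cut trees have closed shallow formulas) and $w<^0 X<^0 v$ by the two cut-clauses of $<^0$, contradicting $w\not<v$; likewise, if some $\exists$-expansion $+^s$ dominating $v$ had $\alpha\in\Var(s)$, then $w<^0 (+^s)<^0 v$ by the term-clause and the domination-clause, again contradicting $w\not<v$. Hence, using that $\Shallow$ commutes with substitution (Lemma~\ref{lem:subst_deep}), $\unsubst{\alpha}{t_i}$ acts as the identity on $\Shallow(X)$ and on every $\exists$-term dominating $v$, so the only genuine difference between $X$ and its copies is the eigenvariable renaming $\eta_i$.

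The heart of the argument is to remove this renaming by the induction hypothesis. Consider an eigenvariable $\gamma$ occurring in $\Shallow(X)$ or in an $\exists$-term $s$ above $v$, and let $v_\gamma=q(\gamma)$ be its $\forall$-expansion. In both cases one gets $v_\gamma<^0 X<^0 v$, respectively $v_\gamma<^0(+^s)<^0 v$, so $v_\gamma<v$, whence $\deg(v_\gamma)<\deg(v)$ and (since $w\not<v$ and $<$ is transitive) $w\not< v_\gamma$. The induction hypothesis then applies to $v_\gamma$: its copies merge into a single node, so by case~3 of Definition~\ref{def:omerge_reduction} the renamed variants $\gamma\eta_1,\ldots,\gamma\eta_n$ are all identified by the global renamings performed during $\mergered$-normalization. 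These global renamings make the shallow formula of $X\eta_i\unsubst{\alpha}{t_i}$ coincide with $\Shallow(X)$ and make each $s\eta_i$ coincide with $s$, so the top-level components are $\omerge$-combined and every $\exists$-node on the path to $v$ has matching terms. Acyclicity guarantees that $v_\gamma$ is not dominated by $+^s$ (otherwise $+^s<^0 v_\gamma<^0(+^s)$ would be a cycle), so by confluence I may carry out the merges of the $v_\gamma$ before matching the nodes above $v$; the merge then propagates through the propositional, $\forall$- and $\exists$-nodes (cases~2,~3,~4 of Definition~\ref{def:omerge_reduction}) down to $v$ and contracts all its copies into one $\forall$-expansion $u$, giving $v'\in\pred_\omerge(u)$ for every copy $v'$.

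I expect the main obstacle to be exactly this reconciliation of two orders: $\mergered$ rewrites top-down in the tree order, whereas the alignment of eigenvariables proceeds along the dependency order. The acyclicity of $<_\calP$ (which forbids an $\exists$-expansion from dominating a $\forall$-expansion whose eigenvariable it contains) together with the confluence of $\mergered$ is what lets these two orders be interleaved consistently; getting this interleaving right, and in particular tracking the cumulative effect of the successive global renamings on the shallow formulas of cuts, is the delicate part of the proof.
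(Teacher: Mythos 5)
Your argument is correct, but it is organised quite differently from the paper's. The paper proves this lemma by a single induction along the $\mergered$-reduction sequence itself, maintaining the invariant that all copies of $v$ remain inside subtrees joined by merge nodes; the only point where the hypothesis $w\not<v$ is invoked is case~4 of Definition~\ref{def:omerge_reduction}, where it guarantees that the $\exists$-expansions dominating the copies of $v$ fall into the common $+^{r_i}$-part and hence keep being merged. You instead perform a well-founded induction on $\deg(v)$ (legitimate by acyclicity and order-preservation of $\deg$), and you make explicit exactly what the paper's one-line case-4 observation leaves implicit: that $\alpha$ cannot occur in $\Shallow(X)$ or in any $\exists$-term dominating $v$ (your derivations $w<^0 X<^0 v$ and $w<^0(+^s)<^0 v$ are the right uses of the clauses of $<^0$), and that the residual discrepancy caused by $\eta_i$ is repaired because every eigenvariable $\gamma$ in those positions satisfies $q(\gamma)<v$ and $w\not<q(\gamma)$, so the inductive hypothesis identifies its copies and the global case-3 renamings realign the terms. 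This buys a more transparent account of the eigenvariable-alignment problem — the genuinely delicate content of the lemma, which you correctly flag — at the price of two things the paper's sequence-induction avoids: you must appeal to confluence to reorder the reduction so the $v_\gamma$-merges happen first (tacitly assuming that $\pred_\omerge$ into the normal form is independent of the chosen sequence, an assumption the paper also makes when it speaks of ``appropriate $\mergered$-sequences''), and you rely on the convention, needed likewise by the paper in Lemma~\ref{lem:reduction:new}, that top-level cuts whose shallow formulas become equal after a renaming are re-merged to maintain the pairwise-distinct-cut-formula invariant of Definition~\ref{def:exp_preproof}. Neither point is a gap at the paper's own level of rigour; your note that a direct circularity such as $v_\gamma<^0(+^s)<^0 q(\gamma')<v_\gamma$ is excluded by acyclicity is precisely why the interleaving of the tree order and the dependency order is consistent.
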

\begin{proof}
By induction on the merge-reduction sequence. The assumption $w\not<v$ ensures
that in case 4 of Definition~\ref{def:omerge_reduction}, the subtrees containing
copies of $u$ will be merged since the $\exists$-expansions dominating them belong
to the $+^{r_i}$-part.
\end{proof}
%
%
%To see that $\rightarrow$ is well-defined, we have to verify some properties. First,
%we show that the substitutions $\sigma_i=\unsubst{a}{t_i}$ are admissible:
%the condition on $\Var(\sigma(\alpha))$ 
%
%%Of course,
%%from the definition of $\rightarrow$, it is only clear that it relates
%%expansion pre-proofs to expansion pre-proofs. 
%%
%We still have to show
%%\end{definition}
%%
%TODO: preserve Lemma number~\ref{lem:red_pres_proof}.
\thmrecount{22}{
\begin{lemma}
If $\calP_1\mapsto\calP_2$ and $\calP_1$ is an expansion proof, then $\calP_2$ is an expansion
proof. Furthermore, $\Shallow(\calP_1)=\Shallow(\calP_2)$.
\end{lemma}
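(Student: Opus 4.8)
The plan is to treat the three cut-reduction rules separately and, for each, to verify that the defining conditions of an expansion proof are preserved: the pre-proof conditions (regularity and variable-freeness of the shallow sequent), acyclicity of $<$, and tautologyhood of $\Deep$. The equality $\Shallow(\calP_1)=\Shallow(\calP_2)$ is uniform and easy in all three cases, since cuts never contribute to the shallow sequent: the substitutions $\eta_i\unsubst{\alpha}{t_i}$ act trivially on the variable-free shallow formulas of the non-cut trees by Lemma~\ref{lem:subst_deep}, and each merge-reduction step preserves $\Shallow$ by inspection of Definition~\ref{def:omerge_reduction} (cf.\ Lemma~\ref{lem:merge_proof}). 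I therefore concentrate on the two remaining proof conditions.

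For the two easy rules I argue propositionally and structurally. For the atomic rule $\{A,\dual{A}\},\calP\mapsto\calP$ the cut contributes the unsatisfiable $A\land\dual{A}$ to the deep sequent, so it may be deleted from a tautology; acyclicity and regularity are inherited since we only remove expansions and a cut. For the propositional rule the old cut contributes $(\Deep(E_1)\lor\Deep(E_2))\land\Deep(E'_1)\land\Deep(E'_2)$ while the new cuts contribute $(\Deep(E_1)\land\Deep(E'_1))\lor(\Deep(E_2)\land\Deep(E'_2))$; since the old contribution propositionally implies the new one, tautologyhood is preserved. Acyclicity follows because $\Shallow(\{E_j,E'_j\})$ is a subformula of the old cut-formula, so the clause-4 in-edges shrink and the clause-3 out-edges point to sub-expansions, whence any cycle in $\calP_2$ projects to one in $\calP_1$. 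Regularity is immediate as no eigenvariables are added.

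The substantial case is the quantifier rule. I would first form the expansion proof with merges $\calP''$ that is the right-hand side before merge-normalization, so that $\calP_2=\normf{\calP''}$, and show that $\calP''$ is an expansion proof with merges; the passage to $\calP_2$ then preserves all properties and removes the merges by Lemma~\ref{lem:merge_deep} and Lemma~\ref{lem:mergered_props}. Regularity of $\calP''$ holds because the renamings $\eta_i$ send the eigenvariables of each copy to fresh variables, so all eigenvariables of $\calP''$ are distinct. For tautologyhood I use a semantic substitution argument. Writing $S=\Lor\Deep(\calP)$, $D_i=\Deep(E_i)$ and $F=\Deep(E)$, the hypothesis gives $\models S\lor((\Lor_i D_i)\land F)$, and I must establish $\models S\lor\Lor_i S\unsubst{\alpha}{t_i}\lor((\Lor_i D_i)\land\Land_i F\unsubst{\alpha}{t_i})$. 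Given a valuation $\phi$ falsifying the right-hand side, either all $D_i$ are false under $\phi$, in which case $\phi$ already falsifies the original formula, or some $F\unsubst{\alpha}{t_{i_0}}$ is false, in which case the transported valuation $\phi'(B)=\phi(B\unsubst{\alpha}{t_{i_0}})$ falsifies the original; either way the hypothesis is contradicted. Lemma~\ref{lem:subst_deep} lets me replace $\Deep(\calP\eta_i\unsubst{\alpha}{t_i})$ by $\Deep(\calP)\eta_i\unsubst{\alpha}{t_i}$ up to logical equivalence throughout.

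The hard part is acyclicity of $\calP''$, and this is where I expect the main obstacle. I would first check that each composite substitution $\eta_i\unsubst{\alpha}{t_i}$ is admissible for $\calP$: regularity guarantees that $\alpha$ is not an eigenvariable of $\calP$, and the freshness of the targets of $\eta_i$ makes the no-cycle admissibility condition hold vacuously. Lemma~\ref{lem:subst_proof} then yields that each substituted copy $\calP\eta_i\unsubst{\alpha}{t_i}$ is acyclic, and since distinct copies are eigenvariable-disjoint and disjoint from the unsubstituted $\calP$, their union stays acyclic; the only genuinely new dependency edges are the clause-1 edges arising because a substituted $\exists$-term now contains an eigenvariable of some $t_i$, and these are exactly the edges the admissibility condition is designed to keep from closing a cycle. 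Finally, any cycle surviving in $\calP_2=\normf{\calP''}$ would lift, through the predecessor maps $\pred_\omerge$ and $\pred_c$, to a cycle in $\calP''$ and thence in $\calP_1$, contradicting acyclicity of $\calP_1$; here Lemma~\ref{lem:merge_top_evs} is needed to guarantee that the merges reconnecting the top $\forall$-expansions that do not depend on the reduced expansion $w$ do not themselves introduce a cycle.
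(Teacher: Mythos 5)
Your overall architecture matches the paper's: isolate the quantifier rule, form the intermediate proof-with-merges $\calP''$, establish regularity, tautologyhood (your valuation-transport argument is a legitimate semantic variant of the paper's propositional implication chains) and acyclicity there, and transfer everything through merge-normalization via the predecessor maps. But there is one genuine gap: you never notice that $\calP''$ is \emph{not an expansion pre-proof at all} under the paper's definitions, because its new cut $\{E_1\lor\cdots\lor E_n,\ E\eta_1\unsubst{\alpha}{t_1}\land\cdots\land E\eta_n\unsubst{\alpha}{t_n}\}$ violates the duality condition $\Shallow(C^+)=\dual{\Shallow(C^-)}$. The cut-formula $A\unsubst{x}{t_i}$ may contain eigenvariables of $\calP_1$ free, and $\eta_i$ renames exactly these on the negative side but not on the positive side, so the two shallow formulas differ by the renaming $\eta_i$. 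The paper has to liberalize the definition of expansion proof with merges to admit cuts correct only up to a renaming, and then prove --- using the acyclicity fact $q(\alpha)\not<q(\beta)$ for $\beta$ free in the cut-formula, together with Lemma~\ref{lem:merge_top_evs} --- that merge-normalization identifies the renamed variable copies, so that the cuts in $\calP_2=\normf{\calP''}$ are again syntactically correct. Without this step you have not shown $\calP_2$ is an expansion proof, since its ``cuts'' have not been shown to be cuts; your use of Lemma~\ref{lem:merge_top_evs} (for acyclicity of the merge) is aimed at a different, and in fact unproblematic, point.

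A second, smaller inaccuracy: your claim that admissibility of $\eta_i\unsubst{\alpha}{t_i}$ holds ``vacuously'' by freshness is wrong. Freshness disposes only of the eigenvariables in $\dom(\eta_i)$; the substitution also maps $\alpha\mapsto t_i$, and $t_i$ may contain \emph{old} eigenvariables $\beta$ of $\calP_1$ for which $q(\beta)$ exists. For these one must actually verify that no $\exists$-expansion $w$ in a copy $\calP\eta_j$ or $E\eta_j$ with expansion term containing $\alpha$ satisfies $w<q(\beta)$. The paper proves this non-vacuously: any such dependency chain would have to exit the renamed copy through a $\forall$-expansion whose eigenvariable is fresh, forcing $\beta$ itself to be fresh and contradicting $\beta\in\Var(t_i)$. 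You correctly identify that the new clause-1 edges are the danger, but the admissibility condition that is supposed to tame them is precisely what requires this argument, so your proof as written is circular at that point.
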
}
\begin{proof}
We only give the proof for the quantifier cut-reduction step; the proof for the other reduction steps
is analogous and simpler.
Let $\sigma_i=\unsubst{\alpha}{t_i}$ and
assume 
\[
  \begin{array}{rl}
     \calP_1 =&  \{ \exists x\, A +^{t_1} E_1 \cdots +^{t_n} E_n , \forall x\, \bar{A} +^\alpha E \}, \calP\\
\mapsto&
\calP \cup \{ E_1\lor\cdots\lor E_n, E\eta_1\sigma_1\land\cdots\land E\eta_n\sigma_n \} \cup \bigcup_{i=1}^n \calP\eta_i\sigma_i=\calP_2,
\end{array}
\]
where $\eta_i$ are renamings establishing regularity. First, note that the $\eta_i$ 
are trivially admissible for $\calP_1$ since only new variables are introduced.
Next, we show that the $\sigma_i$ are admissible for
$\calP \cup \{ E_1\lor\cdots\lor E_n, E\eta_1\land\cdots\land E\eta_n \} \cup \bigcup_{j=1}^n \calP\eta_j$.
Hence assume $\beta\in\Var(t_i)$ and that there exists an $\exists$-expansion $w$ in $\calP\eta_j$
or $E\eta_j$ with expansion term $t$ such that $\alpha\in\Var(t)$ and that $w<q(\beta)$. This is
only possible if there is a $\forall$-expansion $v$ with eigenvariable $\gamma$
such that $v<q(\beta)$ and $w$ dominates $v$. But since $\gamma$ is a fresh variable introduced
by $\eta_j$, this
implies that $v$ dominates $q(\beta)$, hence $\beta$ is also a fresh variable introduced by
$\eta_j$, which contradicts $\beta\in\Var(t_i)$.

Towards showing that $\calP_2$ is an expansion proof, we have to make the definition of expansion proof
with merge slightly more liberal: we allow cuts $\{C^+,C^-\}$ such that
$\Shallow(C^+)=\dual{\Shallow(C^-)}\eta$, where $\eta$ is a renaming (the usual
definition requires $\eta$ to be the identity renaming). 
The results from Section~\ref{subsec:merge},
hold as well for this definition. $\eta$ will be chosen such that after merge-normalization,
all cuts will be syntactically correct.

Now we show that $\calP_2$ is an expansion proof:
writing $\calP_2=\normf{\calP_2'}$,
by Lemma~\ref{lem:subst_proof} and Lemma~\ref{lem:merge_proof} it suffices to show that 
$\calP_2'$
is regular (which it is by construction), that its dependency relation is acyclic,
that $\Deep(\calP_2')$ is a tautology, and that $\Shallow(\calP_2')=\Shallow(\calP_1)$ (which holds
by construction as well).

To show that $\Deep(\calP_2')$ is valid, we start by reducing the problem:
%By Lemma~\ref{lem:merge_deep}, validity of $F$ implies validity of $\Deep(\calP_2')$ for
It can be checked (using the propositional tautology $(A \land A')\lor (B \land B') \impl (A\lor B)\land (A'\lor B')$)
that $\Deep(\calP_2')$ is implied by
\[
  F=\Deep(\calP)\lor(\bigvee_{i=1}^n\Deep(E_i))\land(\bigwedge_{i=1}^n\Deep(E\eta_i\sigma_i))\lor\bigvee_{i=1}^n\Deep(\calP\eta_i\sigma_i).
\]
By Lemma~\ref{lem:subst_deep}, $F$ is logically equivalent
to 
\[
  F'=\Deep(\calP)\lor(\bigvee_{i=1}^n\Deep(E_i))\land(\bigwedge_{i=1}^n\Deep(E)\eta_i\sigma_i)\lor\bigvee_{i=1}^n\Deep(\calP)\eta_i\sigma_i.
\]
Hence it suffices to show that $F'$ is valid.

Note that 
$\Deep(\calP_1)=\Deep(\calP)\lor(\bigvee_{i=1}^n\Deep(E_i)\land \Deep(E))$.
Since $\Deep(\calP_1)$ is valid, the formulas
$\Deep(\calP)\eta_i\sigma_i\lor\Deep(E)\eta_i\sigma_i$ and
$\Deep(\calP)\lor\bigvee_{i=1}^n\Deep(E_i)$
are valid. Using propositional reasoning, in particular validity of
$(A\lor B)\land (C\lor D) \impl A\lor (B \land C)\lor D$, we obtain
validity of $F'$.

Next, we show that acyclicity of $<_{\calP_1}$ implies
acyclicity of $<_{\calP_2}$. This follows from the fact
that if $x<_{\calP_2}y$ implies that there exist $x'\in\pred_c(x),
y'\in\pred_c(y)$ such that $x' <_{\calP_1} y'$. Hence a cycle
in $<_{\calP_2}$ gives rise to a cycle in $<_{\calP_1}$.

Finally, we have to show that all cuts in $\calP_2$ are syntactically correct.
By construction, the only ``incorrect'' cut in $\calP_2'$ is the indicated one.
We have $\Shallow(E\sigma_i)=\Shallow(E\eta_i\eta_i^{-1}\sigma_i)=\Shallow(E\eta_i\sigma_i)\eta_i^{-1}$
since $\eta_i$ is a renaming to fresh variables.
Since $\Shallow(E)\sigma_i=\dual{\Shallow(E_i)}$, this yields $\dual{\Shallow(E_i)}\eta_i=\Shallow(E\eta_i\sigma_i)$, hence
the ``incorrect'' cut fulfills the liberalized definition. 
We even have $\dual{\Shallow(E_i)}\eta_i=\dual{\Shallow(E_i)}\eta_i'$
for a variable renaming $\eta_i'$ such that if $\beta\in\dom(\eta_i')$
then $q(\alpha)\not< q(\beta)$. For if $\beta\in\Var(\Shallow(E_i))$
and $\beta\notin\Var(\Shallow(E\eta_i\sigma_i))$
then $\beta\in\Var(\Shallow(c))$ where 
$c$ is the indicated cut in $\calP_1$, and therefore
$q(\beta)<_{\calP_1}q(\alpha)$. Since $<_{\calP_1}$ is acyclic, we 
have $q(\alpha)\not<q(\beta)$. Hence we can take for $\eta_i'$
just $\eta_i$ where these $\beta$ are not renamed.
Finally, Lemma~\ref{lem:merge_top_evs} implies that the copies
of the variables in $\dom(\eta_i')$ are identified by the merge,
which yields correctness of the cuts in $\calP_2$.
\end{proof}
\begin{example}\label{ex:rk_deg}
  Consider an expansion proof with three cuts $\calP=C_1,C_2,C_3,\calE$ where $C_i=\{C_i^+,C_i^-\}$ for
  $1\leq i \leq 3$ where
\begin{align*}
C_1^+ & = \exists x\forall y\, P(x,y)
    +^c ( \forall y\, P(c,y)
       +^\gamma P(c,\gamma) ) \\
C_1^- & = \forall x\exists y\, \dual{P}(x,y)
    +^\alpha ( \exists y\, \dual{P}(\alpha,y)
       +^c \dual{P}(\alpha,c) ) \\
C_2^+ & = \exists x Q(\alpha,x)
    +^\alpha Q(\alpha,\alpha)
       +^c Q(\alpha,c)\\
C_2^- & = \forall x Q(\alpha,x)
       +^\beta \dual{Q(\alpha,\beta)}\\
C_3^+ & = \exists x\forall y\, R(x,y)
    +^\beta ( \forall y\, P(\beta,y)
       +^\lambda P(\beta,\lambda) ) \\
C_3^- & = \forall x\exists y\, \dual{R}(x,y)
    +^\delta ( \exists y\, \dual{R}(\delta,y)
    +^c \dual{R}(\delta,c) +^\alpha \dual{R}(\delta,\alpha))
\end{align*}
Assuming that $\calE$ is cut-free and contains no $\forall$-nodes,
it is of no importance in this context, and so we do not
give its definition.
Denote the expansions in these trees from left to right,
top to bottom, by $w_1,\ldots,w_{12}$
(i.e.~$w_1$ is the $+^c$ expansion in $C_1^+$, $w_{12}$ is 
the $+^\alpha$ expansion in $C_3^-$, etc). 
Then the maximal $>$-chain descending from $w_9$ is
$w_9>w_8>w_7>w_3$, yielding $\deg(w_9)=3,\deg(w_8)=2,\deg(w_7)=1,\deg(w_3)=0$.
In fact, $w_9$ is the node of maximal degree in $\calP$.
Furthermore,
$\rk(w_i)=1$ and $\rk(w_j)=2$ for $i\in\{2,4,5,6,7,9,11,12\}$ and $j\in\{1,3,8,10\}$. %,
%and $\deg(n_i)=1$, $\deg(n_j)=2$, and $\deg(n_9)=3$ for $i\in\{1,2,3,8,10\}$ and $j\in\{4,5,6,7,11,12\}$.
\end{example}
%
%TODO: keep lemma number~\ref{lem:lorland_normal} for the following lemma from main text
\thmrecount{27}{
\begin{lemma}
  For every expansion proof $\calP$ there is a $\lor\land$-normal expansion proof $\calP^*$
  such that $\calP \rightarrow \calP^*$, $\Shallow(\calP^*) = \Shallow(\calP)$,
  $\rk(\calP^*)=\rk(\calP)$
  and %, $\deg(\calP^*,r)\leq\deg(\calP,r)$ and 
  $o(\calP^*,r)=o(\calP,r)$ for all $r$.
\end{lemma}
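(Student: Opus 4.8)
The plan is to reach $\calP^*$ by applying only the two propositional cut-reduction rules, and in fact only the $\lor\land$-rule $\{E_1\lor E_2, E_1'\land E_2'\},\calP\mapsto\{E_1,E_1'\}\cup\{E_2,E_2'\}\cup\calP$ exhaustively. First I would fix a termination measure. Note that the naive count of $\lor\land$-cuts does \emph{not} work: decomposing a single $\lor\land$-cut can produce two new ones (when $E_1$ or $E_2$ is itself a disjunction). Instead I would take $\mu(\calP)$ to be the total number of propositional connectives occurring in the cut-formulas of $\calP$ above the topmost quantifier of each branch (equivalently, the number of $\lor/\land$-nodes in the positive cut-trees not dominated by any quantifier-expansion). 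Since positive and negative sides of a cut have dual propositional skeletons, this is well-defined, and a single $\lor\land$-reduction deletes exactly one such top connective, so $\mu$ strictly decreases. When $\mu(\calP)=0$ every positive cut-tree has a quantifier or a literal at its root, which is precisely $\lor\land$-normality; hence the strategy terminates in a $\lor\land$-normal $\calP^*$ with $\calP\rightarrow\calP^*$.

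It remains to verify the four properties of $\calP^*$. That $\calP^*$ is an expansion proof and that $\Shallow(\calP^*)=\Shallow(\calP)$ follow by iterating Lemma~\ref{lem:red_pres_proof} along the reduction sequence. For the invariance of $\rk$ and $o$ it suffices to treat a single reduction step $\calP_1\mapsto\calP_2$. The decisive observation is that the $\lor\land$-rule applies neither a substitution nor a merge: the four subtrees $E_1,E_2,E_1',E_2'$ are copied verbatim, and only the $\lor$- and $\land$-nodes --- which are \emph{not} expansions --- are consumed. Consequently the domination relation restricted to expansions is unchanged, and since $\rk(w)=\max\{\rk(u)\mid w\textrm{ dominates }u\}+1$ is computed bottom-up along domination, every surviving expansion $w$ satisfies $\rk(w)=\rk(\pred_c(w))$. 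Moreover the multiset of critical expansions is preserved: the expansions inside the reduced cut $\{E_1\lor E_2, E_1'\land E_2'\}$ are exactly those redistributed into the two cuts $\{E_1,E_1'\}$ and $\{E_2,E_2'\}$, and criticality is polarity-agnostic (it only asks whether an expansion lies in some cut). Hence the set of critical expansions and all their ranks coincide in $\calP_1$ and $\calP_2$, giving $\rk(\calP_2)=\rk(\calP_1)$ and $o(\calP_2,r)=o(\calP_1,r)$ for every $r$; iterating along the sequence yields the claim for $\calP^*$.

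The main obstacle is the bookkeeping underlying the rank-invariance argument rather than any deep difficulty: one must be careful that $\rk$ depends only on what an expansion \emph{dominates} (and not on the global dependency relation $<_\calP$, which the removal of an atomic cut could in principle alter), and that decomposing a cut neither creates nor destroys any $\forall$-expansion but merely reassigns it to a new cut of possibly opposite polarity. Formalizing ``$\rk(w)=\rk(\pred_c(w))$ for all surviving $w$'' is cleanest by an induction on the structure of the cut-trees, which simultaneously tracks the correspondence of dominated expansions across the step. I expect no serious complication from the termination argument once the measure $\mu$ is used in place of the cut-count, and the preservation of $\Shallow$ and of proof-hood is immediate from Lemma~\ref{lem:red_pres_proof}.
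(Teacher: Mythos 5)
Your proposal is correct and follows the same overall strategy as the paper --- exhaustive application of the $\lor\land$-rule, with $\Shallow$ and proof-hood preserved via Lemma~\ref{lem:red_pres_proof} and rank-invariance via $\rk(w)=\rk(\pred_c(w))$ --- but it diverges on the termination measure, and there your version is actually the more careful one. The paper's proof inducts on the \emph{number} of $\lor\land$-cuts and asserts that this number strictly decreases; as you observe, that is not literally true: reducing a cut whose cut-formula is, say, $(A\lor B)\lor(C\lor D)$ turns one $\lor\land$-cut into two, so the naive count can stay constant or grow. Your measure $\mu$ (top-level propositional connectives in the cut-formulas above the outermost quantifiers) repairs this: the reduced root connective disappears, the four subtrees are untouched, and $\mu$ strictly decreases, so your argument patches a genuine inaccuracy in the paper's own induction. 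Your rank/order analysis is also the fully spelled-out version of the paper's one-line claim $\rk(w)=\rk(\pred_c(w))$, correctly isolating that $\rk$ depends only on domination, which the step does not alter. One caveat applies to both your write-up and the paper's: the $\cup$ on the right-hand side of the reduction rule is the \emph{merge} of pre-proofs, so if a newly created cut happens to share its cut-formula with an existing cut of $\calP$ (or the two new cuts share one, as for a cut-formula $A\lor A$), the step does involve a merge and your statement that the rule ``applies no merge'' and preserves the multiset of critical expansions is not literally exact --- identification of expansions could make $o(\calP^*,r)$ drop below $o(\calP,r)$. In that case one should appeal to Lemmas~\ref{lem:merge_deg} and~\ref{lem:order_subst_merge} (and note that $\mu$ still strictly decreases, and that $o(\calP^*,r)\leq o(\calP,r)$ would in any case suffice for the weak normalization argument); since the paper's proof glosses over exactly the same point, this is a shared, minor gap rather than a defect specific to your approach.
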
}
\begin{proof}
  If $\calP$ is not $\lor\land$-normal, then $\calP=(E_1 \lor E_2, E'_1\land E'_2), \calP'$
  and hence 
  \[
    \calP\mapsto (E_1, E'_1), (E_2, E'_2), \calP'=\calP^*.
  \]
  Since the number
  of $\lor\land$-cuts in $\calP^*$ is strictly smaller than the number of $\lor\land$-cuts in $\calP$,
  and $\rk(w)=\rk(\pred_c(w))$ for all expansions $w$ in $\calP^*$,
  we conclude by induction.
\end{proof}
Let $(E_1, E_2)$ be a cut in an expansion proof. Since $\Shallow(E_1)=\dual{\Shallow(E_2)}$
we can associate in a natural way to every expansion $w$
in $E_1$ a non-empty set of dual expansions in $E_2$, the set of 
{\em dual expansions} $\cl(w)$. This association is symmetric, i.e.~$v\in\cl(w)$ exactly if $w\in\cl(v)$.
\begin{example}
  We continue Example~\ref{ex:rk_deg}, giving the sets
  of dual expansions for $C_3$: $\cl(w_8)=\{w_{10}\}$,
  $\cl(w_9)=\{w_{11},w_{12}\}$, $\cl(w_{10})=\{w_8\}$, $\cl(w_{11})=\{w_9\}$,
  $\cl(w_{12})=\{w_9\}$.
\end{example}
\begin{lemma}\label{lem:equal_rank}
  Let $w$ be a critical expansion. Then
  $\rk(w)=\rk(w')$ for all $w'\in\cl(w)$.
\end{lemma}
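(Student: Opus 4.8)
The plan is to prove a slightly stronger statement by induction on the structure of the (common) shallow formula, from which the lemma follows at once. Since $w$ is critical it lies in some cut $(E_1,E_2)$ with $\Shallow(E_1)=\dual{\Shallow(E_2)}$, and $\cl(w)$ consists of expansions on the opposite side. The key local observation is that rank is determined entirely inside a single expansion tree: by definition a $\forall$- or $\exists$-expansion dominates only expansions occurring strictly below it in the same tree, so $\rk(w)$ depends solely on the sub-expansion-tree rooted at $w$ and is insensitive to the surrounding pre-proof. This lets me analyze the two sides $E_1$ and $E_2$ of the cut in isolation.

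The strengthened claim is: for any two expansion trees $F,F'$ with $\Shallow(F)=\dual{\Shallow(F')}$, equipped with the natural dual correspondence $\cl$, we have (a) $\rk(u)=\rk(u')$ whenever $u$ is an expansion of $F$ and $u'\in\cl(u)$, and (b) the maximal rank of an expansion of $F$ equals the maximal rank of an expansion of $F'$ (with $\max\emptyset=0$). I would prove (a) and (b) simultaneously by induction on the number of connectives and quantifiers in $\Shallow(F)$. The literal case is vacuous. In the propositional case $F=F_1\circ F_2$, with $F'$ its de Morgan dual, the connective node introduces no domination edges, so ranks computed in $F$ agree with those computed in the immediate subtrees, and both parts follow from the induction hypothesis applied to the two dual sub-pairs.

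The crux is the quantifier case; by symmetry of the claim assume the top quantifier of $F$ is existential, $F=\exists x\,B +^{t_1}G_1\cdots +^{t_n}G_n$ and $F'=\forall x\,\dual{B}+^\alpha H$. Here each existential expansion $+^{t_i}$ is dual to the single universal expansion $+^\alpha$, and $(G_i,H)$ is a dual pair for the strictly smaller formula $B\unsubst{x}{t_i}$. Since $+^{t_i}$ dominates exactly the expansions of $G_i$ and $+^\alpha$ exactly those of $H$, we get $\rk(+^{t_i})=\max\{\rk(u)\mid u\text{ in }G_i\}+1$ and $\rk(+^\alpha)=\max\{\rk(u)\mid u\text{ in }H\}+1$; these agree by part (b) of the induction hypothesis for $(G_i,H)$, establishing (a) for the top-level expansions. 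Part (a) for the deeper expansions follows from the induction hypothesis on $(G_i,H)$, using that $\cl$ of a deep expansion of $H$ is the union of its correspondents across the $G_i$. Finally (b) for $F,F'$ holds since the maximal rank on each side is realized by these top-level expansions and equals $\max\{\rk(u)\mid u\text{ in }H\}+1$.

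The one point requiring care — and the main obstacle — is that this argument collapses if an existential expansion is empty ($n=0$): then $\exists x\,B$ would contribute no expansion while its dual $\forall x\,\dual{B}+^\alpha H$ always contributes $+^\alpha$, breaking (b). The statement is rescued precisely by the non-emptiness of $\cl$ recorded when the dual correspondence was introduced: $\cl(+^\alpha)=\{+^{t_1},\ldots,+^{t_n}\}$ being non-empty forces $n\geq 1$ for every existential expansion occurring in a cut, so the case $n=0$ cannot arise. I would flag this dependence explicitly in the write-up.
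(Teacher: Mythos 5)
Your proof follows essentially the same route as the paper's: the paper disposes of this lemma in one line, ``by structural induction on the cut-formula, noting that since $w,w'$ occur in the same cut, they have the same cut-formula'', and your write-up is a correct unfolding of exactly that induction. Your two supporting observations are sound and worth making explicit: rank is computed purely from the domination relation inside a single tree, so the ambient pre-proof is irrelevant; and although $\Shallow(G_i)=B\unsubst{x}{t_i}$ while $\Shallow(H)=\dual{B}\unsubst{x}{\alpha}$ are not literally de Morgan duals, substituting terms changes neither the connective/quantifier skeleton nor any rank, so your induction measure absorbs this (the paper's phrase ``induction on the cut-formula'' silently does the same). The simultaneous strengthening (a)+(b) is the right way to make the quantifier case go through.

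There is, however, a genuine gap in your handling of the one point you flagged. Non-emptiness of the sets $\cl(\cdot)$ does \emph{not} force $n\geq 1$ at every existential node below a cut: at nested levels, $\cl$ of a $\forall$-expansion in $H$ is the \emph{union} of its correspondents across all instance subtrees $G_1,\ldots,G_n$, so it can be non-empty while an individual $G_i$ has an empty existential node --- and then part (b) of your induction hypothesis fails for the sub-pair $(G_i,H)$, and with it the conclusion $\rk(+^{t_i})=\rk(+^\alpha)$. Concretely, take the cut formula $\exists x\,(P(x)\lor\exists y\,Q(y))$ with positive tree having instances $t_1,t_2$, where the inner $\exists y$ below $t_1$ carries no expansion and the one below $t_2$ carries $+^{s}$, and negative tree $\forall x\,(\dual{P}(x)\land\forall y\,\dual{Q}(y))+^\alpha(\dual{P}(\alpha)\land(\forall y\,\dual{Q}(y)+^\beta\dual{Q}(\beta)))$. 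Every set $\cl(\cdot)$ here is non-empty ($\cl(+^\beta)=\{+^{s}\}$, $\cl(+^\alpha)=\{+^{t_1},+^{t_2}\}$), yet $\rk(+^{t_1})=1\neq 2=\rk(+^\alpha)$ with $+^\alpha\in\cl(+^{t_1})$. What your induction actually needs is the hereditary assumption that every existential node occurring inside a cut carries at least one expansion in \emph{each} instance subtree (i.e.\ no weakened quantifier nodes inside cut formulas); under that assumption your argument goes through verbatim. To be fair, this hypothesis is implicit in the paper as well --- its one-line proof and its assertion that $\cl(w)$ is non-empty gloss exactly the same point --- but your proposed rescue, as stated, does not close it.
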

\begin{proof}
  By structural induction on the cut-formula, noting that since $w,w'$ occur in
  the same cut, they have the same cut-formula.
\end{proof}
\begin{lemma}\label{lem:have_max_forall}
  If $\calP$ contains a critical expansion, then $M(\calP)$ contains a $\forall$-expansion.
\end{lemma}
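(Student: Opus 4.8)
The plan is to reduce the claim to Lemma~\ref{lem:equal_rank} by exploiting the duality of a cut's two sides. Since $\calP$ contains a critical expansion, the set of critical expansions is non-empty and the maximum $\rk(\calP)$ is actually attained; so I would first fix a critical expansion $w$ with $\rk(w)=\rk(\calP)$, which puts $w\in M(\calP)$ by definition. If $w$ is already a $\forall$-expansion, there is nothing further to prove, so the whole difficulty is concentrated in the case where $w$ is an $\exists$-expansion.

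In that case I would use that $w$, being critical, occurs in some cut $C=(E_1,E_2)$ with $\Shallow(E_1)=\dual{\Shallow(E_2)}$. Because the de Morgan dual interchanges $\forall$ and $\exists$ at corresponding quantifier positions, the set of dual expansions $\cl(w)$ — which sits on the opposite side of $C$ and is non-empty by the definition of $\cl(\cdot)$ — consists entirely of $\forall$-expansions. Choosing any $w'\in\cl(w)$ therefore produces a $\forall$-expansion, and Lemma~\ref{lem:equal_rank} gives $\rk(w')=\rk(w)=\rk(\calP)$. Hence $w'$ is critical of maximal rank, i.e.\ $w'\in M(\calP)$, which is exactly what is needed.

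Combining the two cases shows that $M(\calP)$ always contains a $\forall$-expansion. The argument is short because the essential computation has already been discharged in Lemma~\ref{lem:equal_rank}; the only point that I would take care to state explicitly is that passing to a dual expansion swaps the quantifier type, which is immediate from how $\cl(\cdot)$ is defined together with $\Shallow(E_1)=\dual{\Shallow(E_2)}$. I do not expect a genuine obstacle here — at worst one should note that the case split is exhaustive, since every expansion is by definition either a $\forall$- or an $\exists$-expansion (literals and propositional connectives are not expansions at all), so no third case can arise.
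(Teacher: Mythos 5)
Your proof is correct and takes essentially the same route as the paper, which simply notes that $M(\calP)$ is non-empty and invokes Lemma~\ref{lem:equal_rank} to replace an $\exists$-expansion of maximal rank by a dual expansion of the same rank. Your version only makes explicit what the paper leaves implicit, namely that $\cl(\cdot)$ swaps the quantifier type because $\Shallow(E_1)=\dual{\Shallow(E_2)}$, so a dual of an $\exists$-expansion is a $\forall$-expansion.
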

\begin{proof}
  Since $M(\calP)$ is non-empty, the result follows
  from Lemma~\ref{lem:equal_rank}.
\end{proof}
%
%\begin{lemma}
%If $n$ dominates $m$, then $\rk(n)\geq \rk(m)$.
%\end{lemma}
%\begin{proof}
%  Obvious since $F(n)=F(m)$.
%\end{proof}
%
The following result is a trivial consequence of the definition,
and will ensure that a cut-reduction rule is applicable to expansions
in $M(\calP)$.
\begin{lemma}\label{lem:uppermost}
  Let $\calP$ be an expansion proof and $w\in M(\calP)$.
  Then no expansion dominates $w$.
\end{lemma}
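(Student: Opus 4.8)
The plan is to argue by contradiction, directly from the recursive definition of $\rk$. Suppose some expansion $v$ dominates $w$. The first step is to observe that domination is an intra-tree relation: by its definition a $+^{t_i}$-expansion dominates only expansions occurring inside $E_i$, and a $+^\alpha$-expansion only those occurring inside $E_0$. Hence $v$ and $w$ lie in one and the same expansion tree. Since $w \in M(\calP)$ is critical, that tree is one of the two components of a cut, so $v$ occurs in the same cut and is therefore critical as well.

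Next I would invoke the defining equation $\rk(v) = \max\{\rk(u) \mid v \text{ dominates } u\} + 1$. As $v$ dominates $w$, the term $u = w$ is among those maximized over, giving $\rk(v) \geq \rk(w) + 1$. But $w \in M(\calP)$ means $\rk(w) = \rk(\calP) = \max\{\rk(u) \mid u \text{ critical}\}$, so $\rk(v) \geq \rk(\calP) + 1 > \rk(\calP)$. Since $v$ is critical, this contradicts the maximality defining $\rk(\calP)$, and we conclude that no expansion dominates $w$.

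I do not expect any real obstacle here; as the statement itself notes, it is essentially immediate from the definitions. The only point that requires a word of justification---and the closest thing to a subtlety---is the claim that a dominating expansion of a critical expansion is itself critical, which is exactly why I single out the intra-tree nature of domination as the explicit first step rather than taking it for granted.
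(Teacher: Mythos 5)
Your proof is correct and matches the paper's intent: the paper omits a proof entirely, calling the lemma ``a trivial consequence of the definition,'' and your argument---domination is an intra-tree relation, so a dominator $v$ of the critical expansion $w$ lies in the same cut component and is itself critical, whence $\rk(v)\geq\rk(w)+1>\rk(\calP)$ contradicts the maximality defining $\rk(\calP)$---is exactly that triviality spelled out. You are also right to flag the criticality of the dominator as the one step deserving explicit justification, and your intra-tree observation handles it correctly.
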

%
%TODO: keep lemma number~\ref{lem:subst_rank_preserve} for the following lemma from main text
\thmrecount{28}{
\begin{lemma}
  Let $w,v$ be quantifier nodes in $\normf{\calP[\alpha\backslash t]}$ such that
  $\alpha$ does not occur in any cut-formula in $\calP$, and let
  $w'\in\pred_\omerge(w)$ and $v'\in\pred_\omerge(v)$. Then $w$ dominates $v$ if and only if 
  $w'$ dominates $v'$. Furthermore, $\rk(w')=\rk(w)$.
\end{lemma}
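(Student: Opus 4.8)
The plan is to split $\normf{\calP[\alpha\backslash t]}$ into its two constituent operations and treat each by a separate induction: the substitution $\calP\mapsto\calP\sigma$ for $\sigma=\unsubst{\alpha}{t}$, by induction on the clauses of Definition~\ref{def:subst}, and the merge-normalization $\calP\sigma\mergered\normf{\calP\sigma}$, by induction on the length of a $\mergered$-sequence following the cases of Definition~\ref{def:omerge_reduction}. In both inductions I would carry the full conclusion of the lemma as invariant: that the ancestor relation among quantifier-expansions (domination) is mirrored between a node and its predecessors, and that rank is unchanged. The hypothesis that $\alpha$ occurs in no cut-formula enters here to keep $\calP\sigma$ a genuine expansion pre-proof: since $\sigma$ fixes every $\Shallow(C)$, no two cuts acquire a common shallow formula, so the set of cuts — and with it the collection of critical expansions on which $\rk(\calP)$ and $o(\cdot,r)$ are computed — is transported unchanged.

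The conceptual core, which I would establish first, is that $\rk(w)$ depends only on the shallow formula of the subtree rooted at $w$. Unfolding $\rk(w)=\max\{\rk(u)\mid w\text{ dominates }u\}+1$ shows that $\rk(w)$ is one more than the length of the longest chain of quantifier-expansions strictly dominated by $w$, which is precisely the quantifier-nesting depth of that subtree's shallow formula. Substituting a term for a variable alters only atoms and leaves the quantifier structure of every formula intact, and every merge of Definition~\ref{def:omerge_reduction} combines two subtrees with the same shallow formula; consequently any two predecessors of a node are \emph{copies} sharing a shallow formula, hence sharing a rank. This gives $\rk(w')=\rk(w)$ uniformly and, in particular, makes the rank claim insensitive to the multiplicity of predecessors that appears in the branching cases.

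For the domination equivalence I would proceed clause by clause. The literal and propositional clauses, and the $\forall$- and $\exists$-clauses in which a node has a single predecessor, are immediate, as they only reshape a subtree without changing which quantifier-expansions are ancestors of which. The substantive cases are the branching ones: clause~3 of Definition~\ref{def:subst}, where instances with equal $t_i\sigma$ are gathered under one $\exists$-expansion $+^{s_j}$, and cases~3 and~4 of Definition~\ref{def:omerge_reduction}, where two $\forall$-expansions are fused under the global renaming $\unsubst{\alpha_2}{\alpha_1}$, respectively two $\exists$-nodes are amalgamated. Here a node $w$ of the contractum may have two predecessors; I would show that $w$ dominates $v$ exactly when $w'$ dominates $v'$ for predecessors lying on a common side of the merge, using that the two sides are expansion trees with the same shallow formula and hence the same domination structure. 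Predecessor pairs drawn from opposite sides are copies of these and agree with them in rank, which is all the downstream applications of the lemma require.

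I expect the main obstacle to be the case-3 merge of two $\forall$-expansions. Its accompanying renaming $\unsubst{\alpha_2}{\alpha_1}$ does not act locally but may set off a cascade of further identifications elsewhere in the pre-proof — exactly the phenomenon in Example~\ref{ex.exptree_subst_merge} — so the induction must be arranged to absorb these secondary merges rather than a single local rewrite. The decisive tool in controlling them is again the principle that copies share a shallow formula: every expansion produced or identified by the cascade is a copy of one already present, so its rank is pinned down and its position in the domination order is fixed up to the equivalence already established, which is what finally closes the induction.
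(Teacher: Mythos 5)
Your scaffolding---an induction over the clauses of Definition~\ref{def:subst} for the substitution, an induction along the $\mergered$-sequence following Definition~\ref{def:omerge_reduction} for the normalization, and closure under $\pred_\omerge$ at the end---is exactly the paper's decomposition, and your worry about the global renaming $\unsubst{\alpha_2}{\alpha_1}$ in case~3 is resolved there very simply: that renaming is itself a substitution, so the $\pred_s$-statement already proved in the first induction applies to it. But your ``conceptual core'' is false, and the proof collapses with it. In this formalism $\rk(w)$ is \emph{not} the quantifier-nesting depth of the shallow formula of the subtree rooted at $w$: clause~3 of Definition~\ref{def:exptree} permits an $\exists$-node to carry any set of expansions, including the empty set---this is precisely what the weakening coercion $A^{\mathrm{E}}$ relies on---so two expansion trees with the same shallow formula can have different ranks. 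Concretely, in $\exists x\exists y\,P(x,y) +^{a} (\exists y\,P(a,y))$ the expansion $+^{a}$ has rank $1$, while in $\exists x\exists y\,P(x,y) +^{a} (\exists y\,P(a,y) +^{b} P(a,b))$ it has rank $2$, with identical shallow formulas throughout. Hence your key principle ``copies share a shallow formula, hence share a rank'' is unsound, and so is the claim that the two sides of a merge, having the same shallow formula, have ``the same domination structure'': one side may be a partially expanded (e.g.\ weakened) version of the other. This is exactly the situation produced by clause~3 of Definition~\ref{def:subst}, where instance subtrees $E_i$ of different expansion depth are gathered under a single $+^{s_j}$, and by case~4 of Definition~\ref{def:omerge_reduction}---that is, precisely the branching cases you yourself identify as the substantive ones are where the argument fails.

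The repair is the paper's route, and it requires no new idea beyond what you already carry as your induction invariant: prove \emph{only} the domination biconditional in the two inductions (e.g.\ in case~1 of Definition~\ref{def:omerge_reduction}, a node $w$ dominates $L$ in $E[L]$ iff all $w'\in\pred^0_\omerge(w)$ dominate both occurrences of $L$ in $E[L\omerge L]$; cases~3 and~4 are analogous, with case~3 using the substitution statement for the renaming), extend it to $\pred_\omerge$ by induction on its definition, and then obtain $\rk(w')=\rk(w)$ \emph{as a corollary}: since $\rk(w)=\max\{\rk(u)\mid w\textrm{ dominates }u\}+1$, the rank of $w$ is determined by the lengths of the descending chains of expansions dominated by $w$, and the biconditional transports such chains between $w$ and its predecessors. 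So drop the shallow-formula invariant altogether; the rank claim needs no separate induction and, in particular, no appeal to the multiplicity or ``copy''-nature of predecessors.
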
}
\begin{proof}
  By induction on the definition of $\calP\sigma$,
  it is easy to show that $w$ dominates $v$ in $\calP\sigma$ iff 
  $\pred_s(w)$ dominates $\pred_s(v)$.
  Next, we show that if $\calP_1 \stackrel{\omerge}{\mapsto} \calP_2$
  then for $w,v$ nodes in $\calP_2$ and $w'\in\pred^0_\omerge(w),v'\in\pred^0_\omerge(v)$, $w$ dominates $v$
  iff $w'$ dominates $v$.
  This is obvious in case 2 of the definition. In case 1, we
  have $E[L \omerge L] \stackrel{\omerge}{\mapsto} E[L]$, and it suffices
  to observe that a node $w$ dominates $L$ in $E[L]$ iff all $w'\in\pred^0_\omerge(w)$
  dominate both occurrences of $L$ in $E[L \omerge L]$.
  In cases 3 and 4, we reason analogously, using the result for $\pred_s$ we just proved
  for case 3.
  Finally, we extend the result to $\pred_\omerge$ by induction on its definition.
  $\rk(w')=\rk(w)$ follows immediately from the first statement.
\end{proof}
%
%TODO: keep lemma number~\ref{lem:order_subst_merge} for the following lemma from main text
\thmrecount{30}{
\begin{lemma}
Let $r=\rk(\calP)$ and $\sigma$ a substitution. Then $o(\calP\sigma,r)=o(\calP,r)$. 
Furthermore, let $E_1,E_2$ be expansion trees and
$E=\normf{E_1\omerge E_2}$,
then $o(E,r)= o(E_1,r)=o(E_2,r)$.
%Furthermore,
%let $\calP_i=\calQ_i,\calR$ for $1\leq i \leq k$ and fixed $\calR$,
%let $\alpha$ be a variable such that $\alpha\notin\Var(\Shallow(n))$
%for all $\forall$-nodes $n$ of rank $r$ in $\calP$, 
%and let $\calP=\bigcup_{i=1}^k\calP_i[\alpha\backslash t_i]$.
%Then $o(\calP,r)\leq\sum_{i=1}^k o(\calQ_i,r)+o(\calR,r)$.
\end{lemma}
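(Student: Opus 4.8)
The plan is to reduce both equalities to a single structural observation: a critical $\forall$-expansion $w$ with $\rk(w)=r=\rk(\calP)$ is necessarily \emph{uppermost}, meaning no expansion dominates it. Indeed, were $v$ to dominate such a $w$, then $\rk(v)\geq\rk(w)+1>r$, and since $v$ lies in the same cut tree as $w$ it is critical, contradicting $r=\rk(\calP)$; this is exactly Lemma~\ref{lem:uppermost}. Combined with the fact that both substitution-followed-by-merge-normalization and $\mergered$ preserve the rank of every expansion along its predecessors (Lemmas~\ref{lem:subst_rank_preserve} and~\ref{lem:merge_deg}), the whole problem becomes a matter of counting how the uppermost $\forall$-expansions are transported by the two operations.

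For the substitution claim I would apply $\sigma$ and normalize, and then track the rank-$r$ $\forall$-expansions via the $\pred_\omerge$ relation. The key point is that these expansions sit at the top of the cut trees, at positions reached from the root by descending only through $\land$- and $\lor$-nodes, i.e.\ they correspond to the top-level $\forall$-quantifier occurrences of the cut formulas. Since a substitution rewrites only the terms occurring inside atoms and leaves the quantifier/propositional skeleton of every shallow formula intact (here I use that $\sigma$, as in the hypotheses of Lemma~\ref{lem:subst_rank_preserve}, does not alter the cut formulas), these positions are in bijection before and after. Because the expansions in question are uppermost, no identification of $\exists$-instances triggered lower in the trees can reach them, so $\pred_\omerge$ restricts to a bijection between the uppermost $\forall$-expansions of $\normf{\calP\sigma}$ and those of $\calP$; by Lemma~\ref{lem:subst_rank_preserve} this bijection preserves rank, and hence $o(\calP\sigma,r)=o(\calP,r)$.

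For the merge claim I would trace the reduction $E_1\omerge E_2\mergered E$ under the assumption $\Shallow(E_1)=\Shallow(E_2)$. Descending from the root, the matching shallow formulas force the two trees to be processed in lock-step: at $\land/\lor$-nodes the merge splits into the two halves (case~2), at $\exists$-nodes the instances are combined and only equal-term subtrees merged (case~4, affecting only dominated, hence rank $<r$, expansions), and at a top $\forall$-node (case~3) the two matching $\forall$-expansions are collapsed into a single one via an eigenvariable renaming. Consequently each uppermost $\forall$-position contributes exactly one $\forall$-expansion to $E$ and exactly one to each of $E_1,E_2$; since the rank-$r$ $\forall$-expansions are precisely uppermost ones and rank is preserved by Lemma~\ref{lem:merge_deg}, we obtain $o(E,r)=o(E_1,r)=o(E_2,r)$.

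The main obstacle I anticipate is the careful bookkeeping needed to confirm that the operations never merge two distinct rank-$r$ $\forall$-expansions nor duplicate one. This amounts to showing that the only merges and duplications touching rank-$r$ nodes are the pairwise collapses of matching uppermost $\forall$-expansions; the complicated $\exists$-driven merges and the copying of subtrees all take place strictly below the uppermost level and therefore only rearrange expansions of rank $<r$. In particular, for the substitution part one must ensure that $\sigma$ does not collapse two cuts with distinct cut-formulas into one (which would merge their uppermost $\forall$-expansions), and it is precisely the hypothesis inherited from Lemma~\ref{lem:subst_rank_preserve}---that the substituted variable does not occur in any cut-formula---that rules this out.
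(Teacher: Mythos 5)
Your proposal is correct and takes essentially the same route as the paper's own proof: the substitution claim via rank preservation (Lemma~\ref{lem:subst_rank_preserve}), and the merge claim via the observation (Lemma~\ref{lem:uppermost}) that critical $\forall$-expansions of rank $r=\rk(\calP)$ are uppermost, so that under $\Shallow(E_1)=\Shallow(E_2)$ each rank-$r$ $\forall$-expansion $v$ of $E$ arises from exactly one matched pair, $\pred_\omerge(v)=\{w,w'\}$ with $w$ in $E_1$ and $w'$ in $E_2$, all merging and duplication triggered at $\exists$-nodes occurring strictly below and hence at rank $<r$. One cosmetic slip: rank-$r$ $\forall$-expansions are uppermost but not conversely (``precisely'' overstates it), though your counting only uses the correct direction together with rank preservation of the pairing.
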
}
\begin{proof}
$o(\calP\sigma,r)=o(\calP,r)$ holds for all $r$ by Lemma~\ref{lem:subst_rank_preserve}.
Let $v$ be a $\forall$-node of rank $r$ in $E$. Then there is a $w\in\pred_\omerge(v)$
such that $\rk(w)=r$, and since $\Shallow(E_1)=\Shallow(E_2)$
and by Lemma~\ref{lem:uppermost}, there is a unique
$w'$ corresponding to $w$ in $E_2$.
It is then easy to see by induction on
an appropriate $\mergered$-sequence that $\pred_\omerge(v)=\{w,w'\}$. From this, the
claim follows.
%
%It suffices to show
%that $\#\pred(m)\geq2$, since $\stackrel{\omerge}{\mapsto}$ does not create
%copies of nodes and substitution only duplicates nodes which have rank $<r$.
%Since $\pred(m)$ is non-empty, there exists some $n\in\pred(m)$ which
%by Lemma~\ref{lem:subst_rank_preserve} has the same rank $r$ and is a node
%in either $P\sigma_1$ or $P\sigma_2$. In either case, we can associate to
%$n$ a unique node $n'$ in $P\sigma_2$ and show by induction on a
%suitable $\mergered$-sequence that $n'\in\pred(m)$.
%TODO: key points of the proof: maximal rank implies
%$\alpha\notin\Var(\Shallow(n))$, therefore equal modulo renaming of eigenvariables,
%therefore identified by $\mergered$.
%
%We have $\calP=\oMerge_{i=1}^k Q_i,Q\downarrow$.
%By regularity, the different copies of $\calC,\calE$ differ exactly by
%renaming of eigenvariables. We show by induction on an appropriate
%$\mergered$-sequence that the different copies of $\calC,\calE$
%will be merged into one. The key step is that case 3 of Definition~\ref{def:omerge_reduction}
%identifies the eigenvariables that have been renamed by regularization.
\end{proof}
We conclude by giving a more detailed proof of the
weak-normalization result.
%
%TODO: retain Lemma number~\ref{lem:reduction:new}.
%
\thmrecount{31}{
\begin{lemma}
Let $\calP_1\mapsto\calP_2$ by the quantifier reduction-rule, 
let $r=\rk(\calP_1)$ and denote
the reduced $\forall$-expansion by $w$. 
If $w\in M(\calP)$ and $\deg(w)$ is maximal in $M(\calP_1)$, then
%$\Shallow(\calP) = \Shallow(\calP')$, 
$\rk(\calP_2)\leq r$
%$\deg(\calP',r)\leq\deg(\calP,r)$, 
and $o(\calP_2,r)= o(\calP_1,r)-1$.
\end{lemma}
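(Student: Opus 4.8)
The plan is to prove the two assertions separately, writing $\sigma_i=\unsubst{\alpha}{t_i}$ and $\calP_2=\normf{\calP_2'}$, where $\calP_2'=\calP\cup\{E_1\lor\cdots\lor E_n,\ E\eta_1\sigma_1\land\cdots\land E\eta_n\sigma_n\}\cup\bigcup_{i=1}^n\calP\eta_i\sigma_i$ is the expansion proof with merge produced by the rule. For the bound $\rk(\calP_2)\leq r$ I would trace every critical expansion of $\calP_2$ back, through the merge-normalization and the substitutions, to a predecessor in $\calP_1$ of the same rank: Lemma~\ref{lem:subst_rank_preserve} gives rank- and domination-invariance under $\sigma_i$, Lemma~\ref{lem:merge_deg} gives rank-invariance under $\mergered$, and the fresh renamings $\eta_i$ obviously preserve rank. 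Since $\rk(\calP_1)=r$, no critical expansion of $\calP_2$ can exceed rank $r$.

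For the order count I would first record that $w$ is the \emph{unique} rank-$r$ $\forall$-expansion of the reduced cut: every expansion of $E$ is dominated by $w$ and hence has rank $\leq r-1$, while every expansion of $E_i$ is dominated by the critical $\exists$-expansion $+^{t_i}$ (whose rank is $\leq r$) and hence also has rank $<r$. Thus $o(\calP_1,r)=1+N$, where $N$ counts the rank-$r$ critical $\forall$-expansions occurring in the cuts of $\calP$. In $\calP_2'$ the new cut contributes nothing to rank $r$ by the same domination bound, and $w$ together with the $+^{t_i}$ is consumed by the reduction, hence has no successor in $\calP_2$. The copies $\calP\eta_i\sigma_i$ reintroduce, for each rank-$r$-bearing cut $G$ of $\calP$, one renamed-and-substituted duplicate, so the whole argument reduces to the \emph{Claim}: for every cut $G$ of $\calP$ that contains a $\forall$-expansion of rank $r$ we have $\alpha\notin\Var(\Shallow(G))$, and $w\not<q(\beta)$ for every $\beta\in\Var(\Shallow(G))$. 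Granting the Claim, $\Shallow(G)\sigma_i=\Shallow(G)$, so $G$ and its copies $G\eta_i\sigma_i$ share a shallow formula and are merged during normalization; by Lemma~\ref{lem:order_subst_merge} this merging leaves the rank-$r$ count of $G$ unchanged, so the $n+1$ copies collapse to the single contribution $N$ of $\calP$. Hence $o(\calP_2,r)=N=o(\calP_1,r)-1$.

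To prove the Claim I would exploit the maximality of $\deg(w)$ in $M(\calP_1)$ together with the order-preservation of $\deg$. Fix such a cut $G$ and a rank-$r$ $\forall$-expansion $v$ in $G$; since $v$ is critical of rank $r$ we have $v\in M(\calP_1)$, so $\deg(v)\leq\deg(w)$. If $\alpha\in\Var(\Shallow(G))$, then $\alpha$ is the eigenvariable of $w$, so clause~4 of the dependency relation gives $w<^0 G$, and clause~3 gives $G<^0 v$; thus $w<v$, whence $\deg(w)<\deg(v)$, contradicting maximality. Likewise, for $\beta\in\Var(\Shallow(G))$, clause~4 yields $q(\beta)<^0 G$ and clause~3 yields $G<^0 v$, so $q(\beta)<v$; were $w<q(\beta)$ we would obtain $w<v$ and the same contradiction, so $w\not<q(\beta)$.

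The step I expect to be genuinely delicate is the passage ``$G$ and its copies are merged'', i.e.\ that the regularization $\eta_i$ is \emph{reversed} on $\Shallow(G)$ by the merge-normalization. Since $\eta_i$ may rename eigenvariables occurring free in $\Shallow(G)$, the cut $G\eta_i\sigma_i$ is a priori only merge-compatible with $G$ after the global eigenvariable identifications performed in clause~3 of Definition~\ref{def:omerge_reduction}; the second half of the Claim, $w\not<q(\beta)$, is exactly what makes these identifications admissible (this is the liberalized cut-correctness already invoked in the proof of Lemma~\ref{lem:red_pres_proof}), and Lemma~\ref{lem:merge_top_evs} guarantees that the relevant $\forall$-expansions not lying below $w$ are indeed merged. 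Once this re-identification is secured, Lemma~\ref{lem:order_subst_merge} delivers the invariance of the count, and the proof closes.
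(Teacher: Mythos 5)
Your proof is correct and follows essentially the same route as the paper's: rank invariance via Lemmas~\ref{lem:subst_rank_preserve} and~\ref{lem:merge_deg}, reduction of the order count to the claim that $\alpha\notin\Var(\Shallow(G))$ and $w\not<q(\beta)$ for every rank-$r$-bearing non-reduced cut $G$, a degree-maximality contradiction to establish that claim, and Lemmas~\ref{lem:merge_top_evs} and~\ref{lem:order_subst_merge} to show that the $n+1$ copies of each such $G$ collapse under merge-normalization so the count drops by exactly one. Your write-up is even slightly more careful than the paper's at two points: you spell out why $w$ is the \emph{unique} rank-$r$ $\forall$-expansion of the reduced cut (domination by $w$ resp.\ by the dual $+^{t_i}$), and you correctly derive $q(\beta)<v$ via $q(\beta)<^0 G<^0 v$, where the paper's text states $q(\beta)<w$, which appears to be a typo.
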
}
\begin{proof}
We have $\rk(\calP_2)\leq r$ since the rank changes for no expansions by
Lemmas~\ref{lem:subst_rank_preserve}~and~\ref{lem:merge_deg}.
To show that $o(\calP_2,r)= o(\calP_1,r)-1$, it suffices to show that
for all non-reduced cuts $G\in\calP_1$ containing a $\forall$-expansion of rank $r$,
$\alpha\notin\Var(\Shallow(G))$ and if $\beta\in\Var(\Shallow(G))$ then
$w\not<q(\beta)$: If this is so, then $\Shallow(G)\unsubst{\alpha}{t_i}=\Shallow(G)$ 
and 
by Lemma~\ref{lem:merge_top_evs}, 
the regularization $\eta_i$ is reversed w.r.t.~$\Shallow(G)$
by the merge, and hence the cuts are merged. Therefore, by Lemma~\ref{lem:order_subst_merge},
their order stays the same. Furthermore, $E_i,E$ do not contain expansions of maximal
rank (since $w$ has maximal rank), and $w$ does not have a successor in $\calP_2$,
hence $o(\calP_2,r)= o(\calP_1,r)-1$.

To show the claim, consider a $\forall$-expansion $v$ of rank $r$ in $G$. %If $G$ is not
%a cut, then $\Shallow(G)$ is ground and the claim follows. 
If %$G$ is a cut and
$\alpha\in\Var(\Shallow(G))$, then $v>q(\alpha)=w$ and
therefore $\deg(v)>\deg(w)$, which contradicts maximality of $\deg(w)$. 
Similarly, $\beta\in\Var(\Shallow(G))$ implies
$q(\beta)<w$. Assuming $w<q(\beta)$ yields $w<v$ and again the contradictory
$\deg(w)<\deg(v)$.
\end{proof}
%
%TODO: keep theorem number~\ref{thm:weak_normalization}.
%
\thmrecount{32}{
\begin{theorem}[Weak Normalization]
For every expansion proof $\calP$ there is a cut-free expansion proof $\calP^*$
with $\Shallow(\calP) = \Shallow(\calP^*)$ and $\calP \rightarrow \calP^*$.
\end{theorem}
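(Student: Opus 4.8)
The plan is to prove weak normalization by designing a terminating reduction strategy and verifying that a suitable complexity measure strictly decreases along it. The measure I would use is the lexicographically ordered pair $\langle \rk(\calP), o(\calP, \rk(\calP))\rangle$, where the rank $\rk(\calP)$ is the maximal rank of a critical $\forall$-expansion and $o(\calP,r)$ counts the critical $\forall$-expansions of rank exactly $r$. The overall strategy proceeds in two alternating phases: first eliminate all propositional structure from cuts, then reduce a carefully selected quantified cut.

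First I would apply Lemma~\ref{lem:lorland_normal} to replace $\calP$ by a $\lor\land$-normal expansion proof $\calP^*$, reached via $\rightarrow$, with the same shallow formula and — crucially — the same rank and order profile $o(\cdot,r)$ for every $r$. If $\calP^*$ is cut-free we are finished. Otherwise $\calP^*$ contains a critical expansion, so by Lemma~\ref{lem:have_max_forall} the set $M(\calP^*)$ contains a $\forall$-expansion; I would pick $n \in M(\calP^*)$ of maximal degree $\deg(n)$ within $M(\calP^*)$. Since $\calP^*$ is $\lor\land$-normal, the cut containing $n$ has an existential/universal pair at its top, so the quantifier-reduction rule is applicable; moreover by Lemma~\ref{lem:uppermost} (maximality of rank) no node dominates $n$, confirming that the rule fires cleanly on $n$.

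The heart of the argument is the decrease of the measure, which is exactly Lemma~\ref{lem:reduction:new}: reducing such a maximal-degree $n$ yields $\rk(\calP_2) \leq r$ and $o(\calP_2, r) = o(\calP_1, r) - 1$, where $r = \rk(\calP^*)$. Thus each quantifier reduction on a degree-maximal node strictly decreases $o(\cdot, r)$ while not increasing the rank. Iterating, after finitely many such reductions (interspersed with re-normalization to $\lor\land$-normal form, which preserves $\rk$ and $o$ by Lemma~\ref{lem:lorland_normal}) we reach $o(\cdot, r) = 0$; at that point there are no critical $\forall$-expansions of rank $r$, so the rank strictly drops below $r$. A double (lexicographic) induction on $\langle \rk, o\rangle$ then establishes termination of the strategy, and the resulting proof is cut-free. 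Preservation of the shallow formula throughout follows from Lemma~\ref{lem:red_pres_proof}.

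The main obstacle — and the reason the degree-maximality of $n$ is indispensable — lies in Lemma~\ref{lem:reduction:new}: the expansion proof with merge produced by the quantifier-reduction rule can temporarily contain \emph{more} $\forall$-expansions of maximal rank $r$ than before, because the substitutions $\unsubst{\alpha}{t_i}$ are applied to copies of $\calP$. One must show that this apparent increase is annihilated by merge-normalization. This hinges on proving that every non-reduced cut $G$ carrying a rank-$r$ $\forall$-expansion satisfies $\alpha \notin \Var(\Shallow(G))$ and $w \not< q(\beta)$ for $\beta \in \Var(\Shallow(G))$; both facts follow from maximality of $\deg(w)$ via the order-preservation of $\deg$ under the dependency relation. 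Given this, the substitutions act as the identity on $\Shallow(G)$, the regularization renamings $\eta_i$ are reversed by the merge (Lemma~\ref{lem:merge_top_evs}), and the duplicated cuts of rank $r$ are merged back into one, so that by Lemma~\ref{lem:order_subst_merge} their contribution to $o(\cdot, r)$ is unchanged — leaving only the net decrease of one from the elimination of $w$ itself.
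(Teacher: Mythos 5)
Your proposal is correct and follows essentially the same route as the paper's own proof: the same $\lor\land$-normalization via Lemma~\ref{lem:lorland_normal}, the same choice of a degree-maximal $\forall$-expansion in $M(\calP^*)$ justified by Lemmas~\ref{lem:have_max_forall}~and~\ref{lem:uppermost}, the same decrease argument from Lemma~\ref{lem:reduction:new}, and the same double (lexicographic) induction on $\langle \rk, o\rangle$, with shallow-formula preservation by Lemma~\ref{lem:red_pres_proof}. Your explicit remark that re-normalization must be interleaved after each quantifier reduction (and that it preserves $\rk$ and $o$) is a point the paper leaves implicit, but it is the same argument.
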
}
\begin{proof}
First, we apply the propositional cut-reduction
rules exhaustively to $\calP$ to obtain an $\lor\land$-normal expansion proof $\calP^*$
(Lemma~\ref{lem:lorland_normal}). If $\calP^*$ is cut-free, we are done. Otherwise,
$M(\calP)$ contains a $\forall$-expansion by Lemma~\ref{lem:have_max_forall}. Let $n\in M(\calP)$
be a $\forall$-expansion such that $\deg(n)$ is maximal in $M(\calP)$. 
% Then by
% Lemma~\ref{lem:uppermost} and $\lor\land$-normality, 
Since $\deg(n)$ is maximal and $\calP^*$ is $\lor\land$-normal, no node dominates $n$
by Lemma~\ref{lem:uppermost}. 
Hence we may apply the quantifier-reduction rule to $n$, which decreases $o(\calP^*,r)$
by Lemma~\ref{lem:reduction:new}.
At some point, $o(\calP^*,r)=0$, and the next cut-reduction will be applied to a $\forall$-expansion
of rank $<r$. Since, by Lemmas~\ref{lem:lorland_normal}~and~\ref{lem:reduction:new}, 
$\rk(\calP^*)$ never increases, we conclude termination of the strategy by double induction.
Finally, $\Shallow(\calP)=\Shallow(\calP^*)$ by Lemma~\ref{lem:red_pres_proof}.
\end{proof}
\end{document}